\documentclass[aps,prxquantum,twocolumn,superscriptaddress,nofootinbib]{revtex4-2}

\usepackage{amsmath,amssymb,amsfonts,amsthm}
\usepackage{graphicx}
\usepackage{booktabs}
\usepackage{multirow}
\usepackage{hyperref}
\usepackage{listings}
\usepackage{xcolor}

\usepackage[normalem]{ulem}

\lstdefinestyle{diff}{
  basicstyle=\ttfamily\scriptsize,
  frame=single,
  xleftmargin=2pt,
  xrightmargin=2pt,
  breaklines=true,
  columns=flexible,
}

\newtheorem{theorem}{Theorem}
\newcommand{\FF}{\ensuremath{\mathbb{F}}}
\newcommand{\ZZ}{\ensuremath{\mathbb{Z}}}
\newcommand{\dblbrk}[1]{\ensuremath{\lbrack\!\lbrack #1 \rbrack\!\rbrack}}
\newcommand{\code}[1]{\dblbrk{#1}}
\newcommand{\FOM}{\ensuremath{\mathrm{FOM}}}

\begin{document}

\title{Evolutionary Discovery of Bivariate Bicycle Codes\\with LLM-Guided Search}

\author{Juan Cruz-Benito}
\affiliation{IBM Research, IBM T. J. Watson Research Center, NY, USA}
\author{Andrew W. Cross}
\affiliation{IBM Research, IBM T. J. Watson Research Center, NY, USA}
\author{David Kremer}
\affiliation{IBM Research, IBM T. J. Watson Research Center, NY, USA}
\author{Ismael Faro}
\affiliation{IBM Research, IBM T. J. Watson Research Center, NY, USA}

\date{\today}

\begin{abstract}
Quantum LDPC code discovery requires searching large algebraic design spaces while reliably certifying the parameters and equivalence classes of any candidates found.
We introduce an LLM-guided evolutionary workflow in which language models mutate Python programs that generate bivariate-bicycle and perturbed bivariate-bicycle code ansätze.
Across five campaigns, the system performed approximately 1{,}650 evolutionary iterations, screened about $2 \times 10^5$ candidate codes, and required ${\sim}140$ hours of computation and ${\sim}$US\$400 in LLM inference cost.
Candidate codes are evaluated through a staged validation pipeline combining $\mathrm{GF}(2)$ rank computation, distance estimation and certification, mixed-integer linear programming, BLISS Tanner-graph deduplication, decomposability analysis, and local-Clifford equivalence checks.
At block length $n \leq 360$, the workflow identifies 465 distinct candidate codes: 97 CSS bivariate-bicycle codes and 368 non-CSS perturbed variants.
The CSS search recovers known high-performing codes and finds new finite-length representatives, including an indecomposable $\code{288,16,12}$ code and higher-weight codes with up to $k = 50$ at distance $d = 8$.
The non-CSS search produces perturbed codes matching the gross-code figure of merit at $\code{144,12,12}$, along with additional high-distance candidates reported as certified values or upper bounds according to MILP status.
Overall, these results show that LLM-guided program evolution can serve as a practical tool for structured quantum-code discovery when paired with independent evaluation.
\end{abstract}

\maketitle

\section{Introduction}
\label{sec:intro}

Quantum low-density parity-check (qLDPC) codes are central to practical fault-tolerant quantum computing~\cite{breuckmann2021quantum}.
Among these, bivariate bicycle (BB) codes~\cite{bravyi2024high} are particularly attractive for near-term implementations: their weight-6 stabilizers and constant-depth syndrome extraction yield excellent rate--distance--threshold tradeoffs at practical block lengths ($n \lesssim 1000$).
The $\code{144,12,12}$ ``gross code'' achieves a pseudo-threshold of ${\sim}0.7\%$ with a $10\times$ reduction in qubit count versus the surface code~\cite{bravyi2024high}, and a baseline modular fault-tolerant architecture has been designed around it~\cite{yoder2025tour}.

Despite this progress, the landscape of BB codes at practical block lengths remains largely unexplored.
All weight-6 codes reported in~\cite{bravyi2024high} have $k \leq 12$; subsequent work reached $k = 16$ at $n = 150$~\cite{wang2024coprime}, while systematic enumerations~\cite{liang2025generalized,lin2024abelian} have catalogued weight-6 codes for specific polynomial forms and small block lengths. 
Whether higher encoding rates are achievable at weight~6---and at what cost to distance---has not been systematically investigated.
The search space is purely combinatorial: every trinomial pair over $\FF_2[x,y]/(x^\ell{-}1, y^m{-}1)$ defines a valid code, with $O(\ell^6 m^6)$ pairs per lattice ---reduced by symmetries such as translation invariance, but still vast ---and no gradient structure amenable to continuous optimization.

We address this with a large language model (LLM) guided evolutionary search that evolves \emph{generator ansätze}---Python programs producing candidate polynomial pairs for arbitrary lattice dimensions, without \emph{a priori} guarantee that all outputs are valid codes---rather than individual codes.
Inspired by FunSearch~\cite{romera2024funsearch} and AlphaEvolve~\cite{novikov2025alphaevolve}, our framework applies LLM-guided program evolution to quantum code discovery; the technical basis is FunSearch via the open-source OpenEvolve~\cite{openevolve} implementation.
The ansatz representation is key: a single mutation can express an algebraic pattern (e.g., ``use $x^{\ell/3}$'') that generalizes across all lattice dimensions, biasing the search toward algebraically regular families. 
Five evolution campaigns---three targeting CSS trinomial codes (weight-3 polynomials yielding weight-6 stabilizers), one exploring mixed-monomial polynomials (terms $x^a y^b$ with $a,b > 0$; Sec.~\ref{sec:families}), and one extending to non-CSS perturbed bivariate bicycle (PBB) codes---discover 465 distinct codes at $n \leq 360$, including CSS codes at $d = 12$--$14$ not present in prior catalogs and non-CSS codes matching the gross code's figure of merit.

MILP distance computation on all verified codes---exact where optimality is certified, otherwise yielding upper bounds---reveals an empirical regularity within the searched catalog: the weight-6 bivariate bicycle family is consistent with a rate--distance tradeoff (higher~$k$ accompanies lower~$d$, and vice versa; Sec.~\ref{sec:tradeoff}), though we do not prove this is a structural constraint of the construction.
The highest-$k$ weight-6 codes ($k = 40$--$54$) are equivalent to the hypergraph product (HGP) of two low-distance cyclic codes~\cite{eberhardt2024pruning,aydin2025cxc}---a known construction the evolution converged on independently---and universally have $d \leq 4$.
The $x/y$-swap codes (polynomials mixing both cyclic variables, e.g., $A = x^a{+}y^b{+}y^c$; Sec.~\ref{sec:families}) that achieve $d = 12$ are limited to $k \leq 24$, and the $k = 24$ endpoint is achieved only by a decomposable code (one whose Tanner graph disconnects into independent sub-codes---here, a direct sum of two gross codes; Sec.~\ref{sec:families}); higher-weight (weight-8) mixed-monomial codes access new $(k,d)$ combinations (e.g., $k = 50$ at $d = 8$) but do not escape this envelope.
All BB codes with $A = B$ have $d = 2$ exactly (Appendix~\ref{app:ab_trap})---a structural ``distance trap'' that BP-OSD fails to detect even at $1.5 \times 10^6$ trials, and that the pipeline's MILP verification identifies in under one second.

This CSS tradeoff raises a natural question: can non-CSS stabilizers access a more favorable rate--distance regime?
Standard BB codes are CSS by construction---their checks can be separated into X and Z type---which constrains the available stabilizer structures.
We introduce \emph{perturbed bivariate bicycle} (PBB) codes, which augment the X-type stabilizer block with Z-type support via two additional polynomials $(C, D)$, creating mixed stabilizers that couple X and Z (Sec.~\ref{sec:pbb}).
This non-CSS ansatz reproduces parameter regimes accessible to CSS codes through a structurally distinct stabilizer pattern: a $\code{144,12,12}$ PBB code matches the gross code's $\FOM$ with a non-CSS stabilizer structure.
However, the PBB ansatz produces non-CSS codes that match but do not exceed the gross code's $\FOM$ at $n = 144$. Whether non-CSS constructions over the same polynomial ring can escape the CSS rate--distance envelope remains open.

A second contribution is methodological.
The MILP analysis that grounds the tradeoff also reveals that BP-OSD systematically overestimates distance for high-rate codes ($k/n > 0.1$), with overestimates reaching $12\times$; 147 of 154 trinomial code representations had bounds tightened under a multi-decoder protocol.
For non-CSS codes, per-channel BP-OSD distance bounding~\cite{bravyi2024high} fails because the achievable logical cosets form a strict subspace of the full logical space---most randomly sampled cosets have no solution in any pure error channel---so the decoder returns trivial results; we introduce achievable-syndrome sampling (restricting to the achievable subspace per channel) to restore functionality.
These findings indicate that distance claims for high-$k$ BB codes require exact verification.

Our main contributions are:
\begin{enumerate}
\item An LLM-guided evolutionary pipeline with multi-stage verification---combining MAP-Elites program evolution, a $k$-then-distance evaluation cascade, trust filtering (Sec.~\ref{sec:trust}), MILP distance computation, BLISS~\cite{junttila2007engineering} Tanner-graph deduplication, and post-campaign checks (decomposability, Clifford equivalence)---that provides a reusable framework for quantum code discovery.

\item As an illustration of this framework, five campaigns discover 465 distinct codes---97 CSS (weight-6 and weight-8) and 368 non-CSS PBB---with encoding dimensions up to $k = 54$ (prior weight-6 maximum: $k = 16$).
The most practically relevant codes include the CSS $\code{288,16,12}$ ($d = 12$ exact, all shifts~$\leq 3$), the non-CSS $\code{360,12,{\leq}24}$ (highest $\FOM \leq 19.2$, with LER ${<}p$ at all tested rates), and the $\code{144,12,12}$ PBB (matching $\FOM = 12.0$ via mixed X/Z stabilizers).
Tanner graph analysis reveals that the apparent highest-$k$ CSS code at $d=12$, $\code{288,24,12}$, is a direct sum of two gross codes---demonstrating the pipeline's ability to detect and exclude such codes. 

\item Structural characterization of the BB code landscape via MILP distance computation on all codes (exact where certified, upper bounds otherwise), revealing a systematic rate--distance tradeoff (indecomposable weight-6 $d = 12$ codes limited to $k \leq 16$; weight-6 $k > 24$ implies $d \leq 4$; weight-8 codes access new $(k,d)$ points but do not escape the envelope), a proof that all $A = B$ codes have $d = 2$ regardless of check weight (Appendix~\ref{app:ab_trap}), and identification of four algebraic families (univariate/HGP, $x/y$-swap, mixed-monomial/higher-weight, non-CSS PBB) with distinct rate--distance profiles.

\item Systematic quantification of BP-OSD overestimation (up to $12\times$) via MILP ground truth, a multi-decoder verification protocol, and achievable-syndrome sampling for non-CSS codes---establishing a verification standard for high-rate BB code claims.
\end{enumerate}

\section{Related Work}
\label{sec:related}

Our work connects four research threads: BB codes within the qLDPC landscape, distance computation methodology, computational code discovery, and LLM-guided program synthesis.

\paragraph{BB codes and the qLDPC landscape.}
Quantum LDPC codes offer constant-rate encodings with low-weight checks as an alternative to the surface code, the most widely studied qLDPC code family~\cite{breuckmann2021quantum}. 
Asymptotic breakthroughs---fiber bundle codes~\cite{hastings2021fiber}, balanced product codes~\cite{breuckmann2021balanced}, and three independent proofs of asymptotically good quantum LDPC codes~\cite{panteleev2022asymptotically,leverrier2022quantum,dinur2023good}---established the theoretical frontier.
BB codes~\cite{bravyi2024high}, as abelian lifted product codes over $\ZZ_\ell \times \ZZ_m$~\cite{panteleev2022quantum}, are asymptotically bad ($d = O(\sqrt{n})$~\cite{postema2025existence}) but offer favorable rate--distance--threshold tradeoffs compared to other known qLDPC families at block lengths $n \lesssim 1000$, including hypergraph product codes.

Extensions include trivariate bicycle codes~\cite{voss2025multivariate}, coprime-index constructions reaching $k = 16$~\cite{wang2024coprime,rateadjustable2025}, covering-graph families at weight~8~\cite{symons2025covering}, and generalized toric codes on twisted tori~\cite{liang2025generalized}. 
On the logical gate front, fold-transversal gates~\cite{eberhardt2025logical}, qLDPC surgery~\cite{cross2024improved}, gauging~\cite{williamson2024gauging}, and homological measurement~\cite{ide2024homological} make high-$k$ codes practically valuable, since each additional logical qubit reduces overhead.
Hardware demonstrations include a BB code on a 32-qubit processor~\cite{wang2025demonstration}. 
Beyond CSS, Khesin and Lu~\cite{khesin2026mirror} introduced mirror codes---a non-CSS construction over abelian groups with notable weight-6 parameters $\code{60,4,10}$ and $\code{85,8,9}$. 

Despite this activity, the high-$k$ regime remains relatively unexplored at $n = 144$--$360$: the highest-$k$ published weight-6 codes have $k \leq 16$~\cite{bravyi2024high,wang2024coprime}, and the most comprehensive search~\cite{liang2025generalized} is restricted to a canonical 3-term polynomial form.

\paragraph{Distance computation.}
Computing the minimum distance of a stabilizer code is in general computationally hard~\cite{kapshikar2022hardness}.
BP-OSD~\cite{roffe2020decoding,panteleev2021degenerate}, used for distance bounding in~\cite{bravyi2024high}, is the most common heuristic, with the \texttt{ldpc} library~\cite{ldpc_library} exposing multiple BP variants and OSD methods; however, it provides only stochastic upper bounds.
Exact methods include mixed integer programming (MIP), Brouwer--Zimmermann enumeration, and SAT solvers; for a comprehensive comparison of heuristic and exact distance-finding algorithms, see Webster et al.~\cite{webster2026distance}.
The concurrent work of Webster et al.\ confirms that heuristic methods can significantly overestimate distance, observing that BP-OSD (min-sum, OSD-CS order~10) fails to find the correct distance for codes as small as $\code{48,5,10}$ without enhanced sampling strategies.
Our work complements theirs by quantifying BP-OSD overestimation specifically for bivariate bicycle codes across a 465-code catalog with MILP ground truth, revealing overestimates up to $12\times$ for high-rate codes.
While~\cite{bravyi2024high} complemented BP-OSD with MIP for select codes, many subsequent works~\cite{wang2024coprime,chengyu2026bayesian} report distance estimates using a single decoder configuration with modest trial counts---a gap our systematic multi-decoder verification protocol and comprehensive MILP analysis address.

\paragraph{Computational code discovery.}
Reinforcement learning has been applied to surface code optimization~\cite{nautrup2019optimizing}, Quantum Lego codes up to ${\sim}20$ qubits~\cite{su2025discovery}, code-and-encoder co-discovery~\cite{olle2024simultaneous}, weight reduction~\cite{he2025discovering}, and hypergraph product optimization~\cite{freire2025optimizing}.
Classical approaches include structured constructions such as univariate bicycle codes~\cite{rabeti2025list} and coprime-index codes~\cite{wang2024coprime}, as well as decoder-in-the-loop genetic optimization~\cite{elkelesh2019decoder}.
Chengyu et al.~\cite{chengyu2026bayesian} propose Bayesian optimization for BB codes, reporting a $\code{144,36,?}$ code without verified distance.
These approaches typically optimize individual codes at fixed parameters $(n,k)$.
Our approach differs: the LLM evolves a \emph{program} (ansatz) that generates candidates across multiple block lengths simultaneously, enabling the discovery of algebraic patterns that generalize. 

\paragraph{LLM-guided program synthesis.}

FunSearch~\cite{romera2024funsearch} introduced LLMs as mutation operators within evolutionary search, discovering cap set constructions and bin packing heuristics.
AlphaEvolve~\cite{novikov2025alphaevolve} extended the FunSearch approach from single functions to multi-file codebases, allowing LLM-guided mutations across interacting modules.
Open-source implementations include OpenEvolve~\cite{openevolve} (MAP-Elites, used here), CodeEvolve~\cite{codeevolve2026}, and ShinkaEvolve~\cite{lange2025shinkaevolve}.
In coding theory, Weindel and Heckel~\cite{weindel2025llmguided} apply FunSearch to deletion-correcting codes (classical), matching the maximum sizes of single-deletion codes and improving the best-known lower bounds for two-deletion codes at $n \in \{12, 13, 16\}$.
For quantum codes, the contemporaneous Cain et al.~\cite{cain2026shor} report new lifted-product codes obtained via an ``LLM-assisted heuristic computer search'', with the search itself deferred to an in-preparation companion (\textit{Atlas of atomic fault-tolerant quantum memory}, Caltech \& Oratomic) that had not appeared at the time of writing.
To our knowledge, ours may be the first full account in the published literature of LLM-guided program evolution applied to quantum code discovery.
We adapt the FunSearch paradigm with domain-specific evaluation and verification stages described in Secs.~\ref{sec:method}--\ref{sec:verification}.

\section{Preliminaries}
\label{sec:prelim}

\subsection{Bivariate bicycle codes}
\label{sec:bb}

A Calderbank--Shor--Steane (CSS) code~\cite{calderbank1996good,steane1996multiple} has parity checks that can be separated into X-type and Z-type. 
A bivariate bicycle (BB) code~\cite{bravyi2024high} is a CSS code defined over the quotient ring $R = \FF_2[x,y]/(x^\ell{-}1, y^m{-}1)$ by two weight-3 polynomials (trinomials) $A, B \in R$.
The parity check matrices are
\begin{equation}
H_X = \begin{pmatrix} A & B \end{pmatrix}, \quad
H_Z = \begin{pmatrix} B^\top & A^\top \end{pmatrix},
\end{equation}
where $A$ and $B$ denote $\ell m \times \ell m$ circulant matrices and $A^\top$ is the image under the involution $x \mapsto x^{-1}, y \mapsto y^{-1}$ (equivalently, the matrix transpose of the cyclic-shift representation).
The trinomial restriction yields weight-6 stabilizers, matching the original definition of~\cite{bravyi2024high}; some authors extend to higher-weight polynomials~\cite{panteleev2021degenerate,liang2025selfdual}, but the landscape of code parameters depends on check weight, so we separate claims accordingly.
The CSS condition $H_X H_Z^\top = AB + BA = 0$ over $\FF_2$ follows from the commutativity of~$R$.
The code parameters are $\code{n,k,d}$ with $n = 2\ell m$ physical qubits, $k = 2\ell m - 2\,\mathrm{rank}_{\FF_2}(H_X)$, and $d$ the minimum weight of a nontrivial logical operator.
Campaigns~1--3 search over trinomials, following~\cite{bravyi2024high}; Campaign~4 relaxes this restriction to 4--6-term polynomials (weight-8 to weight-12 stabilizers), including mixed monomials $x^a y^b$ ($a,b > 0$), to probe whether higher check weight unlocks better parameters.

\subsection{Perturbed bivariate bicycle codes}
\label{sec:pbb}

We introduce the perturbed bivariate bicycle (PBB) construction as a non-CSS ansatz that augments BB polynomial pairs with perturbation polynomials $(C, D)$, creating mixed stabilizers---without \emph{a priori} guarantee that the resulting code has good parameters.
A PBB code is defined by four polynomials $A, B, C, D \in R$ with stabilizer matrix
\begin{equation}
H = \begin{pmatrix} A & B & C & D \\ 0 & 0 & B^\top & A^\top \end{pmatrix},
\end{equation}
where the first block row has both X-type ($A,B$) and Z-type ($C,D$) support, creating mixed stabilizers, and the second is purely Z-type.
In the standard symplectic representation $H = (\mathbf{X} \mid \mathbf{Z})$, the first $\ell m$ columns encode X-support and the second $\ell m$ columns Z-support: the first block row maps to stabilizers with X-support from $(A,B)$ and Z-support from $(C,D)$.
All rows commute if and only if $(AC^\top + BD^\top) \bmod 2$ is symmetric over $\FF_2$ (this is the only nontrivial commutativity condition; commutativity between block~1 and block~2 is automatic from BB ring commutativity)---a constraint that, unlike the CSS case where commutativity of $R$ suffices, must be verified computationally for each tuple.
Empirically, ${\sim}10\%$ of random weight-3 4-tuples at lattice $(6,6)$ satisfy this constraint.

PBB codes reduce to CSS BB codes when $C = D = 0$; for $C, D \neq 0$, each stabilizer generator may have nontrivial support on both X and Z Pauli types.

A key validity question is whether PBB codes with $C, D \neq 0$ are genuinely non-CSS or merely CSS codes in disguise---reducible via row operations over $\FF_2$ or conjugation by single-qubit Clifford gates.
We apply three verification levels.
First, a row-operation check: if the Z-parts of blocks~1 and~2 coincide ($C = B^\top$ and $D = A^\top$), then row reduction eliminates the $(C, D)$ perturbation entirely, yielding a CSS code.
No top-$\FOM$ PBB code satisfies this condition.
Second, a Hadamard equivalence check derived independently in App.~\ref{app:lc} (a parity 2-coloring algorithm; inspired by~\cite{khesin2026mirror}'s mirror-code result but formulated and proven directly for general stabilizer codes): the supplied stabilizer generators are simultaneously brought to pure-X or pure-Z form by some single-qubit-Hadamard pattern $H_J$ if and only if no generator has Y support and the parity 2-coloring constraint graph is bipartite. We apply this at the level of the canonical block-circulant generators; the more permissive group-level question (a code's stabilizer group can decompose into pure-X and pure-Z subgroups even when the supplied generators do not) is addressed separately by the rank-condition test (Lemma~7.4 of~\cite{cross2025small}) in App.~\ref{app:lc}.
Of the 368 PBB codes, 158 have at least one stabilizer with Y support; of the remaining 210 (no Y support), 10 admit a bipartite constraint graph and are therefore Hadamard-equivalent to CSS---an explicit per-qubit $H$ pattern of weight $n/2$ renders every stabilizer pure-X or pure-Z, verified by direct construction.
Third, we computationally verify (Appendix~\ref{app:lc}) which further codes can be rendered CSS by per-qubit single-qubit Clifford conjugation, by exhaustively testing all 36 uniform per-block assignments and solving the exact $\mathrm{GF}(2)$ affine system for non-uniform $\{I,S\}$ and $\{H,HS\}$ patterns---two classes the Hadamard 2-coloring check does not cover.
Exactly one additional code passes: the $\code{36,4,6}$ ($\FOM = 4.0$) is LC-equivalent to a CSS code via uniform $S$ on both blocks.
Combining the three checks, 11 of the 368 PBB codes are CSS-equivalent under tested single-qubit Clifford patterns (10 via non-uniform Hadamard, 1 via uniform $S$); the remaining 357 are CSS-inequivalent under all tested LC patterns (Appendix~\ref{app:lc} details the residual coverage gaps).
The 10 Hadamard-CSS codes are all low-rate ($k \in \{4, 8\}$) and low-$\FOM$ (all $\leq 2.8$); none appears in Table~\ref{tab:pbb_best} or among the codes highlighted in Sec.~\ref{sec:results}.

The $A = B$ distance trap extends to the non-CSS setting (Appendix~\ref{app:ab_trap}): if $A = B$ in a PBB code, $\ell m$ weight-2 Z-type operators $(0|\mathbf{e}_i + \mathbf{e}_{i+\ell m})$ commute with all stabilizers---since the X-part $[A|A]$ has identical column blocks---placing them in the normalizer; MILP confirms $d = 2$ for all $A = B$ PBB codes in the catalog.

\subsection{Figure of merit}
\label{sec:fom}

We adopt the $\FOM = kd^2/n$, motivated by the Bravyi--Poulin--Terhal (BPT) bound $kd^2 = O(n)$~\cite{bravyi2010tradeoffs} and standard in the BB literature~\cite{bravyi2024high,liang2025generalized}.
The BPT bound applies to geometrically 2D-local stabilizer codes. Therefore, for qLDPC codes that are 2D-local, the FOM is bounded above by a constant as $n$ increases, but for general qLDPC codes this is not the case. For example, the rotated surface code has FOM = 1 whereas the gross code has FOM = 12. We use the FOM as a benchmark for comparing BB codes against each other.
The highest-$\FOM$ prior CSS BB codes achieve $\FOM = 12.0$ ($\code{144,12,12}$) and $\FOM \leq 19.2$ ($\code{360,12,{\leq}24}$, weight-6 CSS)~\cite{bravyi2024high}; the latter shares its $(n,k,d_{\rm upper})$ parameters with one of our PBB codes (Sec.~\ref{sec:results}, Table~\ref{tab:useful}), but the two are structurally distinct (Bravyi: weight-6 CSS; ours: weight-8 PBB).

\subsection{The $A = B$ distance trap}
\label{sec:ab_trap}

Certain pathological ansätze---notably $A = B$, which always yields $d = 2$ (Appendix~\ref{app:ab_trap})---serve as ``traps'' for the search.
The evolution initially converged on $A = B$ codes, which can achieve arbitrarily high~$k$ but provide no meaningful error correction.
Three pipeline mechanisms detect and avoid these traps: (i)~the $d/\sqrt{n}$ trust filter (Sec.~\ref{sec:trust}), (ii)~MILP verification revealing $d = 2$ in under one second, and (iii)~the LLM learning to diversify away from symmetric forms after receiving low-fitness feedback.
Notably, BP-OSD failed to detect $d = 2$ for $A = B$ codes even at $1.5 \times 10^6$ trials---reporting $d \leq 14$ for a $\code{144,32,2}$ code despite $\ell m = 72$ weight-2 X-type operators in the normalizer, of which $k/2 = 16$ are linearly independent nontrivial logicals (Appendix~\ref{app:ab_trap})---motivating our MILP verification pipeline.

\section{Method}
\label{sec:method}

\subsection{Evolutionary framework}
\label{sec:framework}

We employ OpenEvolve~\cite{openevolve}, an open-source framework for LLM-guided program synthesis using MAP-Elites~\cite{mouret2015illuminating}.
The evolutionary target is a \emph{generator ansatz} $G(\ell, m) \to \{(A_i, B_i)\}$ (or 4-tuples $(A_i, B_i, C_i, D_i)$ for non-CSS codes)---a program producing candidate polynomial tuples for any lattice dimensions, without guarantee that all outputs yield valid codes.
This allows a single evolved ansatz to discover codes across multiple block lengths simultaneously.

At each iteration, the LLM receives the current highest-fitness ansatz, domain knowledge about BB code algebra, and evaluation feedback, then proposes a targeted code diff---a modification to the generator ansatz rather than a full rewrite. 
Typical mutations include exponent adjustments, addition of new strategy branches, and control-flow restructuring.
The population is distributed across 4--6 islands with migration every 12--25 iterations (depending on campaign); the MAP-Elites archive indexes ansätze along two behavioral dimensions chosen to encourage both breadth (generalization across lattices) and depth (producing many high-quality codes): (i)~the number of lattices yielding codes with $k \geq 8$ and (ii)~the total count of such codes.
This prevents the search from collapsing to a single lattice-specific solution.
Campaign~4 uses different dimensions---polynomial term count and monomial structure (mixed-monomial, diagonal-mixed, or separated-variable; see Sec.~\ref{sec:families})---to steer the search toward structurally diverse ansätze.
The complete diff history is included in the public repository.

\begin{figure}[t]
\centering
\includegraphics[width=\columnwidth]{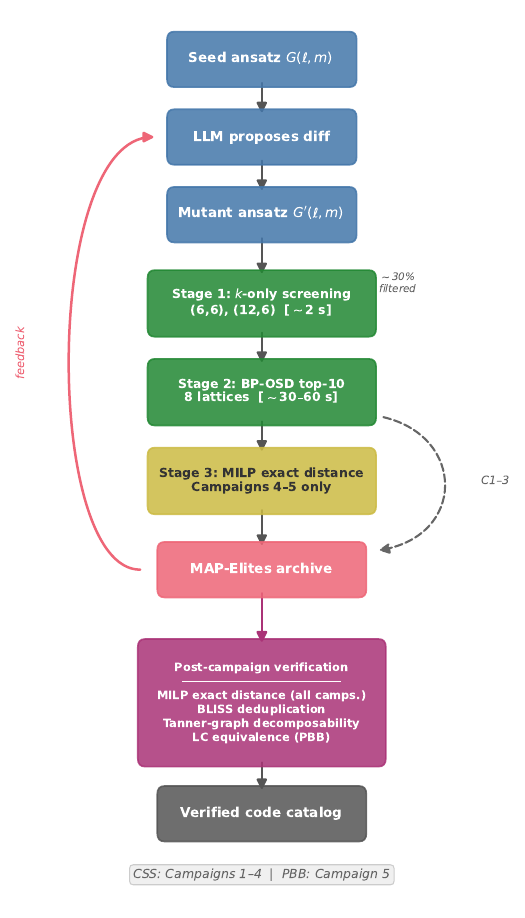}
\caption{Evolution pipeline.  An LLM proposes diffs to a seed generator ansatz
$G(\ell,m)$; each mutant is screened by a multi-stage cascade (Stage~1: $k$-only,
Stage~2: BP-OSD distance) and inserted into a MAP-Elites archive
that feeds back context to the LLM\@.  Campaigns~4--5 add in-loop MILP exact
distance (Stage~3, added in response to BP-OSD overestimation discovered in Campaigns~1--3; see Sec.~\ref{sec:bposd_findings}); Campaigns~1--3 use post-hoc MILP (dashed bypass).
Post-campaign verification (MILP distance computation $\to$ BLISS deduplication $\to$ Tanner-graph decomposability $\to$ LC equivalence for PBB) ensures all reported codes are genuine.}
\label{fig:pipeline}
\end{figure}

\subsection{Evaluation cascade}
\label{sec:cascade}

Each candidate ansatz passes through a multi-stage cascade.
\emph{Stage~1} (${\sim}2$\,s) screens the ansatz on two small lattices---$(6,6)$ and $(12,6)$---computing only~$k$ via $\FF_2$ rank; ansätze with no $k > 0$ code at both lattices are discarded (${\sim}30\%$ of mutants).
Both screening lattices have $3 \mid \ell$, which may bias the search toward divisibility-by-3 structures; the failure of all evolved ansätze at $(16,9)$ and $(18,8)$ (Sec.~\ref{sec:ablation}) may partly reflect this. 
\emph{Stage~2} evaluates surviving ansätze on 8 lattices spanning $n \in \{144, 288, 360\}$: $(12,6)$, $(6,12)$, $(12,12)$, $(24,6)$, $(15,12)$, $(30,6)$, $(16,9)$, $(18,8)$. 
For each lattice, all candidates are screened for~$k$, then the top~10 (by $k$ diversity) undergo BP-OSD with 1{,}000 OSD$_0$ trials (${\sim}30$--$60$\,s); promising candidates ($\FOM \geq 6$) receive three additional batches of 500 trials.
Campaigns~1--3 used BP-OSD for in-loop distance estimation (Stage~2); however, as we show in Sec.~\ref{sec:bposd_findings}, BP-OSD systematically overestimates distance for high-rate codes, rendering the fitness signal unreliable. This motivated the addition of MILP distance verification (\emph{Stage~3}) for Campaigns~4--5, ensuring that the evolutionary fitness reflects true code quality.
Campaign~5 uses 7 lattices ($n = 36$--$360$, all with $m \leq 6$) and the adaptive distance pipeline (Sec.~\ref{sec:noncss_pipeline}).

The cascade is designed to balance exploration speed against verification cost.
Stage~1 ($k$-only, ${\sim}2$\,s) eliminates ansätze that fail to produce any valid code, keeping the cost per rejected mutant low.
Stage~2 (BP-OSD, ${\sim}30$--$60$\,s) provides a distance estimate sufficient to rank candidates within the evolutionary loop, where exact verification would be prohibitively expensive (${\sim}10$\,min/code).
MILP exact verification is applied post-hoc (Campaigns~1--3) or in-loop for top candidates only (Campaigns~4--5), resolving the distance ambiguity that Sec.~\ref{sec:bposd_findings} shows BP-OSD cannot.
This staged approach allowed us to evaluate ${\sim}2 \times 10^5$ candidates while MILP-verifying only the ${\sim}400$ surviving codes.

\subsection{Campaign configurations}
\label{sec:campaigns}

Table~\ref{tab:campaigns} summarizes five campaigns totaling ${\sim}1{,}650$ iterations over ${\sim}140$ hours (${\sim}$US\$400 in LLM inference).
Campaigns~1--3 evolved trinomial CSS codes from the same seed---${\sim}300$ lines of Python encoding an $x/y$-swap pattern, perturbations of known codes, self-similar scaling, and small-exponent enumeration---differing in model configuration and population size.
Campaign~1 used Gemini~3 Flash Preview alone; Campaigns~2--3 used an ensemble of Claude~Opus~4.6 (Anthropic), GPT-5.2 (OpenAI), and Gemini~3 Pro Preview (Google) with equal selection weight.
Campaign~3's $10\times$ larger population (1{,}000 ansätze) enabled the discovery of 145 unique codes not found in Campaign~1, while Campaign~2 (population 100, 251 iterations) added none beyond Campaign~3's.
This is a negative result for the small-population ensemble configuration: at population~100 the 3-model ensemble produced no codes that the single-model Campaign~1 (same population, fewer iterations) had not already reached, and that Campaign~3 (same ensemble, $10\times$ population) did not subsume; the cost was still ${\sim}$US\$25.
We do not interpret this as evidence that ensemble mutation is unhelpful in general, only that at population~100 it does not pay off here.
Campaign~3 showed clear saturation: the last ${\sim}275$ of 500 iterations produced only one marginal improvement ($+1.2\%$). Fitness trajectories for Campaigns~1--3 and cross-campaign analysis are in the Supplemental Material.

Campaign~4 introduced MILP-in-the-loop verification after post-hoc analysis of Campaigns~1--3 revealed that BP-OSD distance estimates were inflated by up to $12\times$ (Sec.~\ref{sec:bposd_findings}), and extended the search to 4--6-term polynomials with mixed monomials ($x^a y^b$ terms), diagonal-mixed patterns ($x^a y^a$), and augmentations of known trinomials. 
It used an updated ensemble (Claude~Opus~4.6, GPT-5.3-Codex, Gemini~3.1 Pro Preview) with temperature $1.0$ (vs.\ $0.8$ for Campaigns~1--3 and~5) and the structural-diversity MAP-Elites features described in Sec.~\ref{sec:framework}.
We note that Campaign~4 simultaneously changed the MAP-Elites behavioral dimensions \emph{and} the polynomial-family scope (trinomial $\to$ 4--6-term mixed-monomial); no controlled ablation isolates which change is responsible for the 45 new distinct codes Campaign~4 contributed.
Campaigns~1--3 used temperature $0.8$, balancing syntactically valid code generation against mutation diversity; Campaign~4 increased to $1.0$ to encourage more radical mutations when exploring the larger space of 4--6-term polynomials.
These values were chosen based on a qualitative pilot observation (5~iterations each at temperatures $0.5, 0.8, 1.0, 1.2$, far too few to discriminate statistically): temperatures below $0.8$ produced mostly trivial edits, while $1.2$ frequently generated syntactically invalid programs.

Campaign~5 evolved non-CSS PBB 4-tuples $(A,B,C,D)$ on 7 lattices with $m \leq 6$, constrained by MILP runtime: the symplectic formulation is ${\sim}4\times$ slower per logical than the CSS formulation, making the $(12,12)$ and $(15,12)$ lattices (which produced the highest-$\FOM$ CSS codes) impractical. 
The seed generated 4-tuples using known CSS base pairs with random perturbation polynomials $C, D$.

All campaigns used \texttt{qldpc}~v1.0.1~\cite{qldpc} for code construction, \texttt{ldpc}~v2.2.0~\cite{ldpc_library} for BP-OSD, HiGHS~\cite{huangfu2018highs} via \texttt{scipy}~\cite{virtanen2020scipy} for MILP, and \texttt{python-igraph}~\cite{csardi2006igraph} with the BLISS algorithm~\cite{junttila2007engineering} for Tanner-graph canonical labeling.
Campaigns~1--3 ran on a single Apple M4 Max workstation with 36 GB of RAM; Campaigns~4--5 used a 64-core server with 251 GB RAM, Linux (RHEL 9).
LLMs were accessed via cloud APIs through a LiteLLM~\cite{litellm} proxy.

\begin{table}[t]
\centering
\caption{Evolution campaigns.  ``Ens.'' denotes a 3-model ensemble with equal selection weight; see text for model details.  ``Codes'' = BLISS-unique codes from that campaign not in prior catalogs; counts may overlap across campaigns, so the per-row sum exceeds the 97 distinct CSS codes (Sec.~\ref{sec:families}) after cross-campaign deduplication.  ``Cost'' = approximate per-campaign LLM inference spend (US dollars). The campaign rows sum to ${\sim}$US\$237; the ${\sim}$US\$400 total reported elsewhere additionally covers the ablation arms of Sec.~\ref{sec:ablation} (random search at $10^3$ and $10^4$/lattice plus the GA/GA-G arms over 5~seeds each, ${\sim}$US\$70 combined---random search uses no LLM but the comparable GA-G calls in cross-validation runs do) and exploratory/failed runs (re-tries on flaky API responses, abandoned config sweeps, ${\sim}$US\$90), neither of which is tabulated here.}
\label{tab:campaigns}
\begin{tabular}{lccccccc}
\toprule
Camp.\ & Type & Models & Iters & Pop.\ & Codes & Time & Cost \\
\midrule
1 & CSS & Flash & 100 & 100 & 9 & 5\,h & \$15 \\
2 & CSS & Ens.\ & 251 & 100 & 0$^*$ & 9.5\,h & \$25 \\
3 & CSS & Ens.\ & 500 & 1{,}000 & 145 & 21\,h & \$50 \\
4 & CSS & Ens.\ & 300 & 750 & 45 & 92\,h & \$47 \\
5 & PBB & Ens.\ & 500 & 200 & 368 & 11\,h & \$100 \\
\bottomrule
\multicolumn{8}{@{}p{\columnwidth}@{}}{\footnotesize $^*$Retroactively found, in cross-campaign deduplication, to overlap with Campaign~3's output.}
\end{tabular}
\end{table}

\section{Verification}
\label{sec:verification}

\subsection{BP-OSD and its limitations}
\label{sec:bposd_limits}

BP-OSD distance estimation~\cite{panteleev2021degenerate,roffe2020decoding,bravyi2024high} provides stochastic upper bounds on~$d$: for each trial, a random nontrivial logical coset representative is selected and the decoder seeks a low-weight operator in that coset (details in the Supplemental Material).
The minimum weight across $T$ trials is an upper bound ($d_{\text{true}} \leq d_{\text{reported}}$).
These bounds can be loose, particularly for high-rate codes ($k/n > 0.1$), where the logical operator space is large and low-weight representatives are sparse.

Bravyi et al.~\cite{bravyi2024high} used a single OSD$_0$ configuration; we extend this to a multi-decoder protocol with three configurations---OSD$_0$ with sum-product BP, OSD-CS$_{10}$ (combination-sweep OSD, order~10) with sum-product BP, and OSD-CS$_{10}$ with minimum-sum BP---each run for 10 batches of 5{,}000 trials, totaling 150{,}000 trials per code.
The motivation is that different configurations find different low-weight logicals.
The verified bound is the global minimum across all 30 batches: on the $\code{288,24,12}$ (MILP $d = 12$), OSD$_0$ never finds an operator below weight~24, while OSD-CS$_{10}$/minimum-sum finds weight~12.
All configurations use \texttt{ldpc} library~\cite{ldpc_library} defaults: $\mathrm{max\_iter} = n$ BP iterations, no damping, depolarizing error-rate placeholder $10^{-3}$.
For a systematic comparison of BP-OSD settings and alternative syndrome-decoder-based distance estimators, including the effect of DEM permutations and random stabilizer additions, see Webster et al.~\cite{webster2026distance}.

\subsection{MILP distance computation}
\label{sec:milp}

MILP is one of several exact methods for quantum code distance, alongside Brouwer--Zimmermann enumeration and SAT solvers; Webster et al.~\cite{webster2026distance} provide a comprehensive comparison.
Computing the minimum distance of a CSS code is NP-hard in general~\cite{kapshikar2022hardness}; we use MILP because it provides both upper bounds (incumbents) and optimality certificates (gap~$= 0$) with tunable timeout, and the low-weight check matrices of BB codes ($w = 6$--$12$) produce constraint matrices amenable to branch-and-bound solvers at $n \leq 360$.
Following~\cite{bravyi2024high}, we run mixed-integer linear programming on all 97 distinct CSS codes and all 368 non-CSS codes; distances are reported as exact only when all relevant logical MILP instances reach proven optimality, and otherwise as valid upper bounds.
The CSS formulation encodes the minimum-weight logical operator problem as a binary integer program: for each logical qubit, minimize $\sum_j x_j$ subject to mod-2 commutation and anticommutation constraints with slack variables.
Since a CSS code with $k$ logical qubits has $k$ independent X-type and $k$ independent Z-type logicals, it suffices to solve $2k$ MILP instances (one per logical generator); the code distance is the minimum weight across all $2k$ solutions.
We use HiGHS~\cite{huangfu2018highs} via \texttt{scipy}~\cite{virtanen2020scipy} with 120--600\,s per-logical timeouts (3{,}000\,s for Campaign~4 top codes, up to 14{,}400\,s for Campaign~5 deep verification).

The formulation was validated on the $\code{72,12,6}$ and $\code{144,12,12}$, both confirmed exactly (MIP gap~$= 0$ for every logical).
A dedicated optimality audit confirmed $d = 12$ exactly for the $\code{288,24,12}$ (48 logicals, 29\,min), the $\code{288,16,12}$ (32 logicals, 80\,min), and the gross code (24 logicals, 13\,min).
For the $\code{360,16,{\leq}14}$, the solver found weight-14 solutions for all 32 logicals and proved optimality for~7; since no logical returned a weight below~14, $d \leq 14$ is established.
For codes with $d = 2$ (including all $A = B$ codes), weight-2 solutions are found in under one second, establishing $d = 2$ exactly in combination with the lower bound from column weights.

The complete MILP formulations for both CSS (Hamming weight) and non-CSS (symplectic weight with the standard linear encoding of the per-qubit binary OR support) distance problems, along with validation details, are given in the Supplemental Material.
HiGHS reports either a proven-optimal solution (MIP gap~$= 0$, establishing $d$ exactly) or an incumbent solution with a lower bound (providing a valid upper bound $d \leq d_{\text{incumbent}}$ and a lower bound $d \geq d_{\text{LB}}$).
We report codes as ``exact'' only when all $2k$ logicals achieve MIP gap~$= 0$.

\subsection{Adaptive pipeline for non-CSS codes}
\label{sec:noncss_pipeline}

Computing distance for non-CSS codes is harder than for CSS codes because the minimum-weight logical operator has \emph{symplectic} weight (support on X, Z, or both), and the BP-OSD distance-bounding technique---which relies on decoding random syndromes---fails for non-CSS codes (Tier~3 below) because random binary vectors typically do not lie in $\mathrm{im}(H)$.
We developed a 3-tier adaptive pipeline:

\emph{Tier~1: Exhaustive weight-$w$ enumeration} ($d \leq 6$ at $n \leq 216$; $d \leq 4$ at $n > 216$ due to ${\sim}89$\,GB memory at $n = 360$).
For small codes, we enumerate all Pauli operators up to weight~$w$ by constructing a lookup table (syndrome column dictionaries and XOR lookups) mapping syndromes to operators, and checking for nontrivial logicals.
If a weight-$w$ logical is found, $d = w$ exactly.

\emph{Tier~2: MILP symplectic formulation.}
For larger codes, we solve an MILP minimizing symplectic weight (full formulation in the Supplemental Material): for each of $2k$ logical operators, minimize the combined X and Z support subject to commutation/anticommutation constraints.
Adaptive timeouts: 15\,s per logical at $n \leq 108$, 30\,s at $n \leq 216$, 60\,s at $n > 216$.
Partial results are valid upper bounds.

\emph{Tier~3: BP-OSD with achievable-syndrome sampling.}
The BP-OSD distance-bounding technique of~\cite{bravyi2024high} works by forming an effective matrix $H_{\mathrm{eff}} = \bigl(\begin{smallmatrix} H_{\mathrm{check}} \\ L \end{smallmatrix}\bigr)$ (check rows stacked above logical rows), setting the stabilizer syndrome to zero, choosing a random nontrivial logical coset~$\lambda$, decoding the syndrome $(0, \lambda)$, and recording the decoded weight.
For CSS codes, X and Z errors are decoded separately: the Z-distance, for instance, uses $H_X$ as check matrix and $L_X$ as logicals.
Although $H_X$ has $k/2$ redundant rows, this is harmless---the stabilizer syndrome is fixed at zero and the $k/2$ Z-logical operators span the full logical space within $\ker(H_X)$, so every nontrivial logical coset is achievable by a pure-Z error.
For non-CSS codes, the channels are no longer cleanly separated: when the technique is applied per channel (e.g., pure-X or pure-Z errors), the achievable logical cosets form a strict subspace of the full logical space, because some logicals inherently require mixed X/Z (Y-type) support.
We characterize ``achievable'' empirically per channel and code: for a given channel (the support pattern of allowed errors, e.g.\ X-only) and a given code, the achievable subspace is the image of the channel error space under the $H_{\mathrm{eff}}$ map, computed as a GF(2) null-space projection on the stabilizer matrix restricted to the channel.
Empirically, on the $\code{144,12,12}$ PBB code at $k = 12$, the Z-channel's achievable subspace has dimension $k$ within the $2k$-dimensional logical space, so $\sim$half of randomly sampled cosets have no preimage; the same holds for the Y-channel (we have not proved this is general).
Our fix is to compute the achievable subspace per channel and sample only from it, which restored BP-OSD from $\sim 0\%$ to $\sim 100\%$ decode success on every PBB code we tested.

The final reported distance is the upper bound $d \leq \min(d_{\text{enum}}, d_{\text{MILP}}, d_{\text{BP-OSD}})$.
Each code is assigned \emph{exact} (all $2k$ logicals solved to optimality) or \emph{trusted} (valid upper bound from $\geq 2$ independent methods).
To tighten distance bounds, 149 PBB catalog entries underwent deep MILP (timeouts up to 14{,}400\,s per logical, up to a 60-worker pool), producing 63 exact outcomes and identifying 33 downward distance corrections (22\% correction rate), with the largest being $d = 24 \to 16$ at $n = 360$.

\subsection{Trust boundaries}
\label{sec:trust}

During CSS Campaigns~1--4, a trust filter on $d/\sqrt{n}$ prevents low-distance high-$k$ codes from dominating fitness: $d/\sqrt{n} \leq 1.3$ is fully trusted, $\geq 2.0$ is discarded, with linear interpolation between.
This heuristic is imperfect---the $\code{360,40,2}$ passed with $d_{\text{BP}}/\sqrt{n} = 1.26$ despite true $d/\sqrt{n} = 0.11$---exposing a fundamental limitation: the filter operates on BP-OSD estimates that are themselves unreliable for exactly the codes it should detect.
Campaign~5 bypasses this filter entirely, using verified distances from the adaptive pipeline with codes having $d \leq 4$ rejected.

\section{Results}
\label{sec:results}

\subsection{Structural families}
\label{sec:families}

The five campaigns produced 225 unique CSS polynomial representations.
To determine the number of genuinely distinct codes, we performed permutation-equivalence analysis via BLISS canonical labeling~\cite{junttila2007engineering} of the colored Tanner graph.
We construct a colored bipartite graph for each code: qubit vertices (color~0), X-check vertices (color~1), and Z-check vertices (color~2), with edges from each check to its supporting qubits.
The BLISS algorithm computes a canonical form for this colored graph; while graph isomorphism is quasi-polynomial in the worst case~\cite{babai2016graph}, BLISS is efficient in practice for the structured Tanner graphs arising here. Two codes are declared \emph{permutation-equivalent} (related by a qubit relabeling that preserves X- and Z-stabilizer roles, with X-checks and Z-checks permitted to permute among themselves; for non-CSS codes the analogous decoupling of stabilizer X-supports and Z-supports under the 3-coloring described below) if and only if their colored-Tanner-graph canonical forms are identical.
This captures all automorphisms of the Tanner graph, including lattice symmetries invisible to polynomial-level comparisons (e.g., re-indexing $x \mapsto x^a$ when $\gcd(a, \ell) = 1$).
The 225 representations collapse to 97 distinct codes (2.3:1 redundancy, up to 14:1 for univariate codes); Campaign~4 also found additional representations of two Bravyi et al.\ codes, for 99 total equivalence classes.
An independent deduplication of Campaign~5's 720 tuple-distinct PBB codes---using the same colored Tanner graph approach, generalized to non-CSS stabilizers by representing each stabilizer with two separately-colored check vertices (one carrying its X-support edges, one carrying its Z-support edges, so the graph has three vertex colors total: qubits, stabilizer-X-supports, stabilizer-Z-supports), with a tying edge between each stabilizer's X-support and Z-support vertices to forbid independent permutations of the two check-vertex classes---yields 368 BLISS-unique non-CSS codes (1.96:1 ratio, increasing with lattice size). The complete CSS catalog (225 representations, 97 distinct codes) and non-CSS PBB catalog (368 codes) appear in the Supplemental Material; all data are also available at~\cite{cruzbenito2026qcode}.

Throughout this paper, codes are called \emph{distinct} if their colored Tanner graphs (defined above; vertex coloring distinguishes qubits from check-vertices, with the CSS variant separating X-checks from Z-checks and the non-CSS variant using per-stabilizer X- and Z-support vertices) have non-isomorphic BLISS canonical forms---a relation that is sound and complete for permutation equivalence under the respective coloring.
We do not attempt to decide broader notions of code equivalence: the classical code-equivalence problem has no known polynomial-time algorithm~\cite{petrank1997code} and lies in roughly the same complexity class as graph isomorphism (currently bounded by quasi-polynomial time~\cite{babai2016graph}); it is closely related to the security analysis of McEliece-style cryptosystems on hidden-structure code families~\cite{mceliece1978public,sendrier2000finding}.
The analogous stabilizer-code problem under broader transformations (local Clifford, full Clifford) likewise has no known efficient algorithm in general.
Related tooling uses the same graph-isomorphism primitive for adjacent questions: Sage's \texttt{sage.coding.codecan}~\cite{feulner2009canonical} computes canonical forms of classical linear codes via partition refinement, and \texttt{autqec}~\cite{sayginel2024fault} maps a stabilizer code to a derived binary linear code and runs BLISS to compute its \emph{automorphism group} (whence logical Clifford gates)---a complementary application of the same machinery to a different question (self-equivalence of a single code, rather than equivalence between two codes).
No canonical open-source solver for full stabilizer-code equivalence currently exists.
We therefore regard our 465-code ``distinct'' count as a \emph{conservative upper bound}---under any broader notion of code equivalence (coarser than colored-Tanner-graph isomorphism, e.g.\ local Clifford or full Clifford), the count cannot exceed 465 and may be smaller.
For non-CSS PBB codes we separately analyze the local-Clifford-equivalence layer in Sec.~\ref{sec:pbb} and App.~\ref{app:lc}, which collapses 11 of the 368 codes onto known CSS counterparts and leaves 357 CSS-inequivalent within the tested LC families.

Deduplication operates at two levels.
\emph{Within the evolution loop:} polynomial-level deduplication rejects candidates whose $(A, B)$ pair (or $(A, B, C, D)$ tuple for PBB) has already been evaluated, using a hash set of polynomial coefficient tuples for $O(1)$ lookup.
This is fast but does not catch codes that are equivalent under lattice symmetries (e.g., $x \mapsto x^a$).
\emph{Post-campaign:} BLISS canonical labeling of the colored Tanner graph (described above) identifies equivalences that polynomial-level deduplication cannot detect.

Four structural families were identified by post-hoc analysis of the 225 evolved polynomial representations.
Each has a distinct rate--distance profile.

\paragraph{Univariate codes (HGP of low-distance cyclic codes).}
The dominant motif has $A = f(y)$ and $B = g(x)$ (separated variables), equivalent to the hypergraph product of two cyclic codes~\cite{eberhardt2024pruning,aydin2025cxc}.
Of the 154 Campaigns~1--3 representations, 87 (57\%) are univariate, collapsing to just 12 distinct codes---the most redundant family.
These achieve the highest encoding dimensions: $k = 40$ at $n = 360$ (e.g., $A = 1{+}y{+}y^2$, $B = 1{+}x^5{+}x^{10}$), following the formula $k = 8\ell/3$.
MILP reveals a universal distance collapse: every univariate code in the catalog has $d \in \{2, 4\}$.
By the Tillich--Z\'emor formula $d = \min(d_1, d_2, d_1^\top, d_2^\top)$~\cite{tillich2014quantum}, the BB code distance is bounded by the smallest minimum distance among the four classical cyclic codes $\ker H_A$, $\ker H_B$, $\ker H_A^\top$, $\ker H_B^\top$ (where $d^\top$ denotes the distance of $\ker H^\top$; for palindromic check polynomials as here, $H^\top = H$ so $d^\top = d$ and the four-way min collapses to $\min(d_A, d_B)$).
For the $A = 1{+}y{+}y^2$, $B = 1{+}x^c{+}x^{2c}$ subfamily with $c = \ell/3$, the quotient $(x^\ell{-}1)/B = 1{+}x^c$ has weight~2, so $\ker H_B$ has distance~2 and $d_{\text{BB}} = 2$; for the remaining univariate codes in the catalog (where neither quotient has weight~2), the smallest of the four classical distances is~4, verified case-by-case via MILP.
Moreover, any weight-3 (odd-weight) check polynomial $f$ with $f(1) = 1$ produces a cyclic code in which all codewords have even Hamming weight (since the generator $g = (y^m{-}1)/\gcd(f,\,y^m{-}1)$ satisfies $g(1) = 0$), so all four component distances are even and the catalog skips $d = 3$ entirely---a regularity MILP confirms exhaustively across all lattice sizes.
Replacing the trinomial $1{+}y{+}y^2$ with the weight-2 binomial $1{+}y$ recovers the toric code with linearly growing distance, but such polynomials lie outside the trinomial search space of Campaigns~1--3.

The univariate subspace is small enough for exhaustive enumeration once recognized: at $(15,12)$ there are $\binom{11}{2} \times \binom{14}{2} = 5{,}005$ pairs, versus ${\sim}9 \times 10^{11}$ unrestricted trinomial pairs.
That the LLM independently converged on this well-characterized subspace---without any HGP-related prompt context---illustrates how the search naturally gravitates toward the highest-rate region.
The LLM's search trajectory provides structural insight: the dominant evolutionary path led to univariate/HGP codes, suggesting these are, in some sense, the ``easiest'' high-$k$ solutions; the $x/y$-swap codes that achieve $d \geq 12$ required different starting ansätze.
The complete univariate family table with all polynomial pairs, lattice embeddings, and BP-OSD vs.\ MILP distance comparisons appears in the Supplemental Material.

For the subfamily $A = 1{+}y{+}y^2$, $B = 1{+}x^c{+}x^{2c}$ with $c = \ell/3$, the BB code is a hypergraph product code $\mathrm{HGP}(H_B, H_A^\top)$ encoding $k = 8\ell/3$ logical qubits (independent of~$m$) for any $3 \mid \ell$, $3 \mid m$ (Theorem~\ref{lem:crt_k} in Appendix~\ref{app:crt}).
The result follows from the HGP dimension formula $k = k_1 k_2 + k_1^\top k_2^\top$~\cite{tillich2014quantum} and the polynomial divisibility $A(y) \mid y^m{-}1$, $B(x) \mid x^\ell{-}1$.
We have verified computationally that the formula holds for all 1{,}680 parameter combinations with $3 \mid \ell$, $3 \mid m$, $\ell m \leq 250$.

\paragraph{$x/y$-swap codes.}
Codes with $A = x^a {+} y^b {+} y^c$ and $B = y^f {+} x^g {+} x^h$ (mixed $x$ and $y$ in each polynomial) are the only trinomial family achieving $d \geq 6$.
The $\code{288,24,12}$ at lattice $(12,12)$---confirmed at $d = 12$ by MILP with proven optimality for all 48 logicals---achieves $\FOM = 12.0$.
However, its polynomials $A = x^6{+}y{+}y^2$, $B = y^3{+}x^2{+}x^4$ use only even $x$-exponents, and Tanner graph connectivity analysis reveals that the code \emph{decomposes}: the full stabilizer graph ($H_X$ and $H_Z$ combined) has two disconnected components, one on even $x$-indices and one on odd.
The disconnection of the Tanner graph suffices to establish a direct sum: if a logical operator is supported on both components, its restriction to a single component also commutes with all stabilizers (since stabilizers on the other component act trivially on this component's qubits), so the restriction is itself a logical operator and the code decomposes.

Under the re-indexing $x \to x/2$ (mapping the even-$x$ sublattice of $\ZZ_{12}$ to $\ZZ_6$), each component sits at $(\ell',m') = (6,12)$, related to the gross code's $(\ell,m) = (12,6)$ by the $x \leftrightarrow y$ swap (which preserves $k$ and $d$); the resulting check matrices are identical to those of the gross code up to that swap.
Thus $\code{288,24,12} \cong \code{144,12,12} \oplus \code{144,12,12}$: the code encodes 24 qubits but offers no error-correction advantage over two independent gross codes.
The per-qubit error rates $p_L$ confirm this, matching the gross code at every tested rate (Supplemental Table~XVI).
Algebraically, the polynomials use only even powers of $x$ in $\ZZ_{24}$, so all stabilizers act within two independent cosets (even and odd $x$-indices), each isomorphic to $\ZZ_{12}$.
This code also appears at $(24,6)$ with the same polynomials; BLISS confirms these are the \emph{same} direct sum (isomorphic Tanner graphs).
The decomposition demonstrates a key pipeline capability: the Tanner-graph connectivity analysis in our verification pipeline detected that this code offers no error-correction advantage over two independent gross codes---a degeneracy invisible to BP-OSD, which cannot detect decomposability.

The $\code{288,16,12}$ ($A = x^3{+}y{+}y^2$, $B = y^3{+}x{+}x^2$, all polynomial shifts $\leq 3$) has the shortest coupling range among MILP-verified $d = 12$ codes (maximum shift~3 out of $\ell = 12$, i.e., 25\% of the cyclic dimension) and, with the decomposition of the $\code{288,24,12}$, is the highest-$k$ indecomposable CSS code at $d = 12$.
At $n = 360$, the $\code{360,16,{\leq}14}$ achieves $\FOM \leq 8.7$ and is one of only two codes where BP-OSD matched the MILP bound.
Only three distinct $x/y$-swap codes reach $d \geq 12$, all with $k \leq 24$, but the $k = 24$ endpoint is decomposable; the indecomposable bound is $k \leq 16$.

\paragraph{Mixed-monomial and higher-weight codes.} 
Campaign~4 relaxed the trinomial restriction, discovering 45 additional distinct codes involving $x^a y^b$ terms ($a,b > 0$) not present in prior work.
These include both weight-6 codes (trinomials with mixed-monomial terms, e.g., the $\code{144,8,12}$) and higher-weight codes with 4--6-term polynomials (weight-8 to weight-12 stabilizers).
Post-hoc analysis of the verified outputs identifies three subfamilies: \emph{diagonal-mixed} ($x^a y^a$ terms, weight-6), \emph{mixed-monomial augmentation} (adding a mixed term to a known trinomial, weight-8), and \emph{multi-term pure} (4-term polynomials with separated variables, weight-8). 
The highest-$\FOM$ higher-weight codes are the $\code{288,50,8}$ (weight-8, MILP exact $d = 8$, $\FOM = 11.1$) and the $\code{360,20,{\leq}14}$ (weight-8, $\FOM \leq 10.9$, encoding 4 more qubits than the highest-$\FOM$ weight-6 code at $n = 360$). 
Campaign~4 also independently rediscovered the gross code through six mixed-monomial polynomial representations---BLISS-equivalent to the trinomial form---showing that the LLM can find structurally diverse paths to known good codes.

The highest-$k$ Campaign~4 codes ($k = 50$--$54$) share a factored-product structure: $A$ factors as a product of two binomials, e.g., $A = (1{+}x^3)(1{+}y^3) = 1{+}x^3{+}y^3{+}x^3 y^3$ for the $\code{144,54,4}$.
This generalizes the univariate structure to mixed monomials while inheriting its distance ceiling ($d \leq 4$).
A notable exception is the $\code{288,50,8}$, whose \emph{cross-factored} structure---$A = (1{+}x)(1{+}y^5)$, $B = (1{+}y)(1{+}x^5)$, with variable roles swapped between $A$ and $B$---achieves $d = 8$, the highest distance among factored-product codes.
Across the full catalog, same-variable factored codes universally have $d \leq 4$, while cross-factoring provides distance protection analogous to the $x/y$-swap mechanism in trinomials.

\paragraph{Non-CSS PBB codes.}
Motivated by the CSS rate--distance tradeoff, Campaign~5 tested whether the non-CSS PBB ansatz could access superior parameters by coupling X and Z stabilizers via perturbation polynomials $(C, D)$. 
The campaign evaluated 18{,}588 candidate 4-tuples across 7 lattices, yielding 368 distinct non-CSS codes spanning 78 $(n,k,d)$ parameter sets: 251 exact rows and 117 upper-bound rows (110 TRUSTED and 7 PARTIAL with $d/\sqrt{n} \geq 1.5$).
Table~\ref{tab:pbb_best} summarizes selected representative codes per lattice; full per-lattice rankings are in the Supplemental Material.

\begin{table*}[t]
\centering
\caption{Representative non-CSS PBB codes per lattice (Campaign~5), one per lattice except $(30,6)$ where both the top-FOM trusted code and an additional simulated high-distance code are shown; the full catalog and per-lattice rankings are in the Supplemental Material.
$d_{\text{MILP}}$: MILP distance (\textbf{bold} = proven exact).
``Wt'': max stabilizer weight (type-1 mixed; type-2 pure-Z have weight~6).
Within the tested LC families, eleven of the 368 PBB codes are CSS-equivalent under single-qubit Cliffords (Sec.~\ref{sec:pbb}, App.~\ref{app:lc}): 10 via non-uniform per-qubit Hadamard and 1 via uniform $S$ on both blocks (the $\code{36,4,6}$ at $(6,3)$ with $C = y{+}x^3 y$, $D = xy{+}x^4 y$, which is neither of the two $\code{36,4,6}$ representatives shown here).
Complete catalog (368 codes) in the Supplemental Material.}
\label{tab:pbb_best}
\scriptsize
\begin{tabular}{lccccccccc}
\toprule
Code & $(\ell,m)$ & $A(x,y)$ & $B(x,y)$ & $C(x,y)$ & $D(x,y)$ & $d_{\text{MILP}}$ & $\FOM$ & Wt & Codes \\
\midrule
$\code{360,12,{\leq}24}$ & $(30,6)$ & $xy^2{+}x^4 y^3{+}x^4 y^4$ & $1{+}xy^5{+}x^5 y^4$ & $x{+}x^4$ & $xy^2{+}x^4 y^2$ & $\leq 24$ & $\leq 19.2$ & 10 & 50 \\
$\code{360,12,{\leq}20}$ & $(30,6)$ & $x^{13}y^4{+}x^{22}y^2{+}x^{22}y^3$ & $1{+}x^5 y^2{+}x^{13}y$ & $x^{13}y^3{+}x^{22}y^3$ & $x^{13}y{+}x^{22}y$ & $\leq 20$ & $\leq 13.3$ & 8 & --- \\
$\code{144,12,12}$ & $(12,6)$ & $y{+}y^2{+}x^3$ & $y^3{+}x{+}x^2$ & $y{+}x^3 y$ & $y^3{+}x^3 y^3$ & \textbf{12} & 12.0 & 8 & 66 \\
$\code{180,6,{\leq}20}$ & $(15,6)$ & $x^8 y{+}x^{12}y^4{+}xy^4$ & $1{+}x^5 y{+}xy^5$ & $x^8 y{+}x^{12}y^4$ & $x^8 y^2{+}x^{12}y^5$ & $\leq 20$ & $\leq 13.3$ & 8 & 64 \\
$\code{72,12,6}$ & $(6,6)$ & $xy^2{+}x^4 y^3{+}x^4 y^4$ & $1{+}xy^5{+}x^5 y^4$ & $xy^3{+}x^4 y^3$ & $xy^5{+}x^4 y^5$ & \textbf{6} & 6.0 & 8 & 34 \\
$\code{108,8,10}$ & $(9,6)$ & $y{+}y^2{+}x^3$ & $y^3{+}x{+}x^2$ & $x^6 y^4$ & $x^6$ & \textbf{10} & 7.4 & 8 & 107 \\
$\code{36,4,6}$ & $(3,6)$ & $xy^3{+}xy^5{+}x^2$ & $xy^2{+}x^2 y^2{+}x^2 y^3$ & $y^4$ & $y$ & \textbf{6} & 4.0 & 8 & 2 \\
$\code{36,4,6}$ & $(6,3)$ & $y{+}x^3 y^2{+}x^4 y$ & $x^3 y{+}x^4{+}x^4 y$ & $y$ & $xy$ & \textbf{6} & 4.0 & 7 & 45 \\
\midrule
\multicolumn{9}{l}{Total} & 368 \\
\bottomrule
\end{tabular}
\end{table*}

All 368 codes use trinomial bases ($|A| = |B| = 3$); the non-CSS character comes entirely from the perturbation polynomials $C$ and $D$.
The optimal perturbation size is $|C| = |D| = 2$ (56\% of codes, highest average $\FOM = 6.4$); heavy perturbations ($|C| + |D| \geq 6$) cap distance at $d \leq 8$.
A clear regularity appears in the exponent structure: the $x$-exponents of $C$ are drawn from those of $A$ in all top-10 $\FOM$ codes and 55\% of the catalog, suggesting the perturbation works as a ``$y$-rotation'' of the base's $x$-skeleton.

This regularity raises the question of whether these codes are locally Clifford equivalent to CSS codes (e.g., related by per-qubit $S$ gates); computational verification (Sec.~\ref{sec:pbb}, Appendix~\ref{app:lc}) identifies 11 of the 368 codes as CSS-equivalent under tested single-qubit Clifford patterns---10 via non-uniform per-qubit Hadamard and 1 (the $\code{36,4,6}$, $\FOM = 4.0$) via uniform $S$ on both blocks---while the remaining 357 are CSS-inequivalent under all tested LC patterns (see Appendix~\ref{app:lc} for the residual coverage gaps).
The structural contrast with CSS codes is pronounced: 86\% of PBB codes use mixed-monomial bases (some term in $A$ or $B$ of the form $x^a y^b$ with $a,b > 0$), while only 14\% use $x/y$-swap (all $A$, $B$ terms univariate)---the LLM converged on a qualitatively different region of polynomial space when optimizing for non-CSS codes.

Twenty-five codes match or exceed the gross code's $\FOM = 12.0$: fourteen $\code{144,12,12}$ ($\FOM = 12.0$; 7 exact, 7 trusted upper bounds), four TRUSTED upper bounds at $n = 360$ or $n = 180$---$\code{360,12,{\leq}24}$ ($\FOM \leq 19.2$), $\code{360,10,{\leq}22}$ ($\FOM \leq 13.4$), $\code{360,12,{\leq}20}$ ($\FOM \leq 13.3$), $\code{180,6,{\leq}20}$ ($\FOM \leq 13.3$)---and seven PARTIAL upper bounds with $d/\sqrt{n} \in [1.57, 2.11]$: six at $\code{360,10,d}$ for $d \in \{30, 32, 40\}$ ($\FOM \leq 25$, $\leq 28.4$, $\leq 44.4$) plus $\code{180,6,{\leq}21}$ ($\FOM \leq 14.7$).
The $\code{360,60,6}$ (exact) encodes 60 logical qubits ($k/n = 1/6$), the highest encoding rate among all non-CSS codes with $d \geq 6$.
Among the 14 distinct $\code{144,12,12}$ codes, 13 use mixed-monomial bases, showing that the gross code's parameters can be reached through diverse non-CSS algebraic routes.

\paragraph{Summary.}
Table~\ref{tab:codes} presents selected CSS codes across all four campaigns, sorted by $\FOM$, with MILP distances and BP-OSD overestimation ratios. 
The BP-OSD overestimation pattern is clearly visible: codes with $d \geq 12$ show modest overestimates ($1$--$1.5\times$), while high-$k$ codes with $d \leq 4$ are overestimated by $3$--$10\times$---with the $\code{360,40,2}$ reaching $10\times$.

\begin{table*}[t]
\centering
\caption{Selected CSS codes sorted by $\FOM_{\text{MILP}}$ descending.
$d_{\text{MILP}}$: MILP distance (\textbf{bold} = proven exact); $d_{\text{BP}}$: 150k-trial BP-OSD bound (--- = not run).
``Ratio'' = $d_{\text{BP}}/d_{\text{MILP}}$ (\textbf{bold} $\geq 3\times$).
Pat.: UV = univariate, XY = $x/y$-swap, SD = $A = B$ (identical polynomials; not the standard $C \subseteq C^\perp$ self-duality), MX = mixed-monomial (Campaign~4).
Top block: $d \geq 6$; bottom: high-$k$ with $d \leq 4$.
\textsuperscript{\dag}7/32 logicals proven at weight~14.
$^{\S}$Direct sum of two gross codes (Sec.~\ref{sec:families}).
$^{\P}$Weight-8 stabilizers (4-term polynomials).
\textsuperscript{\ddag}$d_{\text{BP}} = 20$ here is from the second 150k-trial run; an extended $1.5\times10^6$-trial run gave $d_{\text{BP}} \leq 24$ ($12\times$ overestimate, Sec.~\ref{sec:bposd_findings} and Supplemental).}
\label{tab:codes}
\small
\begin{tabular}{lcccccccc}
\toprule
Code & $(\ell,m)$ & $A(x,y)$ & $B(x,y)$ & $d_{\text{MILP}}$ & $\FOM$ & $d_{\text{BP}}$ & Ratio & Pat.\ \\
\midrule
$\code{288,24,12}$$^{\S}$ & $(12,12)$ & $x^6{+}y{+}y^2$ & $y^3{+}x^2{+}x^4$ & \textbf{12} & 12.0$^{\S}$ & 18 & $1.5\times$ & XY \\
$\code{288,50,8}$$^{\P}$ & $(18,8)$ & $1{+}y^5{+}x{+}xy^5$ & $1{+}y{+}x^5{+}x^5 y$ & \textbf{8} & 11.1 & 34 & $\mathbf{4.2\times}$ & MX \\
$\code{360,20,{\leq}14}$$^{\P}$ & $(30,6)$ & $1{+}x{+}x^3 y^4{+}x^6 y^2$ & $1{+}y^2{+}x^3 y^2{+}x^6 y^4$ & $\leq 14$ & $\leq 10.9$ & 14 & $1\times$ & MX \\
$\code{360,16,{\leq}14}$ & $(15,12)$ & $y^2{+}y^4{+}x^3$ & $y^6{+}x^2{+}x^4$ & $\leq 14^\dagger$ & ${\leq}8.7$ & 14 & $1\times$ & XY \\
$\code{288,16,12}$ & $(12,12)$ & $x^3{+}y{+}y^2$ & $y^3{+}x{+}x^2$ & \textbf{12} & 8.0 & 12 & $1\times$ & XY \\
$\code{144,8,12}$ & $(12,6)$ & $1{+}xy^2{+}xy^3$ & $1{+}x^2 y^3{+}x^3 y^2$ & \textbf{12} & 8.0 & 12 & $1\times$ & MX \\
$\code{144,24,6}$ & $(12,6)$ & $x^6{+}y{+}y^2$ & $y^3{+}x^2{+}x^4$ & \textbf{6} & 6.0 & 12 & $2\times$ & XY \\
$\code{288,32,6}$ & $(12,12)$ & $x^3{+}y^2{+}y^{10}$ & $y^6{+}x{+}x^{11}$ & \textbf{6} & 4.0 & 20 & $\mathbf{3.3\times}$ & XY \\
\midrule
$\code{144,54,4}$$^{\P}$ & $(12,6)$ & $1{+}y^3{+}x^3{+}x^3 y^3$ & $1{+}y^3{+}x^3{+}x^9 y^3$ & \textbf{4} & 6.0 & 24 & $\mathbf{6\times}$ & MX \\
$\code{360,40,4}$ & $(30,6)$ & $1{+}y{+}y^2$ & $1{+}x^5{+}x^{25}$ & \textbf{4} & 1.8 & 24 & $\mathbf{6\times}$ & UV \\
$\code{144,32,2}$ & $(12,6)$ & $x^4{+}1{+}y^2$ & $y^2{+}1{+}x^4$ & \textbf{2} & 0.9 & 14 & $\mathbf{7\times}$ & SD \\
$\code{360,40,2}$\textsuperscript{\ddag} & $(15,12)$ & $1{+}y{+}y^2$ & $1{+}x^5{+}x^{10}$ & \textbf{2} & 0.4 & 20 & $\mathbf{10\times}$ & UV \\
\bottomrule
\end{tabular}
\end{table*}

\subsection{Rate--distance tradeoff}
\label{sec:tradeoff}

\begin{figure*}[t]
\centering
\includegraphics[width=\textwidth]{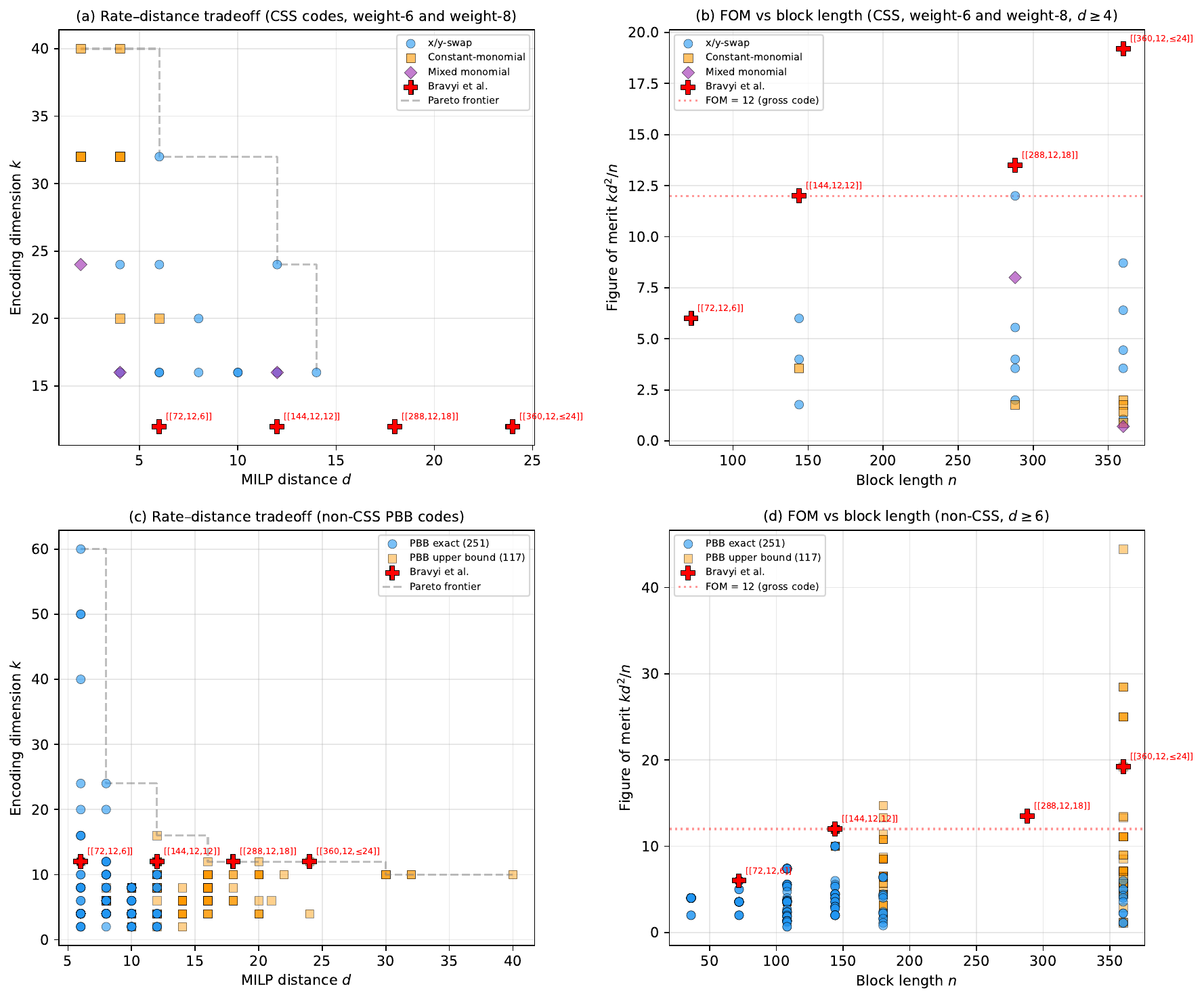}
\caption{Rate--distance landscape across all five campaigns.
\emph{Left panels:} $k$ vs.\ $d$ with Pareto frontier (dashed).
\emph{Right panels:} $\FOM = kd^2/n$ vs.\ block length.
\emph{Top:} 97 CSS codes (weight-6 and weight-8); \emph{bottom:} 368 non-CSS PBB codes.
Red markers: Bravyi et al.~\cite{bravyi2024high} baselines; dotted line: $\FOM = 12$ (gross code).
The $\code{288,24,12}$ ($d = 12$, $k = 24$) is a direct sum of two gross codes (Sec.~\ref{sec:families}); the highest-$k$ indecomposable code at $d = 12$ is $\code{288,16,12}$ ($k = 16$).
Among weight-6 codes, $k > 24$ implies $d \leq 4$; weight-8 codes reach $k = 50$ at $d = 8$; non-CSS codes occupy the same envelope.}
\label{fig:pareto}
\end{figure*}

Figure~\ref{fig:pareto} shows an empirical pattern across the searched catalog: a sharp tradeoff between encoding rate and distance that persists across all structural families, polynomial term counts, and the CSS/non-CSS boundary.
Whether this tradeoff is a structural constraint of the BB construction or an artifact of our (incomplete) search remains open---Sec.~\ref{sec:tradeoff} returns to this caveat.

The envelope is defined by two extremes.
Univariate codes achieve $k$ growing linearly with~$\ell$ ($k = 8\ell/3$) but $d \leq 4$, because the Tillich--Z\'emor min picks up a low-weight quotient $(x^\ell{-}1)/B$ or $(y^m{-}1)/A$ on at least one side (Sec.~\ref{sec:families}).
At the other extreme, the Bravyi et al.\ $\code{288,12,18}$ ($\FOM = 13.5$) and $\code{360,12,{\leq}24}$ ($\FOM \leq 19.2$) achieve high distance with $k = 12$.
Our codes fill the intermediate regime: the weight-6 $\code{288,16,12}$ achieves $k = 16$ at $d = 12$ (the indecomposable weight-6 maximum); the weight-8 $\code{288,50,8}$ encodes 50 qubits at $d = 8$; but no discovered indecomposable code exceeds the prior highest $\FOM$ at its own block length.

Campaign~4 provides evidence that the tradeoff is not an artifact of the trinomial restriction: higher-weight mixed-monomial codes (weight-8) access new $(k,d)$ combinations (e.g., $k = 50$ at $d = 8$) but do not escape the envelope.
The PBB ansatz produces non-CSS codes that match but do not exceed $\FOM = 12.0$ at $n = 144$, and the highest PBB $\FOM$ ($\leq 19.2$, trusted) remains an upper bound.
An important caveat: Campaign~5 excluded the $(12,12)$ and $(15,12)$ lattices where the highest-$\FOM$ CSS codes were found; extending PBB evolution to these lattices could reveal whether the non-CSS ansatz can escape the CSS envelope. 
Whether any construction over $\FF_2[x,y]/(x^\ell{-}1, y^m{-}1)$ can exceed the observed tradeoff---consistent with $d = O(\sqrt{n})$~\cite{postema2025existence}---remains open.

\subsection{Comparison with prior work}
\label{sec:comparison}

Table~\ref{tab:comparison} compares our highest-$k$ CSS codes with prior BB records at each block length, and Table~\ref{tab:useful} consolidates the most practically relevant codes ($d \geq 8$, $\FOM \geq 6$) across all five campaigns.

\begin{table}[t]
\centering
\caption{CSS comparison with prior art at matched block lengths.
$k$ ratio $= k_{\text{ours}}/k_{\text{Bravyi}}$; both $k$ values are exact ($\FF_2$ rank).
The $\FOM$ column reports the highest new trusted CSS $\FOM$ at each~$n$ after excluding Bravyi-equivalent rediscoveries and the decomposable $\code{288,24,12}$.
$^*$Campaign~4 (mixed-monomial, weight-8 stabilizers).}
\label{tab:comparison}
\scriptsize
\setlength{\tabcolsep}{3pt}
\begin{tabular}{@{}ccccc@{}}
\toprule
$n$ & Bravyi~\cite{bravyi2024high} & Ours CSS (max $k$) & $k$ ratio & Best new $\FOM$ \\
\midrule
144 & $\code{144,12,12}$ & $\code{144,54,{\leq}4}^*$ & $4.5\times$ & $8.0^*$ \\
288 & $\code{288,12,18}$ & $\code{288,50,8}^*$ & $4.2\times$ & $11.1^*$ \\
360 & $\code{360,12,{\leq}24}$ & $\code{360,40,4}$ & $3.3\times$ & ${\leq}10.9^*$ \\
\bottomrule
\end{tabular}
\end{table}

\begin{table*}[t]
\centering
\caption{Practically relevant codes ($d \geq 8$, $\FOM \geq 6$) with prior art.
``$d$~status'': E = MILP exact, I = MILP incumbent.
``Wt'' = stabilizer weight (controlled-NOT (CNOT) rounds per syndrome cycle).
Our PBB $\code{360,12,{\leq}24}$ and Bravyi et al.'s CSS $\code{360,12,{\leq}24}$ share $(n,k,d_{\rm upper})$ and FOM upper bound (${\leq}19.2$); they are structurally different codes (PBB: weight-8 mixed stabilizers; Bravyi: weight-6 CSS), and the $\FOM$ comparison is a tie of upper bounds, not a record.
Complete catalogs are available at the Supplemental Materials and ~\protect\cite{cruzbenito2026qcode}.}
\label{tab:useful}
\small
\begin{tabular}{llccccl}
\toprule
Source & Code & $d$ status & $\FOM$ & Wt & Type & Notes \\
\midrule
Bravyi~\cite{bravyi2024high} & $\code{360,12,{\leq}24}$ & I & $\leq 19.2$ & 6 & CSS & Highest published $\FOM$ \\ 
Symons~\cite{symons2025covering} & $\code{144,14,14}$ & E & 19.1 & 8 & CSS & 33\% more CNOTs \\
Bravyi~\cite{bravyi2024high} & $\code{288,12,18}$ & E & 13.5 & 6 & CSS & Highest exact wt-6 $\FOM$ \\ 
Bravyi~\cite{bravyi2024high} & $\code{144,12,12}$ & E & 12.0 & 6 & CSS & Gross code \\
Wang~\cite{wang2024coprime} & $\code{150,16,8}$ & E & 6.8 & 6 & CSS & Prior highest $k$ ($n \leq 360$) \\ 
Khesin~\cite{khesin2026mirror} & $\code{60,4,10}$ & E & 6.7 & 6 & non-CSS & Highest-$\FOM$ mirror code \\
\midrule
This work & $\code{288,50,8}$ & E & 11.1 & 8 & CSS & Cross-factored \\
This work & $\code{360,20,{\leq}14}$ & I & $\leq 10.9$ & 8 & CSS & Mixed-monomial \\
This work & $\code{360,16,{\leq}14}$ & E$^{*}$ & ${\leq}8.7$ & 6 & CSS & $^{*}$7/32 logicals proven \\
This work & $\code{288,16,12}$ & E & 8.0 & 6 & CSS & All shifts $\leq 3$ \\
This work & $\code{144,8,12}$ & E & 8.0 & 6 & CSS & Mixed-monomial \\
\midrule
This work & $\code{360,12,{\leq}24}$ & I & $\leq 19.2$ & 8 & PBB & Highest PBB $\FOM$ \\
This work & $\code{144,12,12}$ & E & \textbf{12.0} & 8 & PBB & Non-CSS gross code \\
This work & $\code{108,8,10}$ & E & 7.4 & 8 & PBB & \\
\bottomrule
\end{tabular}
\end{table*}

At every block length, the maximum encoding dimension increases substantially when higher check weights are permitted: $k = 54$ at $n = 144$ ($4.5\times$ Bravyi et al.'s $k = 12$, weight-8), $k = 50$ at $n = 288$ ($4.2\times$, weight-8), $k = 40$ at $n = 360$ ($3.3\times$, weight-6).
These comparisons are exact ($k$ is computed via $\FF_2$ rank) and independent of distance estimation.
However, no discovered indecomposable code exceeds the prior highest $\FOM$ at its own block length: the $\code{288,24,12}$ matches the gross code's $\FOM = 12.0$ but is a direct sum of two gross codes (Sec.~\ref{sec:families}); the highest-$\FOM$ previously unreported weight-6 CSS code, $\code{288,16,12}$ ($\FOM = 8.0$), falls below the $\code{288,12,18}$'s $\FOM = 13.5$.
The concurrent Bayesian-optimization result of Chengyu et al.~\cite{chengyu2026bayesian}---a $\code{144,36,?}$ without verified distance---highlights the importance of exact verification: our results show that high-$k$ codes routinely have $d \leq 4$.

For non-CSS codes, our PBB codes are complementary to the mirror codes of~\cite{khesin2026mirror}: the $\code{108,8,10}$ ($\FOM = 7.4$) matches the highest-$\FOM$ mirror code at comparable~$n$, while accessing larger block lengths not yet explored by that approach.

Among the most comprehensive published searches, Liang et al.~\cite{liang2025generalized} enumerate all weight-6 generalized toric codes of a canonical form $f = 1{+}x{+}x^a y^b$ for $n \leq 400$; at $n = 288$ they report only the $\code{288,12,18}$.
Our $x/y$-swap codes do not fall within their canonical form, though some may be equivalent under lattice automorphisms.
The $\code{288,16,12}$ at $(12,12)$ lies within their search space but was not reported, presumably because the $\code{288,12,18}$ dominates it in $\FOM$.
To our knowledge, 97 of our 99 CSS equivalence classes do not appear in prior catalogs; the two matches are Bravyi et al.'s gross code and $\code{360,12,{\leq}24}$, which Campaign~4 rediscovered through mixed-monomial representations. Extended novelty assessment---including comparison with weight-8 codes and overlap analysis with prior systematic searches---appears in the Supplemental Material.

\subsection{Code capacity simulations}
\label{sec:code_capacity}

We simulate two channels with BP-OSD decoding (OSD-CS order~7, 20~BP iterations): (i)~independent bit-flip noise, where each qubit gets $X$ at rate~$p$, decoded on $H_z$ with an iid prior of marginal rate~$p$; and (ii)~the depolarizing channel, where each qubit independently gets $X$, $Y$, or $Z$ with probability $p/3$, decoded on the symplectic check matrix $[H_z \mid H_x]$ with iid bit-flip priors at marginal rate $2p/3$ on each of the $2n$ columns.\footnote{\label{fn:depolarizing-decoder}For the depolarizing channel this iid prior is mismatched: under depolarizing noise $P(e_x{=}1, e_z{=}1) = p/3$ (induced by $Y$ errors) versus the iid prediction $(2p/3)^2$, so BP-OSD with iid bit priors does not model the $X$/$Z$ correlation. The depolarizing-channel results reported in this section are therefore values attainable under this specific (sub-optimal) decoder configuration---a lower bound on the depolarizing threshold under an optimal correlated decoder (e.g.\ quaternary BP over GF(4))---not the optimal depolarizing threshold itself. CSS-vs.-PBB comparisons under depolarizing noise are likewise decoder-dependent.}
Eight CSS and four non-CSS codes are simulated at 100{,}000 shots per rate.
The block logical error rate (LER) counts any trial with at least one logical error among~$k$ qubits; Wilson 95\% confidence intervals are $\leq 0.06$~percentage points for LER~$\leq 1\%$.
Note that CSS codes may have different X and Z distances; results under depolarizing noise (Table~\ref{tab:threshold_noncss}) can differ from X-only because both noise channels probe different distance properties.
For the CSS $\code{144,12,12}$ under depolarizing noise, the lower block LER compared to X-only noise reflects the iid X/Z prior placing X- and Z-bit error probabilities at marginal rate $2p/3$ each (footnote~\ref{fn:depolarizing-decoder}), giving an effective per-Pauli rate below the pseudo-threshold.

\begin{figure}[t]
\centering
\includegraphics[width=\columnwidth]{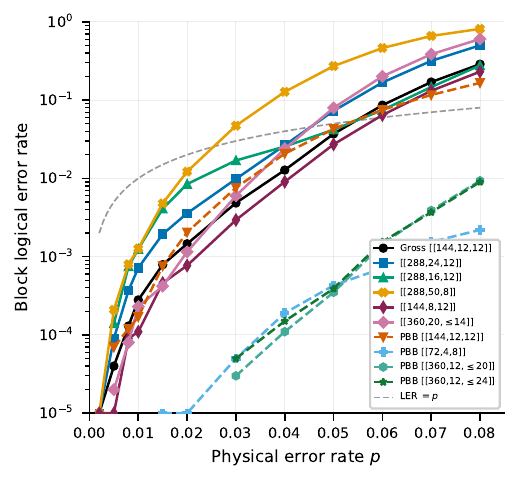}
\caption{Block logical error rate vs.\ physical bit-flip rate~$p$ for six CSS
and four non-CSS codes (100{,}000 shots per rate, BP-OSD with OSD-CS order~7).
Dashed line: LER~$= p$.  Weight-6 codes with $d \geq 12$ achieve LER~$< p$ through
$p \approx 5\%$; the PBB $\code{360,12,{\leq}24}$ and $\code{72,4,8}$ achieve
LER~$< 1\%$ at $p = 8\%$.  The weight-8 $\code{288,50,8}$ ($d = 8$, $k = 50$)
crosses LER~$= p$ between $p = 2\%$ and $3\%$, illustrating the rate--distance tradeoff.}
\label{fig:ler_curves}
\end{figure}

Among CSS codes, the gross code $\code{144,12,12}$ achieves block LER~$< p$ through $p \approx 5\%$, validating the framework.
The $\code{288,16,12}$ matches this pseudo-threshold despite encoding 4 additional qubits; the $\code{288,24,12}$ achieves LER~$< p$ through $p = 4\%$ with per-qubit rates $p_L$ identical to the gross code at every tested rate (Supplemental Table~XVI), consistent with its direct-sum decomposition into two independent gross codes.
(Block LER and per-qubit rate are related by $p_L = 1 - (1-\text{LER})^{1/k}$, defined in Appendix~\ref{app:threshold}; for the gross code at $p = 1\%$, LER~$= 0.03\%$ gives $p_L \approx 0.003\%$, rounded to ${<}0.01\%$ in the tables. Wilson 95\% confidence intervals are $\leq 0.06$~pp at LER~$\leq 1\%$; see Appendix~\ref{app:threshold} for details.)
The mixed-monomial $\code{144,8,12}$ achieves LER~$< p$ through $p = 5\%$, matching the gross code's pseudo-threshold with fewer logical qubits ($k = 8$) and per-qubit rates $p_L \leq 0.34\%$ through $p = 5\%$.
At $n = 360$, Campaign~4's $\code{360,20,{\leq}14}$ achieves LER~$< p$ through $p = 4\%$, encoding 4 more qubits than the $\code{360,16,{\leq}14}$ at comparable per-qubit error rates.
The weight-8 $\code{288,50,8}$ ($k = 50$, $d = 8$) achieves LER~$< p$ only through $p \approx 2\%$; its high encoding rate ($k/n = 17\%$) degrades the pseudo-threshold despite $\FOM = 11.1$.
The $\code{144,24,6}$ ($d = 6$) provides limited suppression (LER~$< p$ for $p \leq 1.5\%$), consistent with its shorter distance.
Overall, indecomposable codes with $d \geq 12$ achieve gross-code-level pseudo-thresholds while encoding up to $1.7\times$ as many logical qubits ($\code{360,20,{\leq}14}$, $k = 20$ vs.\ $k = 12$).

The non-CSS $\code{144,12,12}$ PBB code shares $\code{n,k,d}$ with the gross code, enabling a direct head-to-head---though its type-1 (mixed) stabilizers have weight~8 versus weight~6, so circuit-level costs differ.%
Under X-only noise, both achieve LER~$< p$ through $p = 5\%$, but the PBB code achieves lower LER at high error rates (LER~$= 16.6\%$ vs.\ $28.8\%$ at $p = 8\%$), consistent with its $\approx 1.6\times$ more independent Z-syndrome constraints: $\mathrm{rank}_{\FF_2}(H_Z\text{-block}) = 108$ for the PBB code (where every stabilizer row has nonzero Z-support, from $C, D$ in block~1 and $B^\top, A^\top$ in block~2) versus $\mathrm{rank}_{\FF_2}(H_Z) = 66$ for the gross code (computed directly from the stabilizer matrices for both codes).
At moderate rates ($p = 3$--$4\%$), the CSS code has slightly lower LER, suggesting the non-CSS advantage diminishes when errors are sparse relative to~$d$.
Under depolarizing noise (decoded with the iid X/Z BP-OSD prior; footnote~\ref{fn:depolarizing-decoder}), the advantage disappears: both achieve nearly identical LER at all rates (9.4\% vs.\ 9.3\% at $p = 8\%$), consistent with the CSS code's separate $H_X$, $H_Z$ providing equivalent detection when both X and Z errors are present under that prior (Table~\ref{tab:threshold_noncss}); a correlation-aware decoder could change this comparison.
The $\code{72,4,8}$ achieves LER~$< 0.3\%$ even at $p = 8\%$ under X-only noise, consistent with $d/\sqrt{n} \approx 0.94$ providing strong error suppression at $n = 72$.
The two simulated $n = 360$ PBB codes---the top-FOM $\code{360,12,{\leq}24}$ ($\FOM \leq 19.2$) and the $\code{360,12,{\leq}20}$ ($\FOM \leq 13.3$)---both achieve LER~$< p$ through all tested rates ($p \leq 8\%$) under both noise models, with LER~$= 0.90\%$/$4.52\%$ and $0.93\%$/$4.97\%$ (X-only/depolarizing) at $p = 8\%$ respectively; their $d \leq 24$ and $d \leq 20$ provide the strongest error suppression among all simulated codes.
Condensed simulation data appear in Appendix~\ref{app:threshold}; full data at all 12 error rates are in the Supplemental Material.

\subsection{BP-OSD overestimation}
\label{sec:bposd_findings}

MILP ground truth reveals systematic and severe BP-OSD overestimation for high-rate codes.
A representative example illustrates the severity: the $\code{360,40,2}$ yielded $d_{\text{BP}} \leq 24$ at 150k trials, $d \leq 6$ at 1.5M trials, and $d = 2$ by MILP---a $12\times$ overestimate that no feasible trial budget would resolve.
By contrast, the gross code ($k = 12$) returned $d = 12$ in all 30 batches with zero variance.

Per-decoder success rates stratified by encoding rate, full per-batch distance distributions, and extended 1{,}500{,}000-trial verification results are reported in the Supplemental Material.

\begin{figure}[t]
\centering
\includegraphics[width=\columnwidth]{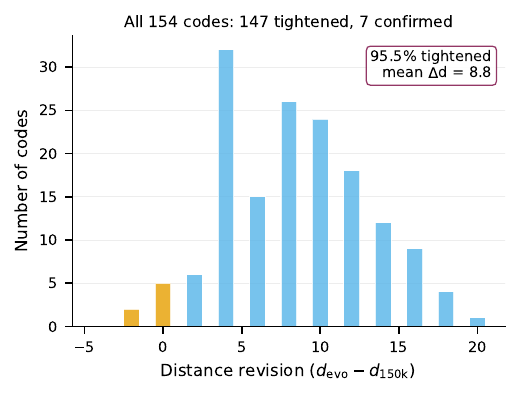}
\caption{Distance revisions across 154 Campaigns~1--3 polynomial representations.
147 (95.5\%) had bounds tightened under the 150k-trial multi-decoder protocol, with mean reduction~${\sim}8.4$ points.
Even tightened bounds remain gross overestimates: MILP reveals $d_{\text{true}} = 2$ for codes with $d_{\text{BP}} \leq 24$.}
\label{fig:tightening}
\end{figure}

Across all 154 Campaigns~1--3 representations, 147 (95.5\%) had bounds tightened under the multi-decoder protocol (Fig.~\ref{fig:tightening}), with a mean revision of ${\sim}8.4$ points.
OSD-CS$_{10}$ configurations consistently outperform OSD$_0$: the latter found the global minimum for only 49 of 154 codes (32\%), versus 100 (65\%) for OSD-CS$_{10}$/minimum-sum.
Yet even OSD-CS$_{10}$ produces large overestimates on high-rate codes: for the $\code{288,32,4}$ (MILP $d = 4$), OSD$_0$ returns a median of $d = 39$ (range 30--44) while OSD-CS$_{10}$/minimum-sum returns a median of $d = 26$ (range 18--30)---a $4.5$--$11\times$ overestimate.

The $A = B$ code $\code{144,32,2}$ (Theorem~\ref{thm:ab_d2}) illustrates the failure mode concretely: 29 of 30 batches at 150k trials failed to find any weight-2 logical, despite the theorem guaranteeing $k/2 = 16$ independent ones (72 total).
Only at 1.5M trials did a single batch find $d = 2$.
This demonstrates that BP-OSD cannot be trusted to find low-weight operators even when they are plentiful.

Campaign~5 extends these findings to non-CSS codes: deep MILP of 149 PBB catalog entries found 33 downward corrections (22\% rate), the largest being $d = 24 \to 16$ at $n = 360$.
An additional failure mode specific to non-CSS codes emerged: naive random syndrome sampling achieves 0\% decode success on many PBB codes because the achievable syndrome subspace can be exponentially small (Sec.~\ref{sec:noncss_pipeline}).

These findings are consistent with the concurrent benchmark of Webster et al.~\cite{webster2026distance}, who observe that BP-OSD fails to find the correct distance for codes as small as $\code{48,5,10}$ without enhanced sampling strategies.
Their QDistEvol algorithm and random permutation techniques may reduce the overestimation gap; integrating such methods into evolutionary search pipelines is a natural extension.
These findings have direct implications: distance claims for BB codes with $k/n > 0.1$ based on fewer than ${\sim}10^5$ trials with a single decoder should be treated with caution; MILP verification is essential where feasible.

\subsection{Ablation: LLM vs.\ genetic algorithm vs.\ random search}
\label{sec:ablation}

We evaluate the LLM's contribution using the deterministic metric $\Sigma_k = \sum_{\ell,m} \max\{k\}$ (sum of highest~$k$ per lattice), since MILP shows the distance component of the evolution's fitness was unreliable. 
Because $k$ is computed exactly via $\FF_2$ rank, this metric has zero stochasticity.
We compare eight arms on the same 8 Stage-2 lattices: the hand-crafted seed, uniform random trinomials at two budgets ($10^3$ and $10^4$ per lattice), a genetic algorithm (GA) with tournament selection on exponent tuples, a GA on the same ansatz representation the LLM uses (``GA-G''), and the highest-fitness evolved ansätze from Campaigns~1--3.

The key result is that GA-G achieves $\Sigma_k = 1{,}037 \pm 22$ (5 seeds)---exceeding the LLM's $704$ by $47\%$ with only $14{,}000$ evaluations versus $213{,}000$.
However, GA-G converges exclusively to low-distance codes ($d \leq 2$), reaching $k = n/2$ (the theoretical maximum) at every lattice across the union of seeds, all with $d \leq 2$.
The exponent-tuple GA ($\Sigma_k = 703 \pm 142$) matches Campaign~1 but likewise through low-distance codes at standard lattices.
At lattices $(16,9)$ and $(18,8)$---where the LLM fails---the exponent-tuple GA finds genuine $d \geq 3$ codes, but MILP verification yields $d = 4$--$6$ and $\FOM \leq 4.0$, a $3\times$ gap below the LLM's $\FOM = 12.0$.
Equal-budget random search ($\Sigma_k = 299 \pm 26$) shows that uniform sampling yields diminishing returns: a $10\times$ budget increase over the $10^3$/lattice baseline produces only a $1.7\times$ improvement.

This result controls for the effect of the ansatz representation: given the \emph{same} ansatz representation, a classical GA achieves higher $\Sigma_k$ but only through $d \leq 2$ codes.
This is consistent with a qualitative difference in mutation distribution---the LLM produces mutations that yield ans\"atze whose verified outputs include codes with $d \geq 6$ and $\FOM$ up to $8.0$ (indecomposable CSS), or $12.0$ counting the decomposable $\code{288,24,12}$, while the GA does not.
We caution against stronger interpretations: the comparison does not control for the budget mismatch (213k vs.\ 14k evaluations), the GA-G arm was not given distance feedback, and the LLM's own fitness signal during Campaigns~1--3 was inflated by BP-OSD overestimation (Sec.~\ref{sec:bposd_findings}).
Campaign~1's highest-fitness ansatz was found at generation depth~5 (5 mutations from the seed), with the largest fitness gain arising from the independent invention of a ``Pure-Axis'' (univariate) strategy after observing that the highest-$k$ code used separated variables---a type of structural insight that abstract syntax tree (AST)-level mutation operators cannot produce by construction.
Detailed per-lattice results appear in Appendix~\ref{app:ablation}.

\section{Discussion}
\label{sec:discussion}

\paragraph{Pipeline design lessons.} 
The most important lesson from five campaigns is that pipeline design matters more than any single code discovery.
Three design decisions proved critical:
(i)~the multi-stage cascade that screens on $k$ first (exact, cheap) and defers distance estimation to later stages, ensuring the search signal is never corrupted even when BP-OSD overestimates;
(ii)~the introduction of MILP-in-the-loop verification for Campaigns~4--5 after post-hoc analysis of Campaigns~1--3 revealed that BP-OSD distance estimates were inflated by up to $12\times$ (Sec.~\ref{sec:bposd_findings}), rendering the fitness signal unreliable for high-rate codes;
and (iii)~post-campaign verification (Tanner-graph decomposability, BLISS deduplication, Clifford equivalence for PBB) that caught the $\code{288,24,12}$ direct-sum decomposition and reduced 720 tuple-distinct PBB codes to 368 BLISS-distinct ones.
The framework produced ansätze whose verified outputs cluster into recognizable algebraic families---the univariate/HGP subspace, $x/y$-swap codes, factored-product codes---despite the misleading distance estimates in the fitness function during Campaigns~1--3, because $k$ is computed exactly via $\FF_2$ rank and was never corrupted.
Campaign~3's ``saturation'' was convergence to a local optimum of this \emph{effectively $k$-only} objective, not evidence that no better codes exist.

\paragraph{Hardware feasibility.}
All trinomial BB codes share degree-6 qubit connectivity. K\"onig's line coloring theorem guarantees 6 CNOT rounds per check type (X or Z separately), giving depth~12 in the worst case; Bravyi et al.~\cite{bravyi2024high} found a depth-7 CNOT circuit by interleaving X and Z check rounds.
Coupling ranges compare favorably under the assumption that qubits are laid out on an $\ell \times m$ grid with periodic boundary conditions, so that a shift $x^a$ corresponds to a coupler spanning $a$ sites along the $\ell$-direction: the $\code{288,16,12}$ has maximum $x$-shift~$x^3$ (3 out of $\ell = 12$, 25\% of that cyclic dimension) and maximum $y$-shift~$y^3$ (3 out of $m = 6$, 50\%), the shortest coupling range among $d = 12$ codes.
The $\code{288,24,12}$ (a direct sum of two gross codes) has maximum shift $x^6$ ($6/12 = 50\%$), inheriting the gross code's coupling fraction.
Campaign~4 codes with 4-term polynomials require degree-8 connectivity (8 CNOT rounds); whether the $\FOM$ advantage of weight-8 codes compensates for the increased circuit depth is hardware-dependent.
For non-CSS PBB codes, each block-1 stabilizer has both X and Z support, requiring interleaved CNOT and CZ gates; fault-tolerant syndrome extraction circuits remain an open problem~\cite{khesin2026mirror}.

\paragraph{Limitations.}
The search explored a restricted scope: CSS campaigns required $3 \mid \ell$ at Stage~1, missing lattices $(16,9)$ and $(18,8)$; Campaign~5 excluded the large-$m$ lattices $(12,12)$ and $(15,12)$ where the highest-$\FOM$ CSS codes were found; Campaign~3 saturated after ${\sim}225$ iterations, suggesting the ansatz representation was largely exhausted.
Different seeds, larger populations, or alternative representations could access unexplored regions.
The codes of~\cite{bravyi2024high} were optimized for pseudo-threshold behavior under circuit-level noise, a complementary criterion to the combinatorial $\FOM$ used here; circuit-level simulations are necessary to assess practical fault-tolerant performance.

\paragraph{Future directions.}
Escaping the rate--distance envelope likely requires moving beyond the bivariate bicycle family: higher-rank abelian groups (multivariate multicycle codes)~\cite{mian2026multivariate}, non-abelian lifts~\cite{breuckmann2021balanced}, or other quotient ring structures.
Circuit-level noise simulations for the $\code{288,16,12}$, the non-CSS $\code{144,12,12}$, and the $\code{360,12,{\leq}24}$ are a natural next step.
Combining LLM-guided search with Bayesian optimization~\cite{chengyu2026bayesian} or reinforcement learning, and scaling PBB evolution to $n > 360$ with improved MILP solvers are further opportunities.

\section{Conclusion}
\label{sec:conclusion}

We have presented an LLM-guided evolutionary pipeline for quantum code discovery, combining MAP-Elites program evolution with multi-stage verification including MILP distance computation, BLISS deduplication, and post-campaign structural analysis.
Applying this framework to bivariate bicycle codes, five campaigns discover 465 distinct codes---97 CSS (weight-6 and weight-8) and 368 non-CSS perturbed bivariate bicycle---at $n \leq 360$, with encoding dimensions up to $k = 54$ (prior weight-6 maximum: $k = 16$).
The codes most relevant for error correction are the CSS $\code{288,16,12}$ ($d = 12$ exact, all shifts~$\leq 3$), the non-CSS $\code{360,12,{\leq}24}$ ($\FOM \leq 19.2$, the highest trusted non-CSS FOM, LER~$< p$ at all tested rates), and the $\code{144,12,12}$ PBB (matching $\FOM = 12.0$ with a non-CSS stabilizer structure).
Early campaigns revealed verification pitfalls that drove pipeline improvements: the $\code{288,24,12}$, which Tanner graph analysis revealed to be a direct sum of two gross codes, and the $A{=}B$ distance trap (Appendix~\ref{app:ab_trap}), undetected by BP-OSD even at $1.5 \times 10^6$ trials. 

MILP distance computation reveals an empirical rate--distance tradeoff across the searched catalog: among weight-6 codes, high-$k$ codes have $d \leq 4$, while indecomposable $d = 12$ codes are limited to $k \leq 16$.
Higher-weight codes access new $(k,d)$ combinations---e.g., $k = 50$ at $d = 8$ with weight-8 stabilizers---but do not escape this envelope.
This tradeoff persists across polynomial term counts and check weights within the CSS family; the PBB non-CSS ansatz matches but does not exceed $\FOM = 12.0$ at $n = 144$, though whether similar non-CSS constructions over the same ring can escape this envelope remains open (asymptotically good codes exist~\cite{panteleev2022asymptotically,leverrier2022quantum,dinur2023good} but use fundamentally different algebraic structures).
All BB codes with $A = B$ have $d = 2$ (Appendix~\ref{app:ab_trap})---a distance trap undetected by BP-OSD even at $1.5 \times 10^6$ trials but caught immediately by the pipeline's MILP verification. 

More broadly, BP-OSD overestimates distance by up to $12\times$ for high-rate codes, and standard syndrome sampling fails entirely on non-CSS codes; MILP verification and achievable-syndrome sampling are essential for reliable distance claims.
Code-capacity simulations confirm that indecomposable codes with $d \geq 12$ achieve gross-code-level pseudo-thresholds while encoding up to $1.7\times$ as many logical qubits, and that the non-CSS $\code{144,12,12}$ outperforms its CSS counterpart under X-only noise at $p \geq 6\%$ (though the advantage disappears under depolarizing noise as decoded by our iid X/Z BP-OSD; see footnote~\ref{fn:depolarizing-decoder}).
The non-CSS $\code{360,12,{\leq}24}$ achieves LER~$< p$ through all tested rates under both $X$-only and depolarizing noise.
The total cost of ${\sim}$US\$400 indicates that LLM-guided program evolution is a feasible tool for quantum code discovery.

\begin{acknowledgments}
We thank the IBM Research and IBM Quantum teams for discussions and feedback.
\end{acknowledgments}

\paragraph{Data and code availability.}
All source code, evolved programs, LLM prompts and diffs, MILP distance computations, BLISS equivalence analysis, the 3-tier distance pipeline, and complete code catalogs (225 CSS representations, 368 non-CSS codes) are publicly available at \cite{cruzbenito2026qcode} \url{https://github.com/qiskit-community/qcode-discovery}.

\paragraph{Author contributions.}
J.C.-B.\ conceived the project, designed all evaluation and verification protocols (including the 3-tier adaptive distance pipeline and achievable-syndrome sampling), conducted all five evolution campaigns including deep MILP verification, performed all MILP and BLISS analysis, proved the $A = B$ distance-trap theorem, performed the LC non-equivalence verification, characterized the univariate family, derived the $k = 8\ell/3$ formula, and wrote the manuscript.
A.W.C.\ guided research direction for Campaigns~4--5, contributed the HGP-based proof of the univariate encoding dimension, and reviewed the manuscript.
D.K.\ and I.F.\ contributed to experimental design and reviewed the manuscript.

\paragraph{LLM usage.}
Six LLMs served as mutation operators: Gemini~3 Flash Preview~\cite{google2025gemini3} (Campaign~1); Claude~Opus~4.6 (Anthropic), GPT-5.2 (OpenAI), Gemini~3 Pro Preview (Google) (Campaigns~2--3); Claude~Opus~4.6, GPT-5.3-Codex (OpenAI), Gemini~3.1 Pro Preview (Google) (Campaigns~4--5).
They were accessed via LiteLLM cloud APIs at temperature~0.8 (Campaign~4: 1.0) with 16{,}384 maximum output tokens.
Claude~Opus~4.6/4.7 and GPT-5.2/5.5 additionally assisted with code development and manuscript drafting.
The authors are affiliated with IBM; model selection was based on availability and diversity rather than commercial relationships.

\appendix

\section{Code capacity simulation data}
\label{app:threshold}

Tables~\ref{tab:threshold_css} and~\ref{tab:threshold_noncss} give condensed code-capacity results for the CSS and PBB codes of Sec.~\ref{sec:code_capacity}.
The $X$-only model applies independent $X$ errors at rate~$p$; decoding uses BP-OSD (OSD-CS order~7, product-sum, 20~BP iterations).
Block LER counts any of 100{,}000 trials with $\geq 1$ logical error among~$k$ qubits, and $p_L = 1{-}(1{-}\text{LER})^{1/k}$.
Wilson 95\% confidence intervals are $\leq 0.06$ percentage points for LER~$\leq 1\%$; full 12-rate data are in the Supplemental Material.

\begin{table*}[t]
\centering
\caption{Code-capacity simulation for eight CSS codes (weight-6 BB and weight-8) under bit-flip noise.
Entries show block LER (\%) and per-qubit rate $p_L$ (\%) at $p = 1, 3, 5, 8\%$; \textbf{bold} LER entries exceed~$p$.
$^{\dagger}$Campaign~4 code (MILP incumbent $d \leq 14$; BP-OSD 300k confirms $d = 14$).
$^{\ddagger}$Campaign~4 code (weight-8 stabilizers).}
\label{tab:threshold_css}
\small
\begin{tabular}{l c | cc cc cc cc}
\toprule
& & \multicolumn{2}{c}{$p = 1\%$} & \multicolumn{2}{c}{$p = 3\%$} & \multicolumn{2}{c}{$p = 5\%$} & \multicolumn{2}{c}{$p = 8\%$} \\
Code & $k$ & LER & $p_L$ & LER & $p_L$ & LER & $p_L$ & LER & $p_L$ \\
\midrule
$\code{144,12,12}$~\cite{bravyi2024high} & 12 & 0.03 & ${<}0.01$ & 0.49 & 0.04 & 3.70 & 0.31 & \textbf{28.76} & 2.79 \\
$\code{288,24,12}$ & 24 & 0.07 & ${<}0.01$ & 0.98 & 0.04 & \textbf{7.28} & 0.31 & \textbf{50.06} & 2.85 \\
$\code{288,16,12}$ & 16 & 0.13 & 0.01 & 1.69 & 0.11 & 4.18 & 0.27 & \textbf{27.73} & 2.01 \\
$\code{288,50,8}^{\ddagger}$ & 50 & 0.13 & ${<}0.01$ & \textbf{4.71} & 0.10 & \textbf{27.20} & 0.63 & \textbf{81.28} & 3.30 \\
$\code{144,8,12}$ & 8 & 0.01 & ${<}0.01$ & 0.29 & 0.04 & 2.72 & 0.34 & \textbf{23.07} & 3.23 \\
$\code{360,16,{\leq}14}$ & 16 & 0.15 & 0.01 & 1.69 & 0.11 & \textbf{5.27} & 0.34 & \textbf{38.70} & 3.01 \\
$\code{360,20,{\leq}14}^\dagger$ & 20 & 0.02 & ${<}0.01$ & 0.60 & 0.03 & \textbf{7.95} & 0.41 & \textbf{60.29} & 4.51 \\
$\code{144,24,6}$ & 24 & 0.27 & 0.01 & \textbf{7.22} & 0.31 & \textbf{28.97} & 1.42 & \textbf{70.68} & 4.98 \\
\bottomrule
\end{tabular}
\end{table*}

\begin{table*}[t]
\centering
\caption{Code-capacity simulation for five non-CSS PBB codes and the CSS gross-code baseline under $X$-only and depolarizing noise.
Entries are block LER (\%) from 100{,}000 shots; \textbf{bold} entries exceed~$p$.
Under $X$-only noise, the PBB $\code{144,12,12}$ outperforms its CSS counterpart at $p \geq 6\%$; under depolarizing noise, decoded with the iid X/Z BP-OSD prior of footnote~\ref{fn:depolarizing-decoder}, both perform identically (a comparison that is decoder-dependent).
Both the $\code{360,12,{\leq}20}$ and the higher-FOM $\code{360,12,{\leq}24}$ achieve LER~$< p$ through all tested rates under both noise models---at $p = 8\%$, LER~$= 0.93\%$ vs $0.90\%$ ($X$-only) and $4.97\%$ vs $4.52\%$ (depolarizing).}
\label{tab:threshold_noncss}
\small
\begin{tabular}{l c | cccc | cccc}
\toprule
& & \multicolumn{4}{c|}{$X$-only noise} & \multicolumn{4}{c}{Depolarizing noise} \\
Code & $k$ & $1\%$ & $3\%$ & $5\%$ & $8\%$ & $1\%$ & $3\%$ & $5\%$ & $8\%$ \\
\midrule
$\code{144,12,12}$ CSS~\cite{bravyi2024high} & 12 & 0.03 & 0.49 & 3.70 & \textbf{28.76} & 0.03 & 0.37 & 1.27 & \textbf{9.43} \\
$\code{144,12,12}$ PBB & 12 & 0.02 & 0.76 & 4.29 & \textbf{16.61} & 0.02 & 0.34 & 1.40 & \textbf{9.29} \\
$\code{72,4,8}$ PBB & 4 & 0.00 & 0.01 & 0.04 & 0.22 & 0.01 & 0.25 & 1.80 & \textbf{10.47} \\
$\code{108,8,10}$ PBB & 8 & 0.00 & 0.01 & 0.05 & 0.51 & 0.01 & 0.18 & 1.05 & \textbf{8.38} \\
$\code{360,12,{\leq}20}$ PBB & 12 & 0.00 & 0.00 & 0.04 & 0.93 & 0.05 & 0.70 & 1.89 & 4.97 \\
$\code{360,12,{\leq}24}$ PBB & 12 & 0.00 & 0.01 & 0.04 & 0.90 & 0.08 & 0.81 & 1.82 & 4.52 \\
\bottomrule
\end{tabular}
\end{table*}

\section{Ablation details}
\label{app:ablation}

Tables~\ref{tab:ablation_k} and~\ref{tab:ablation_per_lattice} summarize the ablation study (Sec.~\ref{sec:ablation}).
All metrics use exact $\FF_2$ rank; $\Sigma_k = \sum_{\ell,m} \max\{k\}$.
For multi-seed arms, we report mean~$\pm$~std over 5 independent seeds.
GA-G denotes the GA operating on the same ansatz representation as the LLM, with AST-level mutations (integer literal changes, loop bound modification, strategy block duplication/removal/splicing, block reordering). 

\begin{table}[h]
\centering
\caption{$k$-only ablation summary.
$^\dagger$Mean $\pm$ std over 5 seeds.
Both GAs exceed Campaign~1 on $\Sigma_k$ but through low-distance codes ($d \leq 2$); the LLM's highest-$\FOM$ codes with $d \geq 6$ achieve $\FOM = 12.0$ vs.\ $\FOM \leq 4.0$ for either GA.} 
\label{tab:ablation_k}
\small
\begin{tabular}{@{}lrr@{}}
\toprule
Arm & Budget & $\Sigma_k$ \\
\midrule
Random ($10^3$/latt.) & 8{,}000 & 172 \\
Random ($10^4$/latt.)$^\dagger$ & 80{,}000 & $299 \pm 26$ \\
Seed & 21{,}962 & 276 \\
GA (exponent tuples)$^\dagger$ & 80{,}000 & $703 \pm 142$ \\
GA-G (generators)$^\dagger$ & 14{,}000 & $1{,}037 \pm 22$ \\
\midrule
Campaign~1 (Flash) & 213{,}216 & 704 \\
Campaign~2 (Ens.) & 15{,}690 & 464 \\
Campaign~3 (Ens.) & 49{,}699 & 704 \\
\bottomrule
\end{tabular}
\end{table}

\begin{table}[h]
\centering
\caption{Per-lattice maximum~$k$.
Rnd-10k, GA, GA-G: mean over 5 seeds; C1$=$C3 are identical.
The exponent-tuple GA finds low-distance $d \leq 2$ codes at standard lattices and genuine $d = 4$--$6$ codes (MILP-verified, $\FOM \leq 4.0$) at $(16,9)$, $(18,8)$ where LLM evolution fails.
GA-G reaches $k = n/2$ at every lattice across the union of seeds, all with $d \leq 2$.}
\label{tab:ablation_per_lattice}
\small
\begin{tabular}{@{}lrrrrrrrr@{}}
\toprule
Lattice & $n$ & Rnd & Rnd-10k & Seed & GA & GA-G & C1{=}C3 & C2 \\
\midrule
$(12,6)$ & 144 & 32 & 35 & 32 & 74 & 72 & 64 & 64 \\
$(6,12)$ & 144 & 24 & 45 & 32 & 83 & 72 & 64 & 32 \\
$(12,12)$ & 288 & 32 & 45 & 64 & 102 & 144 & 128 & 64 \\
$(24,6)$ & 288 & 16 & 38 & 48 & 115 & 144 & 128 & 64 \\
$(15,12)$ & 360 & 16 & 37 & 40 & 135 & 156 & 160 & 80 \\
$(30,6)$ & 360 & 24 & 48 & 60 & 93 & 180 & 160 & 160 \\
$(16,9)$ & 288 & 12 & 18 & 0 & 45 & 125 & 0 & 0 \\
$(18,8)$ & 288 & 16 & 34 & 0 & 56 & 144 & 0 & 0 \\
\midrule
$\Sigma_k$ & & 172 & 300 & 276 & 703 & 1037 & 704 & 464 \\
\bottomrule
\end{tabular}
\end{table}

\section{Representative LLM mutations}
\label{app:mutations}

Three mutations from Campaign~1's highest-fitness lineage (Gemini~3 Flash, generation depth~5) illustrate the nature of LLM-guided program synthesis. 

\begin{figure}[tbp]
\textbf{Mutation 1: Expanded search range} (Gen~0$\to$1, $+22\%$).
The LLM widened exponent bounds and added a coprimality heuristic reflecting the intuition that $\gcd(a,\ell) = 1$ avoids degenerate orbits:
\begin{lstlisting}[style=diff]
- max_x = min(ell, 5)
+ limit_x = min(ell, 13)
  ...
+ # Strategy 5: GCD-informed exponents
+ coprime_x = [i for i in range(1, ell)
+              if math.gcd(i, ell) == 1]
\end{lstlisting}
\caption{Coprimality heuristic.}
\label{fig:mut1}
\end{figure}

\begin{figure}[tbp]
\textbf{Mutation 2: Univariate strategy} (Gen~4$\to$5, $+40\%$).
A pivotal step: the LLM replaced broad enumeration with a ``Pure-Axis'' strategy after observing that the highest-$k$ code used separated variables---independently converging on univariate/HGP codes ~\cite{eberhardt2024pruning} from the fitness signal alone:
\begin{lstlisting}[style=diff]
+ # Strategy 1: Pure-Axis / Decoupled
+ y_pats = [(1,2),(2,4),(3,6),(1,3),
+           (2,7),(m//3, 2*m//3)]
+ x_pats = [(1,2),(2,4),(3,4),(3,6),
+           (4,8),(2,5),(ell//3, 2*ell//3)]
+ for y1, y2 in y_pats:
+   ...
\end{lstlisting}
\caption{Independent convergence on univariate codes.}
\label{fig:mut2}
\end{figure}

\begin{figure}[tbp]
\textbf{Mutation 3: Over-specialization} (Gen~1$\to$2, $-4\%$).
Over-fitting to one code's exponent ratios reduced coverage, illustrating the exploration--exploitation tension:
\begin{lstlisting}[style=diff]
+ def get_interesting_exponents(limit):
+   steps = [1, 2, 3, 4, 5, 7]
+   return [s%limit for s in steps
+           if s < limit]
\end{lstlisting}
\caption{Over-specialization reduces coverage.}
\label{fig:mut3}
\end{figure}

\section{The $A = B$ distance trap: proof}
\label{app:ab_trap}

\begin{theorem}[Distance trap]
\label{thm:ab_d2}
Every BB code with $A = B$ and $k > 0$ has $d = 2$ exactly.
\end{theorem}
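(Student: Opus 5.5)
The plan is to prove the two inequalities $d \le 2$ and $d \ge 2$ separately, using only the block structure $H_X = (A \;\; A)$ and $H_Z = (A^\top \;\; A^\top)$ that arises when $A = B$. I will work in the ring picture $R = \FF_2[x,y]/(x^\ell{-}1, y^m{-}1)$, identifying a vector of length $2\ell m$ with a pair $(u,v) \in R^2$. Throughout, $A^\top$ denotes the image of $A$ under the involution $x \mapsto x^{-1}$, $y \mapsto y^{-1}$.

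\emph{Upper bound, construction.} For each $i$, consider the weight-2 Z-type operator $z_i = (\mathbf{e}_i, \mathbf{e}_i)$, that is, $u = v = x^a y^b$ for the monomial indexed by $i$. Its X-syndrome is $A u + A v = 2Au = 0$, so every $z_i$ lies in $\ker H_X$. The $z_i$ span the diagonal subspace $\Delta = \{(c,c) : c \in R\}$, which has dimension $\ell m$. The Z-stabilizer space is the row space of $H_Z$. Each of its rows has the form $(r,r)$, so this row space is $\{(c,c) : c \in I\}$, where $I$ is the principal ideal generated by $A^\top$ (equivalently, the column space of the circulant $A^\top$). Hence its dimension is $\mathrm{rank}\,A^\top = \mathrm{rank}\,A$.

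\emph{Upper bound, nontriviality (the main step).} I must show that some $z_i$ is not a stabilizer. Suppose instead that every $z_i$ were a stabilizer. Then $\Delta$ would lie inside the stabilizer space, which forces $I = R$. That makes $A^\top$, and hence $A$, invertible in $R$, so $\mathrm{rank}_{\FF_2} A = \ell m$. Since $H_X = (A \;\; A)$ has the same rank as $A$, we would get $\mathrm{rank}\,H_X = \ell m$ and therefore $k = 2\ell m - 2\ell m = 0$, contradicting $k > 0$. So some weight-2 $z_i$ is a nontrivial Z-logical, and $d \le 2$.

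For the count quoted in the main text, the quotient $\Delta / (\text{stabilizers})$ has dimension $\ell m - \mathrm{rank}\,A = k/2$. Thus the $\ell m$ weight-2 operators contain exactly $k/2$ independent nontrivial logicals.

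\emph{Lower bound.} A weight-1 error of type X, Y, or Z on qubit $j$ is detected whenever the corresponding column of $H_Z$ (for X), of $H_X$ (for Z), or of either (for Y) is nonzero. Every column of the circulant $A$ has weight equal to the number of terms of $A$, and the same holds for $A^\top$. The case $A = 0$ is excluded because the BB construction assumes a nonzero (trinomial, or more generally nonzero) polynomial. Therefore every single-qubit Pauli has a nonzero syndrome, no weight-1 logical exists, and $d \ge 2$. Combining the two bounds gives $d = 2$ exactly.

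The only genuinely non-routine step is the nontriviality argument: it is exactly where the hypothesis $k > 0$ is used, by linking "all diagonal weight-2 operators are stabilizers" to invertibility of $A$. Everything else is direct column bookkeeping.
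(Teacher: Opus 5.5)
Your proof is correct and follows essentially the same route as the paper's: the diagonal weight-2 vectors $(\mathbf{e}_i,\mathbf{e}_i)$ lie in the relevant kernel, $k>0$ forces $\mathrm{rank}\,A<\ell m$ so some of them are not stabilizers, and nonzero columns give $d\ge 2$. The paper checks both the $X$- and $Z$-type sides, though one suffices. There is one harmless slip: the rows of $A^\top$ are the columns of $A$, so the $Z$-stabilizer space is $\{(c,c): c\in\mathrm{colspace}(A)\}$ rather than the column space of $A^\top$; only its dimension $\mathrm{rank}\,A$ enters your argument, so nothing changes.
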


\begin{proof}
For $A = B$, every $X$-stabilizer has the form $(A_r \mid A_r)$, so $\mathrm{rowspace}(H_X)$ is contained in the diagonal subspace $S := \{(\mathbf{w},\mathbf{w}) : \mathbf{w} \in \FF_2^{\ell m}\}$.
The weight-2 vector $\mathbf{v}_i := \mathbf{e}_i + \mathbf{e}_{i+\ell m} = (\mathbf{e}_i, \mathbf{e}_i)$ lies in $S$ and satisfies $H_Z \mathbf{v}_i = A^\top \mathbf{e}_i + A^\top \mathbf{e}_i = \mathbf{0}$, so $\mathbf{v}_i \in \ker(H_Z)$.
It is an $X$-stabilizer iff $\mathbf{e}_i \in \mathrm{rowspace}(A)$, but $k > 0$ forces $\mathrm{rank}(A) < \ell m$, so some $\mathbf{e}_i$ lies outside, yielding a nontrivial weight-2 $X$-logical and $d_X \leq 2$.
The same vector also satisfies $H_X \mathbf{v}_i = \mathbf{0}$ and is a $Z$-stabilizer iff $\mathbf{e}_i \in \mathrm{colspace}(A)$; again $\mathrm{rank}(A) < \ell m$ forces some $\mathbf{e}_i$ outside, giving $d_Z \leq 2$.
Polynomials with $\geq 2$ terms force every column of $H_X, H_Z$ to have weight $\geq 2$, so $d \geq 2$, hence $d = 2$.
\end{proof}

This condition ($A = B$, identical polynomials) is distinct from the antipodal self-duality $B = A^\top$ studied in~\cite{liang2025selfdual}, where self-dual BB codes with $B = A^\top$ can have $d$ as high as~16.
The theorem holds regardless of check weight (any polynomial weight $\geq 2$).

\paragraph{Extension to PBB codes.}
The same argument applies to PBB codes with $A = B$: the X-part of the first stabilizer block is $[A \mid A]$, so the weight-2 Z-type operator $(0|\mathbf{e}_i + \mathbf{e}_{i+\ell m})$ has symplectic inner product zero with every block-1 stabilizer (since the X-part contributes $(A_r)_i + (A_r)_i = 0 \bmod 2$) and with every block-2 stabilizer (which has zero X-part).
These $\ell m$ weight-2 operators lie in the normalizer $N(S)$.
In the CSS case, a direct counting argument shows that $k/2$ of them are nontrivial logicals (Theorem~\ref{thm:ab_d2}).
For PBB codes, the additional Z-content from perturbation polynomials $C$ and $D$ changes the stabilizer group, but the normalizer membership of these weight-2 operators is preserved; whenever any such operator is not itself a stabilizer element---as verified by MILP for all $A = B$ PBB codes in the catalog---$d \leq 2$.
Combined with $d \geq 2$ (from polynomial weight $\geq 2$), this gives $d = 2$.

\section{Encoding dimension of univariate codes}
\label{app:crt}

\begin{theorem}[Univariate encoding dimension]
\label{lem:crt_k}
Let $\ell$ and $m$ be positive integers divisible by~3 and let $c = \ell/3$. The BB code defined by $A(y) = 1{+}y{+}y^2 \in \FF_2[y]/(y^m{-}1)$ and $B(x) = A(x^c) \in \FF_2[x]/(x^\ell{-}1)$ is a hypergraph product code that encodes $k = 8\ell/3$ logical qubits.
\end{theorem}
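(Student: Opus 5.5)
The plan is to make the hypergraph-product structure explicit, apply the HGP dimension formula $k = k_1k_2 + k_1^\top k_2^\top$ of~\cite{tillich2014quantum}, and reduce each classical dimension to the degree of a polynomial gcd.

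\emph{Step 1: the HGP form.} Index qubits in each block by $(a,b) \in \ZZ_\ell \times \ZZ_m$, with $x$ shifting $a$ and $y$ shifting $b$. Write $S_\ell$ and $S_m$ for the cyclic shift matrices, and define the circulants $M_A := A(S_m)$ (size $m\times m$) and $M_B := B(S_\ell)$ (size $\ell\times\ell$). Since $A$ depends only on $y$ and $B$ only on $x$, the $\ell m\times \ell m$ matrices in the BB definition are Kronecker products:
\begin{equation}
A = I_\ell \otimes M_A, \qquad B = M_B \otimes I_m .
\end{equation}
Hence $H_X = (I_\ell\otimes M_A \mid M_B\otimes I_m)$ and $H_Z = (M_B^\top\otimes I_m \mid I_\ell\otimes M_A^\top)$. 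This is exactly the hypergraph product of the classical codes with check matrices $M_B$ and $M_A$. I expect the only care needed here to be matching the block and transpose conventions with the HGP convention (it corresponds to $\mathrm{HGP}(H_B, H_A^\top)$ as stated in the main text); otherwise this step is bookkeeping. Given the HGP form,
\begin{equation}
k = k_1k_2 + k_1^\top k_2^\top,
\end{equation}
where $k_1 = \dim\ker M_B$, $k_2 = \dim\ker M_A$, and $k_i^\top$ is the dimension of the kernel of the transpose. Both matrices are square, so $k_i^\top = k_i$ by rank--nullity, and therefore $k = 2k_1k_2$.

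\emph{Step 2: kernels of circulants.} I will use the standard fact that for $f\in\FF_2[z]/(z^n-1)$, the circulant $f(S_n)$ has rank $n - \deg\gcd(f, z^n-1)$. This holds because its row space is the ideal generated by $f$, which equals the ideal generated by $g=\gcd(f,z^n-1)$, and that ideal has dimension $n-\deg g$. Consequently $\dim\ker f(S_n) = \deg\gcd(f,z^n-1)$.

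\emph{Step 3: the divisibilities.} For $A$: since $y^3-1=(y-1)(1+y+y^2)$ and $3\mid m$, we have $1+y+y^2 \mid y^3-1 \mid y^m-1$. So $\gcd(A,\,y^m-1)=A$ and $k_2=2$. For $B$: $x^\ell-1 = x^{3c}-1 = (x^c-1)(1+x^c+x^{2c})$, so $B\mid x^\ell-1$, the gcd is $B$ itself, and $k_1=\deg B = 2c = 2\ell/3$.

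Combining the three steps gives $k = 2\cdot\tfrac{2\ell}{3}\cdot 2 = 8\ell/3$, independent of $m$. As a cross-check, one can verify the same count directly from $k = 2\ell m - 2\,\mathrm{rank}(H_X)$. The right kernel of $H_X^\top$ consists of pairs $(u,v)$ with $(I\otimes M_A^\top)u=0$ and $(M_B^\top\otimes I)u=0$, so it is isomorphic to $\ker M_B^\top\otimes\ker M_A^\top$, of dimension $k_1k_2=4\ell/3$. Therefore $\mathrm{rank}(H_X)=\ell m-4\ell/3$, and $k = 8\ell/3$ again.
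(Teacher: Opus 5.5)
Your proposal is correct and follows essentially the same route as the paper: the Kronecker-product form identified with $\mathrm{HGP}(H_B,H_A^\top)$, the count $k=k_1k_2+k_1^\top k_2^\top$ with $k_i=k_i^\top$ for square circulants, $\dim\ker=\deg\gcd$, and the divisibilities $A\mid y^m-1$, $B\mid x^\ell-1$. Your cross-check via $\ker H_X^\top=\ker M_B^\top\otimes\ker M_A^\top$ is a valid self-contained confirmation that bypasses the HGP formula; note only that this kernel consists of single vectors $u\in\FF_2^{\ell m}$, not pairs $(u,v)$.
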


\begin{proof}
Let $H_A\in {\mathbb F}_2^{m\times m}$ and $H_B\in {\mathbb F}_2^{\ell\times\ell}$ be the circulant matrices associated with $A(y)$ and $B(x)$, respectively. The check matrices of the bivariate bicycle code are
\begin{equation}
H_X = \left(\begin{array}{c|c}
I_\ell\otimes H_A & H_B\otimes I_m
\end{array}\right)
\end{equation}
and
\begin{equation}
H_Z = \left(\begin{array}{c|c}
H_B^T\otimes I_m & I_\ell\otimes H_A^T
\end{array}\right).
\end{equation}
The hypergraph product code $HGP(H_1,H_2)$ with $H_1\in {\mathbb F}_2^{r_1\times n_1}$ and $H_2\in {\mathbb F}_2^{r_2\times n_2}$ has check matrices
\begin{equation}
H_X^{HGP} = \left(\begin{array}{c|c}
H_1\otimes I_{n_2} & I_{r_1}\otimes H_2^T
\end{array}\right)
\end{equation}
and
\begin{equation}
H_Z^{HGP} = \left(\begin{array}{c|c}
I_{n_1}\otimes H_2 & H_1^T\otimes I_{r_2}
\end{array}\right).
\end{equation}
Choosing $H_1=H_B$ and $H_2=H_A^T$ with $r_1=n_1=\ell$ and $r_2=n_2=m$ produces a HGP code that is equivalent to the BB code with the left and right blocks of qubits swapped. Therefore $BB(A,B)$ is equivalent to $HGP(H_B,H_A^T)$. For binary parity check matrices $H_1$, $H_2$ of classical codes encoding $k_1$, $k_2$ bits, respectively, and whose transposes are classical codes encoding $k_1^T$, $k_2^T$ bits, the hypergraph product code $HGP(H_1,H_2)$ encodes $k_1k_2+k_1^Tk_2^T$ qubits \cite{tillich2014quantum}. For an $N$ by $N$ circulant matrix $H$ corresponding to the polynomial $f(z)\in {\mathbb F}_2[z]/(z^N-1)$, a standard cyclic-code identity~\cite{macwilliams1977theory} gives
\begin{equation}
\textrm{dim}\ \textrm{ker}\ H = \textrm{deg}\ \textrm{gcd}(f, z^N-1).
\end{equation}
The transpose of a square matrix $H$ has the same kernel dimension as $H$. Now, $(y-1)A(y)=y^3-1$ and $y^3-1$ divides $y^m-1$ whenever $m$ is divisible by 3, so $A(y)$ divides $y^m-1$. Therefore, $k_A=k_A^T=\textrm{deg}\ A=2$. Likewise, $(x^c-1)B(x)=x^\ell-1$, so $B(x)$ divides $x^\ell-1$ and we find $k_B=k_B^T=\textrm{deg}\ B=2c=2\ell/3$. Putting this together, we conclude that
\begin{align}
k & = k_1k_2+k_1^Tk_2^T = k_Bk_A^T + k_B^Tk_A \\
& = 2k_Bk_A^T = 2\cdot \frac{2\ell}{3}\cdot 2 = 8\ell/3.  %
\end{align}
\end{proof}

\section{Local Clifford equivalence of PBB codes}
\label{app:lc}

A natural question is whether the 368 non-CSS PBB codes (full catalog in the Supplemental Material) are merely CSS codes in disguise---that is, equivalent to a CSS code via per-qubit single-qubit Clifford conjugation (local Clifford, or LC, equivalence).

Throughout this appendix, \emph{block~1} refers to the first $\ell m$ qubits (indices $1,\dots,\ell m$, the columns of $A$ in $H$) and \emph{block~2} to the second $\ell m$ qubits (indices $\ell m{+}1,\dots,2\ell m$, the columns of $B$); a per-qubit Clifford assignment is a tuple of $2\ell m$ single-qubit Cliffords, one per qubit, and ``uniform per block'' means the assignment is constant on each of these two index sets.

\paragraph{Group-CSS rank condition.}
For each PBB code, we test whether any per-qubit Clifford assignment renders the stabilizer group CSS---in the sense of~\cite{calderbank1996good,steane1996multiple}.
We do this by first applying the corresponding product of single-qubit Cliffords to the stabilizer generators, and then testing whether the resulting stabilizer group $\mathcal{S}$ is CSS.
We are aided by the fact that $\mathcal{S}$ is CSS iff $\operatorname{rank}[X \mid Z] = \operatorname{rank} X + \operatorname{rank} Z$ over~$\mathrm{GF}(2)$; this is Lemma~7.4 of~\cite{cross2025small} (attributed there to T.~Yoder), to which we refer for the proof.
Cross and Vandeth additionally remark in their~\S7.2 that no fast test for LC-CSS-equivalence is currently known, which is why our procedure below must enumerate Clifford patterns explicitly and admits the residual gaps itemized in ``Coverage and gaps.''

\paragraph{Reduction to 6 Clifford representatives.}
The single-qubit Clifford group has 24 elements but only 6 cosets modulo the single-qubit Pauli group. Pauli conjugation flips signs of stabilizer generators without changing their X- and Z-supports, and whether the stabilizer \emph{group} is CSS depends only on the symplectic subspace it spans, not on those signs. It therefore suffices to check one representative per coset; we use $\{I, S, H, HS, SH, HSH\}$, the same reduction as in~\cite[\S7.2]{cross2025small}.

\paragraph{Hadamard 2-coloring.}
The check at Sec.~\ref{sec:pbb} that a stabilizer code is Hadamard-equivalent to CSS at the level of its supplied generators is decided by a parity 2-coloring whose correctness we sketch here (a similar 2-coloring was used by~\cite{khesin2026mirror} in the mirror-code setting; the derivation below is given directly for general stabilizer codes).
Write each Pauli factor on qubit $j$ as a pair $(x_j, z_j) \in \FF_2^2$ (so $I = (0,0)$, $X = (1,0)$, $Z = (0,1)$, $Y = (1,1)$); conjugation by $H$ on qubit $j$ swaps coordinates, $(x_j, z_j) \mapsto (z_j, x_j)$, and sends $Y \to -Y$ (the symplectic representation of $Y$ is invariant).
For $J \subseteq \{1,\dots,n\}$ let $s_j = \mathbb{1}[j \in J]$ and let $c_r \in \{0,1\}$ encode the desired post-conjugation type of generator $g_r$ ($c_r = 1$ pure-$X$, $c_r = 0$ pure-$Z$).
\emph{Y obstruction:} if any $g_r$ has Y at qubit $j$ both coordinates are set; conjugation by any $H_J$ leaves $g_r$ mixed.
\emph{No Y, parity 2-coloring:} on each $j \in \mathrm{supp}(g_r)$ the local Pauli is X or Z; let $t_{rj} \in \{0,1\}$ denote which.
The post-conjugation X-bit at $(r,j)$ is $t_{rj} \oplus s_j$, and demanding it equal $c_r$ on every $j \in \mathrm{supp}(g_r)$ gives $s_j = c_r \oplus t_{rj}$.
For two generators acting on the same qubit $j$ the right-hand sides must agree, yielding $c_{r_1} \oplus c_{r_2} = t_{r_1, j} \oplus t_{r_2, j}$---equal Pauli types force equal colors, opposite types force different colors.
Solvability is a 2-coloring problem decidable in linear time by union-find with parity.
The decision is sound and complete \emph{at the level of the supplied generators}: a True result exhibits an explicit $H_J$ pattern that conjugates each generator into pure-X or pure-Z form (a fortiori making the stabilizer group CSS); a False result rules out any $H_J$ that achieves this for the supplied generators.
The strictly broader question of whether the stabilizer \emph{group} can be made CSS under a different (row-reduced) generating set after some $H_J$ is decoupled from the parity 2-coloring; sub-classes of that question are addressed by the rank-condition machinery above and by the non-uniform $\{I,S\}$/$\{H,HS\}$ enumeration that follows, with the residual gaps itemized in the ``Coverage and gaps'' paragraph below.

\paragraph{Verification procedure.}
Our procedure replaces the exponential $6^n$ Clifford enumeration that would otherwise be required to decide LC-CSS-equivalence by direct application of Lemma~7.4 to every conjugated group~\cite[\S7.2]{cross2025small} with polynomial-time tests over three structured Clifford families: the Hadamard 2-coloring of the previous paragraph (covering non-uniform $\{I,H\}$) and the affine $\mathrm{GF}(2)$ systems of Step~2 below (covering non-uniform $\{I,S\}$ and $\{H,HS\}$), supplemented by the constant-cost enumeration of the 36 uniform per-block assignments in Step~1.
The verification has two parts:

\begin{enumerate}
\item \textbf{Uniform per-block assignments.}
For each of the $6 \times 6 = 36$ uniform per-block assignments (applying one of $\{I, S, H, HS, SH, HSH\}$ to every qubit in block~1 and another to every qubit in block~2), we directly transform the stabilizer matrix and check the rank condition above.

\item \textbf{Non-uniform $\{I,S\}$ and $\{H,HS\}$ assignments.}
Within the $\{I,S\}$ macro-class, each qubit~$j$ receives $S^{s_j}$ with $s_j \in \{0,1\}$, giving a per-qubit pattern $(\mathbf{s}_1, \mathbf{s}_2) \in \FF_2^{2\ell m}$.
The group-CSS condition is affine-linear in this pattern over~$\mathrm{GF}(2)$: for each vector~$w$ in the orthogonal complement of $L = \mathrm{rowspan}([B^\top \mid A^\top])$ and each stabilizer row~$g$, the constraint $w \cdot \mathrm{row}_g = 0$ yields one affine equation in $(\mathbf{s}_1, \mathbf{s}_2)$.
We solve the resulting system exactly via $\mathrm{GF}(2)$ rank comparison and confirm it is either infeasible or admits a uniform solution.
The $\{H,HS\}$ macro-class yields the identical constraint system.
\end{enumerate}

\paragraph{Coverage and gaps.}
Step~1 covers all 6 macro-classes for both blocks under \emph{uniform} assignments. Step~2 covers \emph{non-uniform} patterns within $\{I,S\}$ and within $\{H,HS\}$. The Hadamard 2-coloring check of Sec.~\ref{sec:pbb} additionally covers non-uniform patterns within $\{I,H\}$. Two classes of LC patterns lie outside this combined coverage:
\begin{enumerate}
\item[(a)] Non-uniform patterns within $\{SH, HSH\}$ on block~1. (On block~2, whose stabilizer rows have zero X-part, $SH$ and $HSH$ have identical action on the pure-Z input, so block-2 non-uniformity within this class collapses to a uniform choice already handled by Step~1.)
\item[(b)] Non-uniform patterns whose per-qubit Cliffords are not all drawn from a single one of the three covered families $\{I,S\}$, $\{H,HS\}$, or $\{I,H\}$---e.g., $S$ on some qubits and $H$ on others, or any pattern using $SH$ or $HSH$ alongside other Cliffords.
\end{enumerate}
We do not have a $\mathrm{GF}(2)$ reduction analogous to Step~2 for either gap and have not exhaustively brute-forced them.
These gaps could in principle be eliminated by direct enumeration of all $6^n$ single-qubit Clifford patterns combined with Lemma~7.4~\cite[\S7.2]{cross2025small}; at our scale ($n = 2\ell m \geq 36$, up to~$360$) this is computationally infeasible, motivating the structured polynomial-time tests above.

\paragraph{Results.}
Of the 368 PBB codes, exactly one---the $\code{36,4,6}$ code ($\FOM = 4.0$)---passes the uniform check with the $S$-gate on both blocks ($s_1 = s_2 = 1$) and is thus LC-equivalent to a CSS code via uniform $S$.
The remaining 367 codes fail every uniform and non-uniform $\{I,S\}$/$\{H,HS\}$ assignment in the above test.
The Hadamard 2-coloring check of Sec.~\ref{sec:pbb} (covering non-uniform $\{I,H\}$ patterns, which Step~2 does not capture) identifies 10 additional codes as Hadamard-equivalent to CSS; an explicit per-qubit $H$ pattern of weight $n/2$ renders every stabilizer pure-X or pure-Z, verified by direct construction.
In total, 11 of the 368 PBB codes are CSS-equivalent under the LC families we test (10 via non-uniform $H$, 1 via uniform $S$).
The remaining 357 admit no CSS reduction within the tested families; the gaps (a)--(b) above mean we cannot rule out reductions via non-uniform $\{SH,HSH\}$ on block~1 or via cross-class non-uniform patterns, so we describe these 357 as \emph{CSS-inequivalent within the tested local-Clifford families} rather than ``genuinely non-CSS''.

In future campaigns, this LC-CSS check could be integrated into the evolutionary search loop to penalize or discard codes that are merely CSS in disguise, focusing the search on candidates that remain non-CSS within the broadest tested LC family.
A complementary structural classification not pursued here is GF(4)-linearity, an LC-invariant decided by testing whether $R^{\otimes n}$ (with $R = HS$) sends the stabilizer group to itself~\cite[Lemma~7.5, Cor.~7.6]{cross2025small}; computing this on the 357 CSS-inequivalent codes would partition them into GF(4)-linear and non-GF(4)-linear sub-families, providing a finer structural view (cf.~Cross--Vandeth Lemma~7.7, which exhibits a non-CSS GF(4)-linear group LC-equivalent to a CSS one---so GF(4)-linearity is not a substitute for the LC-CSS test but a complement to it).
The verification script is included in the supplementary material (\texttt{evaluation/clifford\_equivalence.py}).

\bibliography{references}

\end{document}


\title{Supplemental Material:\\Evolutionary Discovery of Bivariate Bicycle Codes\\with LLM-Guided Search}

\author{Juan Cruz-Benito}
\affiliation{IBM Research, IBM T. J. Watson Research Center, NY, USA}
\author{Andrew W. Cross}
\affiliation{IBM Research, IBM T. J. Watson Research Center, NY, USA}
\author{David Kremer}
\affiliation{IBM Research, IBM T. J. Watson Research Center, NY, USA}
\author{Ismael Faro}
\affiliation{IBM Research, IBM T. J. Watson Research Center, NY, USA}

\date{\today}

\maketitle

\tableofcontents

This Supplemental Material provides technical details, complete data tables, and extended analyses supporting the main text.
Section references to the main text use the format ``main text Sec.~X\,Y'' and table references use ``main text Table~N.''

\section{Complete CSS code catalog}
\label{sm:css_catalog}

Tables~\ref{tab:cat144}--\ref{tab:cat360} list all 225 Calderbank--Shor--Steane (CSS)~\cite{calderbank1996good,steane1996multiple} polynomial representations (corresponding to 97 distinct codes after BLISS Tanner-graph deduplication~\cite{junttila2007engineering}), sorted by $\FOM = kd^2/n$ within each block length, where $d$ is the retained verified distance---MILP exact for Campaigns~1--3, incumbent upper bounds for some Campaign~4 codes, and the tighter BP-OSD bound for the two marked $n=360$ rows (main text Sec.~V\,B).
Within each table, codes sharing the same equivalence class (column ``Cl.'') are different polynomial representations of the same code.
The deduplication collapses $n = 144$ from 50 to 16 distinct codes, $n = 288$ from 98 to 49, and $n = 360$ from 77 to 34 (99 total; the 97 reported as ``distinct codes'' excludes two Bravyi et al.\ codes for which Campaign~4 found additional polynomial representations).

The three rightmost decoder columns report 150k-trial BP-OSD upper bounds per decoder configuration; bold entries mark values that match the retained distance~$d$ (for exact rows, BP-OSD found the true minimum-weight logical; for incumbent rows, it matched the current upper bound).
Rows with no bold entries indicate codes where all three decoders exceeded the retained distance bound.

Pattern abbreviations: UV = univariate ($A = f(y)$, $B = g(x)$, or vice versa; HGP of low-distance cyclic codes), XY = $x/y$-swap, SD = $A = B$ (identical polynomials; not the standard $C \subseteq C^\perp$ self-duality), MX = mixed-monomial (terms with both $x$ and $y$), DM = diagonal-mixed ($1 + x^a y^a + x^b y^b$), O = other.
Campaign abbreviations: F = Campaign~1 (Flash), E = Campaigns~2--3 (Ensemble), A = Campaign~4 (Ansatz); ``E,A'' or ``F,A'' indicates independent discovery by both.

\textbf{Check weight.}
Campaigns~1--3 (F, E) produce trinomial polynomials exclusively, yielding weight-6 stabilizers.
Campaign~4 (A) codes have varying check weights depending on the number of polynomial terms: the stabilizer weight equals the sum of terms in~$A$ and~$B$ (e.g., 3+3 = weight-6, 4+4 = weight-8, 5+5 = weight-10).
In the tables below, Campaign~4 rows with $\geq 4$-term polynomials have weight~$> 6$; the polynomial columns make the term count explicit.


\begin{table*}[t]
\centering
\caption{All verified codes at $n = 144$ (50 polynomial representations, 16 distinct codes), sorted by $\FOM = kd^2/n$ descending.
$d$ and $\FOM$ use MILP exact distances.
Decoder columns: 150k-trial BP-OSD upper bounds; bold = matched $d_{\text{MILP}}$.
Cl.: BLISS equivalence class; rows sharing the same label are different polynomial representations of the same code.
Campaign~A codes have varying check weights (see check-weight note in Sec.~\ref{sm:css_catalog}).
Prior record: $\code{144,12,12}$, $\FOM = 12.0$~\protect\cite{bravyi2024high}.}
\label{tab:cat144}
\scriptsize
\begin{tabular*}{\textwidth}{@{\extracolsep{\fill}}clllcccccclc}
\toprule
Cl.\ & $(\ell,m)$ & $A(x,y)$ & $B(x,y)$ & $k$ & $d$ & $\FOM$ & $d_{\text{OSD}_0}$ & $d_{\text{CS/sp}}$ & $d_{\text{CS/ms}}$ & Pat.\ & Camp.\ \\
\midrule
J & $(12,6)$ & $y{+}y^{2}{+}x^{3}$ & $y^{3}{+}x{+}x^{2}$ & 12 & 12 & 12.0 & \textbf{12} & \textbf{12} & \textbf{12} & XY & A \\
J & $(6,12)$ & $y{+}y^{2}{+}x^{3}$ & $y^{3}{+}x{+}x^{2}$ & 12 & 12 & 12.0 & \textbf{12} & \textbf{12} & \textbf{12} & XY & A \\
J & $(12,6)$ & $1{+}y{+}x^{3}y^{2}$ & $1{+}xy^{2}{+}x^{2}y$ & 12 & 12 & 12.0 & \textbf{12} & \textbf{12} & \textbf{12} & MX & A \\
J & $(12,6)$ & $1{+}x^{3}y{+}x^{3}y^{2}$ & $1{+}xy^{2}{+}x^{2}y$ & 12 & 12 & 12.0 & \textbf{12} & \textbf{12} & \textbf{12} & MX & A \\
J & $(6,12)$ & $1{+}xy^{3}{+}x^{2}y^{3}$ & $1{+}xy^{2}{+}x^{2}y$ & 12 & 12 & 12.0 & \textbf{12} & \textbf{12} & \textbf{12} & MX & A \\
J & $(6,12)$ & $1{+}x{+}x^{2}y^{3}$ & $1{+}xy^{2}{+}x^{2}y$ & 12 & 12 & 12.0 & \textbf{12} & \textbf{12} & \textbf{12} & MX & A \\
K & $(12,6)$ & $y{+}y^{2}{+}x^{3}$ & $y^{3}{+}x{+}xy^{3}{+}x^{2}$ & 8 & 12 & 8.0 & \textbf{12} & \textbf{12} & \textbf{12} & MX & A \\
L & $(12,6)$ & $1{+}xy^{2}{+}xy^{3}$ & $1{+}x^{2}y^{3}{+}x^{3}y^{2}$ & 8 & 12 & 8.0 & 16 & \textbf{12} & \textbf{12} & MX & A \\
M & $(6,12)$ & $1{+}xy{+}x^{5}y^{5}$ & $1{+}xy^{11}{+}x^{5}y^{7}$ & 16 & 8 & 7.1 & \textbf{8} & \textbf{8} & \textbf{8} & MX & A \\
M & $(12,6)$ & $1{+}xy{+}x^{5}y^{5}$ & $1{+}x^{7}y^{5}{+}x^{11}y$ & 16 & 8 & 7.1 & \textbf{8} & \textbf{8} & \textbf{8} & MX & A \\
M & $(6,12)$ & $1{+}x^{4}y^{4}{+}x^{5}y^{5}$ & $1{+}x^{4}y^{8}{+}x^{5}y^{7}$ & 16 & 8 & 7.1 & \textbf{8} & \textbf{8} & \textbf{8} & MX & A \\
M & $(6,12)$ & $1{+}xy^{5}{+}x^{5}y$ & $1{+}xy^{7}{+}x^{5}y^{11}$ & 16 & 8 & 7.1 & \textbf{8} & \textbf{8} & \textbf{8} & MX & A \\
N & $(6,12)$ & $1{+}y{+}y^{11}{+}x{+}x^{5}$ & $1{+}y^{2}{+}y^{10}{+}x{+}x^{5}$ & 16 & 8 & 7.1 & 14 & \textbf{8} & 12 & XY & A \\
A & $(12,6)$ & $y{+}y^{2}{+}x^{6}$ & $y^{3}{+}x^{2}{+}x^{4}$ & 24 & 6 & 6.0 & 12 & 12 & 12 & XY & E \\
A & $(6,12)$ & $y^{2}{+}y^{4}{+}x^{3}$ & $y^{6}{+}x^{4}{+}x^{5}$ & 24 & 6 & 6.0 & 10 & 10 & 10 & XY & E \\
A & $(6,12)$ & $y^{2}{+}y^{4}{+}x^{3}$ & $y^{6}{+}x{+}x^{2}$ & 24 & 6 & 6.0 & 12 & 8 & 10 & XY & F \\
A & $(12,6)$ & $y^{4}{+}y^{5}{+}x^{6}$ & $y^{3}{+}x^{2}{+}x^{4}$ & 24 & 6 & 6.0 & 12 & \textbf{6} & 10 & XY & E \\
B & $(6,12)$ & $y^{6}{+}y^{8}{+}x$ & $y^{2}{+}x^{3}{+}x^{4}$ & 16 & 6 & 4.0 & 10 & \textbf{6} & \textbf{6} & XY & E \\
C & $(12,6)$ & $y^{3}{+}y^{4}{+}x^{2}$ & $y^{3}{+}x^{2}{+}x^{4}$ & 16 & 6 & 4.0 & \textbf{6} & \textbf{6} & \textbf{6} & XY & E \\
D & $(6,12)$ & $1{+}y^{2}{+}y^{4}$ & $y^{6}{+}x^{2}{+}x^{4}$ & 32 & 4 & 3.6 & 16 & 10 & 12 & Other & E \\
E & $(6,12)$ & $y^{5}{+}y^{10}{+}x^{3}$ & $y^{6}{+}x^{2}{+}x^{4}$ & 16 & 4 & 1.8 & \textbf{4} & \textbf{4} & \textbf{4} & XY & E \\
E & $(6,12)$ & $y{+}y^{2}{+}x^{3}$ & $y^{6}{+}x^{2}{+}x^{4}$ & 16 & 4 & 1.8 & \textbf{4} & \textbf{4} & \textbf{4} & XY & E \\
F & $(6,12)$ & $y^{4}{+}y^{8}{+}x^{3}$ & $y^{6}{+}x{+}x^{2}$ & 16 & 4 & 1.8 & \textbf{4} & \textbf{4} & \textbf{4} & XY & E \\
F & $(12,6)$ & $y^{2}{+}y^{4}{+}x^{6}$ & $y^{3}{+}x^{2}{+}x^{4}$ & 16 & 4 & 1.8 & \textbf{4} & \textbf{4} & \textbf{4} & XY & E \\
G & $(12,6)$ & $1{+}y^{2}{+}x^{4}$ & $1{+}y^{2}{+}x^{4}$ & 32 & 2 & 0.9 & 16 & 14 & 14 & SD & E \\
G & $(6,12)$ & $1{+}y^{4}{+}x^{2}$ & $1{+}y^{4}{+}x^{2}$ & 32 & 2 & 0.9 & 14 & 14 & 14 & SD & E \\
H & $(12,6)$ & $1{+}y{+}y^{5}$ & $1{+}x^{4}{+}x^{8}$ & 32 & 2 & 0.9 & 12 & 12 & 12 & UV & E \\
G & $(6,12)$ & $1{+}y^{8}{+}x^{4}$ & $1{+}y^{8}{+}x^{4}$ & 32 & 2 & 0.9 & 16 & 10 & 12 & SD & E \\
H & $(12,6)$ & $1{+}y^{4}{+}y^{5}$ & $1{+}x^{4}{+}x^{8}$ & 32 & 2 & 0.9 & 12 & 10 & 10 & UV & E \\
H & $(12,6)$ & $1{+}x^{4}{+}x^{8}$ & $1{+}y{+}y^{2}$ & 32 & 2 & 0.9 & 10 & 10 & 12 & UV & E \\
G & $(12,6)$ & $1{+}y^{4}{+}x^{4}$ & $1{+}y^{4}{+}x^{4}$ & 32 & 2 & 0.9 & 12 & 12 & 10 & SD & E \\
G & $(6,12)$ & $1{+}y^{4}{+}x^{4}$ & $1{+}y^{4}{+}x^{4}$ & 32 & 2 & 0.9 & 18 & 12 & 10 & SD & E \\
H & $(6,12)$ & $1{+}y^{4}{+}y^{8}$ & $1{+}x^{4}{+}x^{5}$ & 32 & 2 & 0.9 & 10 & 12 & 10 & UV & E \\
H & $(12,6)$ & $1{+}y^{2}{+}y^{4}$ & $1{+}x^{2}{+}x^{10}$ & 32 & 2 & 0.9 & 12 & 10 & 10 & UV & E \\
G & $(6,12)$ & $1{+}y^{8}{+}x^{2}$ & $1{+}y^{8}{+}x^{2}$ & 32 & 2 & 0.9 & 14 & 14 & 8 & SD & E \\
H & $(6,12)$ & $1{+}y^{4}{+}y^{8}$ & $1{+}x{+}x^{2}$ & 32 & 2 & 0.9 & 10 & 8 & 10 & UV & E \\
H & $(12,6)$ & $1{+}x^{2}{+}x^{4}$ & $1{+}y^{2}{+}y^{4}$ & 32 & 2 & 0.9 & 14 & 8 & 10 & UV & E \\
H & $(6,12)$ & $1{+}y^{2}{+}y^{10}$ & $1{+}x^{2}{+}x^{4}$ & 32 & 2 & 0.9 & 10 & 10 & 8 & UV & E \\
H & $(6,12)$ & $1{+}y^{4}{+}y^{8}$ & $1{+}x{+}x^{5}$ & 32 & 2 & 0.9 & 12 & 8 & 10 & UV & E \\
H & $(6,12)$ & $1{+}y^{2}{+}y^{4}$ & $1{+}x^{2}{+}x^{4}$ & 32 & 2 & 0.9 & 8 & 8 & 12 & UV & E \\
H & $(12,6)$ & $1{+}y^{2}{+}y^{4}$ & $1{+}x^{8}{+}x^{10}$ & 32 & 2 & 0.9 & 12 & 10 & 6 & UV & F \\
H & $(12,6)$ & $1{+}y^{2}{+}y^{4}$ & $1{+}x^{2}{+}x^{4}$ & 32 & 2 & 0.9 & 12 & 12 & 6 & UV & E \\
H & $(12,6)$ & $1{+}y{+}y^{2}$ & $1{+}x^{4}{+}x^{8}$ & 32 & 2 & 0.9 & 12 & 10 & 6 & UV & E \\
H & $(6,12)$ & $1{+}y^{8}{+}y^{10}$ & $1{+}x^{2}{+}x^{4}$ & 32 & 2 & 0.9 & 14 & 8 & 6 & UV & E \\
I & $(12,6)$ & $y^{2}{+}x{+}x^{2}y$ & $y^{2}{+}x{+}x^{2}y$ & 24 & 2 & 0.7 & 6 & 6 & 4 & SD & E \\
\midrule
O & $(12,6)$ & $1{+}y^{3}{+}x^{3}{+}x^{3}y^{3}$ & $1{+}y^{3}{+}x^{3}{+}x^{9}y^{3}$ & 54 & 4 & 6.0 & 26 & 24 & 22 & MX & A \\
O & $(12,6)$ & $1{+}y^{3}{+}x^{3}{+}x^{3}y^{3}$ & $1{+}y^{3}{+}x^{3}y^{3}{+}x^{9}$ & 54 & 4 & 6.0 & 24 & 24 & 24 & DM & A \\
O & $(12,6)$ & $1{+}y^{3}{+}x^{3}{+}x^{3}y^{3}$ & $1{+}y^{3}{+}x^{3}y^{3}{+}x^{9}y^{3}$ & 54 & 4 & 6.0 & 28 & 24 & 24 & MX & A \\
P & $(6,12)$ & $1{+}y{+}y^{11}{+}x{+}x^{5}$ & $1{+}y^{2}{+}y^{10}{+}x^{2}{+}x^{4}$ & 24 & 6 & 6.0 & 10 & 12 & 8 & XY & A \\
P & $(12,6)$ & $1{+}y{+}y^{5}{+}x{+}x^{11}$ & $1{+}y^{2}{+}y^{4}{+}x^{2}{+}x^{10}$ & 24 & 6 & 6.0 & 12 & 8 & 12 & XY & A \\
\bottomrule
\end{tabular*}
\end{table*}

\begin{table*}[t]
\centering
\caption{All verified codes at $n = 288$ (98 polynomial representations, 49 distinct codes), sorted by $\FOM = kd^2/n$ descending.
$d$ and $\FOM$ use MILP exact distances.
Decoder columns: 150k-trial BP-OSD upper bounds; bold = matched $d_{\text{MILP}}$.
Cl.: BLISS equivalence class.
Campaign~A codes have varying check weights (see check-weight note in Sec.~\ref{sm:css_catalog}).
Prior record: $\code{288,12,18}$, $\FOM = 13.5$~\protect\cite{bravyi2024high}.}
\label{tab:cat288}
\fontsize{4pt}{4.5pt}\selectfont
\renewcommand{\arraystretch}{0.65}
\setlength{\tabcolsep}{0.8pt}
\begin{tabular*}{\textwidth}{@{\extracolsep{\fill}}clllcccccclc}
\toprule
Cl.\ & $(\ell,m)$ & $A(x,y)$ & $B(x,y)$ & $k$ & $d$ & $\FOM$ & $d_{\text{OSD}_0}$ & $d_{\text{CS/sp}}$ & $d_{\text{CS/ms}}$ & Pat.\ & Camp.\ \\
\midrule
A & $(12,12)$ & $y{+}y^{2}{+}x^{6}$ & $y^{3}{+}x^{2}{+}x^{4}$ & 24 & 12 & 12.0 & 24 & 18 & 18 & XY & E,A \\
A & $(24,6)$ & $y{+}y^{2}{+}x^{6}$ & $y^{3}{+}x^{2}{+}x^{4}$ & 24 & 12 & 12.0 & 24 & 14 & \textbf{12} & XY & F,A \\
X & $(18,8)$ & $1{+}y^{5}{+}x{+}xy^{5}$ & $1{+}y{+}x^{5}{+}x^{5}y$ & 50 & 8 & 11.1 & 48 & 44 & 34 & MX & A \\
X & $(18,8)$ & $1{+}y{+}x^{5}{+}x^{5}y$ & $1{+}y^{5}{+}x{+}xy^{5}$ & 50 & 8 & 11.1 & 56 & 42 & 34 & MX & A \\
Y & $(18,8)$ & $1{+}xy^{4}{+}x^{14}y$ & $1{+}xy^{2}{+}x^{2}y^{7}$ & 8 & 20 & 11.1 & 38 & 24 & 22 & MX & A \\
Z & $(12,12)$ & $1{+}y^{3}{+}y^{9}{+}x^{2}{+}x^{10}$ & $1{+}y^{2}{+}y^{10}{+}x^{3}{+}x^{9}$ & 48 & 8 & 10.7 & 50 & 50 & 40 & XY & A \\
a & $(12,12)$ & $1{+}y{+}x$ & $y^{3}{+}x^{5}{+}x^{10}$ & 12 & 16 & 10.7 & 24 & 18 & \textbf{16} & XY & A \\
b & $(12,12)$ & $1{+}y{+}x^{5}{+}x^{5}y$ & $1{+}y^{5}{+}x{+}xy^{5}$ & 46 & 8 & 10.2 & 40 & 44 & 36 & MX & A \\
b & $(12,12)$ & $1{+}y^{5}{+}x{+}xy^{5}$ & $1{+}y{+}x^{5}{+}x^{5}y$ & 46 & 8 & 10.2 & 52 & 42 & 36 & MX & A \\
c & $(12,12)$ & $1{+}x^{3}{+}x^{3}y^{2}{+}x^{6}y^{4}$ & $1{+}y{+}x^{3}y^{10}{+}x^{6}y^{8}$ & 18 & 12 & 9.0 & 16 & \textbf{12} & \textbf{12} & MX & A \\
d & $(18,8)$ & $1{+}xy{+}x^{2}y^{5}$ & $1{+}x^{13}y^{2}{+}x^{17}y$ & 8 & 18 & 9.0 & 38 & 22 & \textbf{18} & MX & A \\
d & $(18,8)$ & $1{+}xy^{4}{+}x^{2}y^{5}$ & $1{+}x^{13}y^{2}{+}x^{14}y$ & 8 & 18 & 9.0 & 38 & 22 & \textbf{18} & MX & A \\
d & $(18,8)$ & $1{+}xy{+}x^{5}y^{2}$ & $1{+}x^{16}y^{5}{+}x^{17}y$ & 8 & 18 & 9.0 & 22 & 20 & \textbf{18} & MX & A \\
e & $(16,9)$ & $1{+}y^{2}{+}x{+}xy^{2}$ & $1{+}y^{4}{+}x^{2}{+}x^{2}y$ & 40 & 8 & 8.9 & 45 & 41 & 35 & MX & A \\
B & $(12,12)$ & $y^{9}{+}y^{11}{+}x$ & $y^{5}{+}x{+}x^{3}$ & 16 & 12 & 8.0 & \textbf{12} & \textbf{12} & \textbf{12} & XY & E \\
B & $(12,12)$ & $y{+}y^{3}{+}x$ & $y{+}x^{7}{+}x^{9}$ & 16 & 12 & 8.0 & 16 & \textbf{12} & \textbf{12} & XY & E \\
B & $(12,12)$ & $y{+}y^{3}{+}x$ & $y^{7}{+}x{+}x^{3}$ & 16 & 12 & 8.0 & \textbf{12} & \textbf{12} & \textbf{12} & XY & E \\
C & $(12,12)$ & $y{+}y^{2}{+}x^{3}$ & $y^{3}{+}x{+}x^{2}$ & 16 & 12 & 8.0 & 18 & 14 & \textbf{12} & XY & E \\
C & $(12,12)$ & $y^{2}{+}y^{7}{+}x^{3}$ & $x{+}x^{2}{+}x^{9}y^{3}$ & 16 & 12 & 8.0 & 18 & 18 & \textbf{12} & Other & E \\
D & $(12,12)$ & $y{+}y^{3}{+}x$ & $y^{3}{+}x^{5}{+}x^{7}$ & 16 & 10 & 5.6 & 12 & \textbf{10} & \textbf{10} & XY & E \\
E & $(24,6)$ & $y{+}y^{5}{+}x^{3}$ & $y^{3}{+}x^{7}{+}x^{11}$ & 16 & 10 & 5.6 & 12 & 12 & \textbf{10} & XY & E \\
F & $(12,12)$ & $y^{2}{+}y^{10}{+}x^{3}$ & $y^{6}{+}x{+}x^{11}$ & 32 & 6 & 4.0 & 20 & 26 & 26 & XY & F \\
G & $(12,12)$ & $y^{5}{+}y^{9}{+}x^{5}$ & $y^{5}{+}x^{5}{+}x^{9}$ & 16 & 8 & 3.6 & 10 & 10 & 10 & XY & E \\
H & $(24,6)$ & $y{+}y^{3}{+}x^{2}$ & $y^{2}{+}x^{4}{+}x^{12}$ & 16 & 8 & 3.6 & 10 & \textbf{8} & 10 & XY & E \\
I & $(12,12)$ & $y{+}y^{3}{+}x$ & $y^{2}{+}x^{2}{+}x^{6}$ & 16 & 8 & 3.6 & 10 & \textbf{8} & \textbf{8} & XY & E \\
I & $(24,6)$ & $y^{2}{+}y^{3}{+}x^{2}$ & $y^{4}{+}x^{4}{+}x^{12}$ & 16 & 8 & 3.6 & \textbf{8} & \textbf{8} & \textbf{8} & XY & E \\
J & $(24,6)$ & $y^{3}{+}y^{5}{+}x^{2}$ & $y{+}x^{6}{+}x^{10}$ & 16 & 8 & 3.6 & 10 & 10 & \textbf{8} & XY & E \\
G & $(12,12)$ & $y{+}y^{9}{+}x$ & $y{+}x{+}x^{9}$ & 16 & 8 & 3.6 & 10 & \textbf{8} & \textbf{8} & XY & E \\
K & $(24,6)$ & $y{+}y^{2}{+}x^{6}$ & $y{+}x^{4}{+}x^{6}$ & 16 & 8 & 3.6 & 12 & \textbf{8} & \textbf{8} & XY & E \\
L & $(12,12)$ & $y{+}y^{3}{+}x$ & $y^{5}{+}x^{3}{+}x^{5}$ & 16 & 8 & 3.6 & \textbf{8} & \textbf{8} & \textbf{8} & XY & E \\
M & $(24,6)$ & $y{+}y^{2}{+}x^{6}$ & $y^{3}{+}x^{4}{+}x^{8}$ & 16 & 6 & 2.0 & \textbf{6} & \textbf{6} & \textbf{6} & XY & E \\
N & $(12,12)$ & $y^{2}{+}y^{3}{+}x$ & $y^{6}{+}x^{2}{+}x^{4}$ & 16 & 6 & 2.0 & \textbf{6} & \textbf{6} & \textbf{6} & XY & E \\
O & $(24,6)$ & $y^{2}{+}y^{4}{+}x^{12}$ & $y^{3}{+}x^{10}{+}x^{20}$ & 32 & 4 & 1.8 & 32 & 22 & 22 & XY & E \\
P & $(12,12)$ & $y^{2}{+}y^{6}{+}x^{2}$ & $y^{2}{+}x^{2}{+}x^{6}$ & 32 & 4 & 1.8 & 28 & 20 & 20 & XY & E \\
Q & $(24,6)$ & $y^{2}{+}y^{4}{+}x^{12}$ & $y^{3}{+}x^{4}{+}x^{8}$ & 32 & 4 & 1.8 & 22 & 22 & 18 & XY & E \\
R & $(12,12)$ & $y^{2}{+}y^{6}{+}x^{4}$ & $y^{6}{+}x^{4}{+}x^{8}$ & 32 & 4 & 1.8 & 20 & 22 & 18 & XY & E \\
S & $(24,6)$ & $y^{2}{+}y^{4}{+}x^{12}$ & $1{+}x^{2}{+}x^{4}$ & 32 & 4 & 1.8 & 24 & 20 & 16 & Other & E \\
Q & $(24,6)$ & $y{+}y^{2}{+}x^{12}$ & $y^{3}{+}x^{8}{+}x^{16}$ & 32 & 4 & 1.8 & 28 & 22 & 16 & XY & E \\
O & $(12,12)$ & $y^{4}{+}y^{8}{+}x^{6}$ & $y^{6}{+}x{+}x^{2}$ & 32 & 4 & 1.8 & 30 & 26 & 16 & XY & E \\
T & $(12,12)$ & $1{+}y^{8}{+}y^{10}$ & $1{+}x^{2}{+}x^{4}$ & 32 & 4 & 1.8 & 16 & 16 & 16 & UV & E \\
T & $(24,6)$ & $1{+}y{+}y^{5}$ & $1{+}x^{4}{+}x^{20}$ & 32 & 4 & 1.8 & 16 & 16 & 16 & UV & E \\
T & $(24,6)$ & $1{+}y{+}y^{2}$ & $1{+}x^{4}{+}x^{8}$ & 32 & 4 & 1.8 & 16 & 16 & 16 & UV & E \\
T & $(12,12)$ & $y^{4}{+}y^{8}{+}x^{6}$ & $y^{6}{+}x^{4}{+}x^{8}$ & 32 & 4 & 1.8 & 20 & 16 & 16 & XY & E \\
T & $(24,6)$ & $1{+}y{+}y^{2}$ & $1{+}x^{16}{+}x^{20}$ & 32 & 4 & 1.8 & 16 & 16 & 16 & UV & E \\
T & $(12,12)$ & $1{+}y^{2}{+}y^{4}$ & $1{+}x^{2}{+}x^{4}$ & 32 & 4 & 1.8 & 16 & 16 & 12 & UV & F \\
T & $(12,12)$ & $1{+}y^{2}{+}y^{4}$ & $1{+}x^{8}{+}x^{10}$ & 32 & 4 & 1.8 & 12 & 12 & 12 & UV & E \\
O & $(12,12)$ & $y{+}y^{2}{+}x^{6}$ & $y^{6}{+}x^{4}{+}x^{8}$ & 32 & 4 & 1.8 & 30 & 24 & 12 & XY & E \\
T & $(24,6)$ & $1{+}y{+}y^{5}$ & $1{+}x^{4}{+}x^{8}$ & 32 & 4 & 1.8 & 16 & 12 & 16 & UV & E \\
T & $(12,12)$ & $1{+}y^{8}{+}y^{10}$ & $1{+}x^{8}{+}x^{10}$ & 32 & 4 & 1.8 & 16 & 8 & 12 & UV & E \\
U & $(24,6)$ & $1{+}y^{4}{+}x^{4}$ & $1{+}y^{4}{+}x^{4}$ & 32 & 2 & 0.4 & 18 & 18 & 16 & SD & E \\
V & $(12,12)$ & $1{+}y{+}y^{5}$ & $1{+}x^{4}{+}x^{8}$ & 32 & 2 & 0.4 & 20 & 16 & 14 & UV & E \\
W & $(12,12)$ & $1{+}y^{4}{+}y^{8}$ & $1{+}x^{5}{+}x^{10}$ & 32 & 2 & 0.4 & 16 & 16 & 14 & UV & E \\
W & $(12,12)$ & $1{+}y^{5}{+}y^{10}$ & $1{+}x^{4}{+}x^{8}$ & 32 & 2 & 0.4 & 16 & 14 & 14 & UV & E \\
W & $(12,12)$ & $1{+}y{+}y^{2}$ & $1{+}x^{4}{+}x^{8}$ & 32 & 2 & 0.4 & 12 & 14 & 16 & UV & E \\
U & $(12,12)$ & $1{+}y^{2}{+}x^{4}$ & $1{+}y^{2}{+}x^{4}$ & 32 & 2 & 0.4 & 20 & 12 & 14 & SD & E \\
W & $(12,12)$ & $1{+}y{+}y^{11}$ & $1{+}x^{4}{+}x^{8}$ & 32 & 2 & 0.4 & 18 & 12 & 14 & UV & E \\
W & $(12,12)$ & $1{+}y^{2}{+}y^{7}$ & $1{+}x^{4}{+}x^{8}$ & 32 & 2 & 0.4 & 14 & 14 & 12 & UV & E \\
W & $(24,6)$ & $1{+}y^{2}{+}y^{4}$ & $1{+}x^{20}{+}x^{22}$ & 32 & 2 & 0.4 & 16 & 12 & 16 & UV & E \\
W & $(12,12)$ & $1{+}y^{5}{+}y^{7}$ & $1{+}x^{4}{+}x^{8}$ & 32 & 2 & 0.4 & 14 & 14 & 12 & UV & E \\
W & $(12,12)$ & $1{+}y^{4}{+}y^{8}$ & $1{+}x^{2}{+}x^{7}$ & 32 & 2 & 0.4 & 16 & 12 & 14 & UV & E \\
W & $(24,6)$ & $1{+}y^{2}{+}y^{4}$ & $1{+}x^{2}{+}x^{4}$ & 32 & 2 & 0.4 & 18 & 10 & 14 & UV & F \\
W & $(24,6)$ & $1{+}y^{2}{+}y^{4}$ & $1{+}x^{4}{+}x^{14}$ & 32 & 2 & 0.4 & 16 & 14 & 10 & UV & E \\
W & $(12,12)$ & $1{+}y^{4}{+}y^{8}$ & $1{+}x{+}x^{2}$ & 32 & 2 & 0.4 & 10 & 14 & 16 & UV & E \\
V & $(12,12)$ & $1{+}y^{7}{+}y^{8}$ & $1{+}x^{4}{+}x^{8}$ & 32 & 2 & 0.4 & 16 & 14 & 10 & UV & E \\
V & $(12,12)$ & $1{+}y^{7}{+}y^{11}$ & $1{+}x^{4}{+}x^{8}$ & 32 & 2 & 0.4 & 14 & 16 & 10 & UV & E \\
U & $(24,6)$ & $1{+}y^{2}{+}x^{4}$ & $1{+}y^{2}{+}x^{4}$ & 32 & 2 & 0.4 & 22 & 10 & 12 & SD & E \\
V & $(12,12)$ & $1{+}y^{4}{+}y^{8}$ & $1{+}x{+}x^{5}$ & 32 & 2 & 0.4 & 18 & 10 & 16 & UV & E \\
W & $(12,12)$ & $1{+}y^{10}{+}y^{11}$ & $1{+}x^{4}{+}x^{8}$ & 32 & 2 & 0.4 & 12 & 10 & 14 & UV & E \\
V & $(12,12)$ & $1{+}y{+}y^{8}$ & $1{+}x^{4}{+}x^{8}$ & 32 & 2 & 0.4 & 14 & 10 & 16 & UV & E \\
W & $(24,6)$ & $1{+}y^{2}{+}y^{4}$ & $1{+}x^{10}{+}x^{20}$ & 32 & 2 & 0.4 & 6 & 10 & 10 & UV & E \\
V & $(12,12)$ & $1{+}y^{4}{+}y^{5}$ & $1{+}x^{4}{+}x^{8}$ & 32 & 2 & 0.4 & 6 & 8 & 16 & UV & E \\
\midrule
f & $(16,9)$ & $1{+}y^{2}{+}x{+}xy^{2}$ & $1{+}y^{4}{+}x^{2}{+}x^{2}y$ & 36 & 8 & 8.0 & 44 & 37 & 34 & MX & A \\
g & $(24,6)$ & $y{+}y^{2}{+}x^{6}$ & $y^{3}{+}x^{2}{+}x^{2}y^{3}{+}x^{4}$ & 16 & 12 & 8.0 & \textbf{12} & \textbf{12} & \textbf{12} & MX & A \\
h & $(16,9)$ & $1{+}y^{4}{+}y^{5}{+}x^{2}{+}x^{14}$ & $1{+}y^{2}{+}y^{7}{+}x^{4}{+}x^{12}$ & 16 & 12 & 8.0 & \textbf{12} & \textbf{12} & \textbf{12} & XY & A \\
h & $(16,9)$ & $1{+}y^{2}{+}y^{7}{+}x^{4}{+}x^{12}$ & $1{+}y^{4}{+}y^{5}{+}x^{2}{+}x^{14}$ & 16 & 12 & 8.0 & \textbf{12} & \textbf{12} & \textbf{12} & XY & A \\
i & $(24,6)$ & $1{+}xy{+}x^{5}y^{5}$ & $1{+}x^{19}y^{5}{+}x^{23}y$ & 16 & 12 & 8.0 & 16 & \textbf{12} & \textbf{12} & MX & A \\
j & $(12,12)$ & $1{+}xy^{3}{+}x^{5}y^{3}$ & $1{+}x^{3}y{+}x^{3}y^{5}$ & 32 & 8 & 7.1 & 30 & 20 & 26 & MX & A \\
k & $(18,8)$ & $1{+}xy^{2}{+}x^{5}y$ & $1{+}xy^{5}{+}x^{2}y$ & 8 & 16 & 7.1 & 20 & 20 & \textbf{16} & MX & A \\
k & $(18,8)$ & $1{+}xy^{5}{+}x^{2}y$ & $1{+}xy^{2}{+}x^{5}y$ & 8 & 16 & 7.1 & 44 & 18 & \textbf{16} & MX & A \\
l & $(18,8)$ & $1{+}xy{+}x^{2}y^{3}$ & $1{+}x^{2}y^{4}{+}x^{4}y^{2}$ & 8 & 16 & 7.1 & 24 & \textbf{16} & \textbf{16} & MX & A \\
m & $(18,8)$ & $1{+}xy{+}x^{2}y$ & $1{+}x^{2}y^{4}{+}x^{4}y^{2}$ & 8 & 16 & 7.1 & 20 & 20 & \textbf{16} & MX & A \\
n & $(16,9)$ & $1{+}xy{+}x^{6}y^{6}{+}x^{7}y^{7}$ & $1{+}xy^{8}{+}x^{6}y^{3}{+}x^{7}y^{2}$ & 14 & 12 & 7.0 & 16 & 16 & \textbf{12} & MX & A \\
o & $(18,8)$ & $1{+}xy^{2}{+}x^{3}y^{3}{+}x^{4}y^{5}$ & $1{+}xy^{4}{+}x^{3}y{+}x^{16}y^{3}$ & 14 & 12 & 7.0 & 24 & \textbf{12} & \textbf{12} & MX & A \\
p & $(18,8)$ & $1{+}y^{3}{+}x{+}xy^{3}$ & $1{+}y{+}x^{3}{+}x^{3}y$ & 50 & 6 & 6.2 & 38 & 44 & 38 & MX & A \\
q & $(18,8)$ & $1{+}xy{+}x^{6}y^{6}{+}x^{7}y^{7}$ & $1{+}xy^{7}{+}x^{6}y^{2}{+}x^{7}y$ & 28 & 8 & 6.2 & 28 & 22 & 22 & MX & A \\
r & $(12,12)$ & $1{+}x^{2}y^{4}{+}x^{4}y^{2}$ & $1{+}x^{8}y^{2}{+}x^{10}y^{4}$ & 48 & 6 & 6.0 & 38 & 34 & 32 & MX & A \\
r & $(12,12)$ & $1{+}x^{2}y^{4}{+}x^{4}y^{2}$ & $1{+}x^{2}y^{8}{+}x^{4}y^{10}$ & 48 & 6 & 6.0 & 36 & 36 & 32 & MX & A \\
s & $(24,6)$ & $1{+}y^{2}{+}y^{4}{+}x^{2}{+}x^{22}$ & $1{+}y^{2}{+}y^{4}{+}x^{4}{+}x^{20}$ & 48 & 6 & 6.0 & 46 & 42 & 36 & XY & A \\
t & $(12,12)$ & $1{+}xy^{2}{+}x^{2}y$ & $1{+}x^{10}y{+}x^{11}y^{2}$ & 12 & 12 & 6.0 & 16 & 16 & 16 & MX & A \\
t & $(12,12)$ & $1{+}x{+}x^{2}y^{3}$ & $1{+}xy^{2}{+}x^{2}y$ & 12 & 12 & 6.0 & \textbf{12} & \textbf{12} & \textbf{12} & MX & A \\
t & $(12,12)$ & $1{+}x^{3}y{+}x^{3}y^{2}$ & $1{+}xy^{2}{+}x^{2}y$ & 12 & 12 & 6.0 & 18 & 18 & 16 & MX & A \\
t & $(12,12)$ & $1{+}y{+}x^{3}y^{2}$ & $1{+}xy^{2}{+}x^{2}y$ & 12 & 12 & 6.0 & 18 & 18 & \textbf{12} & MX & A \\
t & $(12,12)$ & $1{+}xy^{3}{+}x^{2}y^{3}$ & $1{+}xy^{2}{+}x^{2}y$ & 12 & 12 & 6.0 & 16 & 14 & 14 & MX & A \\
t & $(12,12)$ & $1{+}xy^{3}{+}x^{2}y^{3}$ & $1{+}x^{3}y{+}x^{9}y^{2}$ & 12 & 12 & 6.0 & 18 & 18 & \textbf{12} & MX & A \\
u & $(24,6)$ & $1{+}xy^{3}{+}x^{2}y^{3}$ & $1{+}x^{3}y{+}x^{21}y^{2}$ & 12 & 12 & 6.0 & 16 & 16 & \textbf{12} & MX & A \\
v & $(12,12)$ & $1{+}x{+}x^{2}y^{3}{+}x^{4}y^{6}$ & $1{+}y^{3}{+}x^{2}y^{9}{+}x^{4}y^{6}$ & 12 & 12 & 6.0 & \textbf{12} & \textbf{12} & \textbf{12} & MX & A \\
w & $(24,6)$ & $1{+}x^{3}y{+}x^{3}y^{2}$ & $1{+}xy^{2}{+}x^{2}y$ & 12 & 12 & 6.0 & \textbf{12} & \textbf{12} & \textbf{12} & MX & A \\
w & $(24,6)$ & $1{+}y{+}x^{3}y^{2}$ & $1{+}xy^{2}{+}x^{2}y$ & 12 & 12 & 6.0 & \textbf{12} & \textbf{12} & \textbf{12} & MX & A \\
\bottomrule
\end{tabular*}
\renewcommand{\arraystretch}{1.0}
\end{table*}

\begin{table*}[t]
\centering
\caption{All verified codes at $n = 360$ (77 polynomial representations, 34 distinct codes), sorted by $\FOM = kd^2/n$ descending.
$d$ and $\FOM$ use MILP distances; for classes U and V, the MILP incumbent exceeded the 300k-trial BP-OSD bound, so $d$ reports the tighter BP-OSD value.
Decoder columns: 150k-trial BP-OSD upper bounds; bold = matched $d_{\text{MILP}}$.
Cl.: BLISS equivalence class; primed labels ($'$) denote Campaign~4 classes.
$^\dagger$Untrusted ($d/\sqrt{n} > 1.2$).
Campaign~A codes have varying check weights (see check-weight note in Sec.~\ref{sm:css_catalog}).
Class~V is a Campaign~4 rediscovery of the Bravyi et al.\ $\code{360,12,{\leq}24}$ (same polynomial pair).
Prior record: $\code{360,12,{\leq}24}$, $\FOM \leq 19.2$~\protect\cite{bravyi2024high}.}
\label{tab:cat360}
\tiny
\begin{tabular*}{\textwidth}{@{\extracolsep{\fill}}clllcccccclc}
\toprule
Cl.\ & $(\ell,m)$ & $A(x,y)$ & $B(x,y)$ & $k$ & $d$ & $\FOM$ & $d_{\text{OSD}_0}$ & $d_{\text{CS/sp}}$ & $d_{\text{CS/ms}}$ & Pat.\ & Camp.\ \\
\midrule
U$^\dagger$ & $(30,6)$ & $y{+}y^{2}{+}x^{3}y{+}x^{9}$ & $y^{3}{+}x^{25}{+}x^{26}$ & 8 & 30 & 20.0 & 30 & 30 & 30 & MX & A \\
V$^\dagger$ & $(30,6)$ & $y{+}y^{2}{+}x^{9}$ & $y^{3}{+}x^{25}{+}x^{26}$ & 12 & 24 & 19.2 & 30 & 30 & 26 & XY & A \\
W$'$ & $(30,6)$ & $1{+}x{+}x^{3}y^{4}{+}x^{6}y^{2}$ & $1{+}y^{2}{+}x^{3}y^{2}{+}x^{6}y^{4}$ & 20 & 14 & 10.9 & 23 & 19 & \textbf{14} & MX & A \\
X$'$ & $(30,6)$ & $y{+}y^{2}{+}x^{3}y^{3}{+}x^{9}$ & $y^{3}{+}x^{25}{+}x^{26}$ & 8 & 22 & 10.8 & 30 & 30 & 31 & MX & A \\
Y$'$ & $(15,12)$ & $1{+}xy{+}x^{4}y^{4}{+}x^{9}y^{9}$ & $1{+}xy^{11}{+}x^{4}y^{8}{+}x^{9}y^{3}$ & 24 & 12 & 9.6 & 48 & 37 & 40 & MX & A \\
A & $(15,12)$ & $y^{2}{+}y^{4}{+}x^{3}$ & $y^{6}{+}x^{7}{+}x^{14}$ & 16 & 14 & 8.7 & 20 & 18 & 16 & XY & E \\
Z$'$ & $(15,12)$ & $1{+}y^{4}{+}x^{3}y^{2}$ & $1{+}x^{2}y^{4}{+}x^{4}y^{2}$ & 16 & 14 & 8.7 & 20 & 16 & \textbf{14} & MX & A \\
A & $(15,12)$ & $y^{2}{+}y^{4}{+}x^{3}$ & $y^{6}{+}x^{2}{+}x^{4}$ & 16 & 14 & 8.7 & 20 & 18 & \textbf{14} & XY & E \\
A & $(30,6)$ & $y{+}y^{2}{+}x^{6}$ & $y^{3}{+}x^{4}{+}x^{8}$ & 16 & 14 & 8.7 & 20 & 18 & \textbf{14} & XY & E \\
B & $(30,6)$ & $y{+}y^{2}{+}x^{6}$ & $y^{3}{+}x^{2}{+}x^{4}$ & 16 & 12 & 6.4 & 20 & \textbf{12} & \textbf{12} & XY & E \\
B & $(15,12)$ & $y^{2}{+}y^{4}{+}x^{3}$ & $y^{6}{+}x{+}x^{2}$ & 16 & 12 & 6.4 & 20 & \textbf{12} & \textbf{12} & XY & E \\
C & $(30,6)$ & $y{+}y^{5}{+}x^{3}$ & $y^{3}{+}x{+}x^{11}$ & 16 & 12 & 6.4 & 18 & \textbf{12} & \textbf{12} & XY & E \\
D & $(15,12)$ & $y^{4}{+}y^{8}{+}x^{3}$ & $y^{6}{+}x{+}x^{2}$ & 16 & 10 & 4.4 & \textbf{10} & \textbf{10} & \textbf{10} & XY & E \\
E & $(15,12)$ & $y^{7}{+}y^{9}{+}x^{5}$ & $y^{3}{+}x^{5}{+}x^{10}$ & 20 & 8 & 3.6 & \textbf{8} & 10 & \textbf{8} & XY & E \\
F & $(30,6)$ & $y{+}y^{3}{+}x^{10}$ & $y^{2}{+}x^{5}{+}x^{15}$ & 20 & 8 & 3.6 & 20 & \textbf{8} & 14 & XY & E \\
G & $(30,6)$ & $y{+}y^{2}{+}x^{15}$ & $1{+}y{+}x^{10}$ & 20 & 6 & 2.0 & 12 & 10 & \textbf{6} & Other & E \\
G & $(30,6)$ & $y{+}y^{2}{+}x^{15}$ & $y{+}x^{5}{+}x^{15}$ & 20 & 6 & 2.0 & \textbf{6} & \textbf{6} & \textbf{6} & XY & E \\
H & $(30,6)$ & $1{+}y{+}y^{2}$ & $1{+}x^{5}{+}x^{25}$ & 40 & 4 & 1.8 & 28 & 28 & 24 & UV & E \\
H & $(30,6)$ & $1{+}y{+}y^{2}$ & $1{+}x^{5}{+}x^{10}$ & 40 & 4 & 1.8 & 24 & 24 & 20 & UV & E \\
H & $(30,6)$ & $1{+}y{+}y^{2}$ & $1{+}x^{20}{+}x^{25}$ & 40 & 4 & 1.8 & 20 & 24 & 20 & UV & E \\
H & $(30,6)$ & $1{+}y{+}y^{5}$ & $1{+}x^{5}{+}x^{10}$ & 40 & 4 & 1.8 & 28 & 20 & 24 & UV & E \\
H & $(30,6)$ & $1{+}y^{4}{+}y^{5}$ & $1{+}x^{5}{+}x^{10}$ & 40 & 4 & 1.8 & 24 & 24 & 20 & UV & E \\
H & $(30,6)$ & $1{+}y^{4}{+}y^{5}$ & $1{+}x^{20}{+}x^{25}$ & 40 & 4 & 1.8 & 20 & 24 & 20 & UV & E \\
H & $(30,6)$ & $1{+}y{+}y^{5}$ & $1{+}x^{5}{+}x^{25}$ & 40 & 4 & 1.8 & 28 & 20 & 22 & UV & E \\
I & $(15,12)$ & $1{+}y^{2}{+}y^{4}$ & $1{+}x^{3}{+}x^{4}y^{2}$ & 32 & 4 & 1.4 & 26 & 22 & 20 & Other & E \\
J & $(30,6)$ & $1{+}y{+}y^{2}$ & $1{+}x^{4}{+}x^{18}$ & 32 & 4 & 1.4 & 20 & 20 & 20 & UV & E \\
J & $(15,12)$ & $1{+}y^{2}{+}y^{4}$ & $1{+}x^{3}{+}x^{4}$ & 32 & 4 & 1.4 & 16 & 20 & 16 & UV & F \\
J & $(15,12)$ & $1{+}x^{3}{+}x^{4}$ & $1{+}y^{2}{+}y^{4}$ & 32 & 4 & 1.4 & 16 & 20 & 20 & UV & E \\
J & $(15,12)$ & $1{+}y^{2}{+}y^{4}$ & $1{+}x{+}x^{4}$ & 32 & 4 & 1.4 & 24 & 20 & 16 & UV & E \\
J & $(30,6)$ & $1{+}y{+}y^{2}$ & $1{+}x^{2}{+}x^{8}$ & 32 & 4 & 1.4 & 22 & 16 & 16 & UV & E \\
J & $(15,12)$ & $1{+}y^{2}{+}y^{4}$ & $1{+}x^{11}{+}x^{14}$ & 32 & 4 & 1.4 & 18 & 16 & 16 & UV & E \\
J & $(30,6)$ & $1{+}y{+}y^{2}$ & $1{+}x^{4}{+}x^{16}$ & 32 & 4 & 1.4 & 20 & 20 & 16 & UV & E \\
J & $(30,6)$ & $1{+}y{+}y^{2}$ & $1{+}x^{6}{+}x^{8}$ & 32 & 4 & 1.4 & 24 & 16 & 20 & UV & E \\
J & $(15,12)$ & $1{+}y^{2}{+}y^{10}$ & $1{+}x^{3}{+}x^{4}$ & 32 & 4 & 1.4 & 20 & 20 & 16 & UV & E \\
J & $(15,12)$ & $1{+}y^{2}{+}y^{4}$ & $1{+}x^{2}{+}x^{9}$ & 32 & 4 & 1.4 & 20 & 16 & 16 & UV & E \\
J & $(15,12)$ & $1{+}x{+}x^{4}$ & $1{+}y^{2}{+}y^{4}$ & 32 & 4 & 1.4 & 12 & 16 & 16 & UV & E \\
J & $(15,12)$ & $1{+}x^{6}{+}x^{8}$ & $1{+}y^{2}{+}y^{4}$ & 32 & 4 & 1.4 & 20 & 20 & 12 & UV & E \\
J & $(15,12)$ & $1{+}y^{2}{+}y^{4}$ & $1{+}x^{7}{+}x^{9}$ & 32 & 4 & 1.4 & 12 & 20 & 16 & UV & E \\
K & $(15,12)$ & $1{+}y^{2}{+}y^{4}$ & $y^{2}{+}x^{6}{+}x^{7}$ & 24 & 4 & 1.1 & 12 & 8 & 8 & Other & E \\
K & $(30,6)$ & $y{+}y^{5}{+}x^{15}$ & $y{+}x{+}x^{3}$ & 24 & 4 & 1.1 & 12 & \textbf{4} & 8 & XY & E \\
L & $(15,12)$ & $y^{3}{+}y^{4}{+}x^{5}$ & $y^{6}{+}x^{5}{+}x^{10}$ & 20 & 4 & 0.9 & 8 & 8 & 8 & XY & E \\
M & $(30,6)$ & $1{+}y{+}x^{5}$ & $1{+}y^{4}{+}x^{5}$ & 20 & 4 & 0.9 & \textbf{4} & \textbf{4} & \textbf{4} & Other & E \\
N & $(15,12)$ & $1{+}y^{2}{+}y^{4}$ & $1{+}x^{3}{+}x^{4}y^{11}$ & 16 & 4 & 0.7 & \textbf{4} & \textbf{4} & \textbf{4} & Other & E \\
O & $(15,12)$ & $1{+}y^{10}{+}y^{11}$ & $1{+}x^{5}{+}x^{10}$ & 40 & 2 & 0.4 & 28 & 26 & 24 & UV & E \\
O & $(15,12)$ & $1{+}y^{2}{+}y^{7}$ & $1{+}x^{5}{+}x^{10}$ & 40 & 2 & 0.4 & 24 & 22 & 22 & UV & E \\
O & $(15,12)$ & $1{+}y{+}y^{2}$ & $1{+}x^{5}{+}x^{10}$ & 40 & 2 & 0.4 & 22 & 20 & 26 & UV & F \\
P & $(15,12)$ & $1{+}y^{4}{+}y^{5}$ & $1{+}x^{5}{+}x^{10}$ & 40 & 2 & 0.4 & 30 & 22 & 20 & UV & E \\
O & $(15,12)$ & $1{+}y{+}y^{11}$ & $1{+}x^{5}{+}x^{10}$ & 40 & 2 & 0.4 & 24 & 18 & 28 & UV & E \\
P & $(15,12)$ & $1{+}y^{4}{+}y^{11}$ & $1{+}x^{5}{+}x^{10}$ & 40 & 2 & 0.4 & 20 & 18 & 20 & UV & E \\
P & $(15,12)$ & $1{+}y{+}y^{5}$ & $1{+}x^{5}{+}x^{10}$ & 40 & 2 & 0.4 & 22 & 18 & 24 & UV & E \\
Q & $(30,6)$ & $1{+}y^{2}{+}y^{4}$ & $1{+}x^{16}{+}x^{19}$ & 32 & 2 & 0.4 & 20 & 22 & 22 & UV & E \\
O & $(15,12)$ & $1{+}y^{5}{+}y^{10}$ & $1{+}x^{5}{+}x^{10}$ & 40 & 2 & 0.4 & 22 & 18 & 16 & UV & E \\
P & $(15,12)$ & $1{+}y^{7}{+}y^{11}$ & $1{+}x^{5}{+}x^{10}$ & 40 & 2 & 0.4 & 26 & 16 & 24 & UV & E \\
P & $(15,12)$ & $1{+}y^{5}{+}y^{7}$ & $1{+}x^{5}{+}x^{10}$ & 40 & 2 & 0.4 & 16 & 26 & 22 & UV & E \\
P & $(15,12)$ & $1{+}y{+}y^{8}$ & $1{+}x^{5}{+}x^{10}$ & 40 & 2 & 0.4 & 22 & 16 & 18 & UV & E \\
R & $(15,12)$ & $1{+}y^{4}{+}y^{8}$ & $1{+}x^{2}{+}x^{4}$ & 32 & 2 & 0.4 & 18 & 16 & 16 & UV & E \\
S & $(30,6)$ & $1{+}y^{2}{+}y^{4}$ & $1{+}x^{2}{+}x^{10}$ & 32 & 2 & 0.4 & 18 & 14 & 18 & UV & E \\
R & $(30,6)$ & $1{+}y^{2}{+}y^{4}$ & $1{+}x^{14}{+}x^{22}$ & 32 & 2 & 0.4 & 20 & 12 & 18 & UV & E \\
Q & $(30,6)$ & $1{+}y^{2}{+}y^{4}$ & $1{+}x^{3}{+}x^{4}$ & 32 & 2 & 0.4 & 10 & 22 & 14 & UV & E \\
R & $(15,12)$ & $1{+}y^{4}{+}y^{8}$ & $1{+}x^{4}{+}x^{8}$ & 32 & 2 & 0.4 & 18 & 10 & 16 & UV & E \\
R & $(15,12)$ & $1{+}y^{4}{+}y^{8}$ & $1{+}x{+}x^{2}$ & 32 & 2 & 0.4 & 10 & 20 & 16 & UV & E \\
R & $(30,6)$ & $1{+}y^{2}{+}y^{4}$ & $1{+}x^{2}{+}x^{4}$ & 32 & 2 & 0.4 & 12 & 10 & 18 & UV & E \\
S & $(15,12)$ & $1{+}y^{4}{+}y^{8}$ & $1{+}x^{4}{+}x^{5}$ & 32 & 2 & 0.4 & 10 & 12 & 16 & UV & E \\
R & $(15,12)$ & $1{+}y^{4}{+}y^{8}$ & $1{+}x^{13}{+}x^{14}$ & 32 & 2 & 0.4 & 10 & 10 & 16 & UV & E \\
S & $(30,6)$ & $1{+}y^{2}{+}y^{4}$ & $1{+}x^{8}{+}x^{10}$ & 32 & 2 & 0.4 & 18 & 8 & 12 & UV & F \\
T & $(30,6)$ & $1{+}y^{2}{+}y^{4}$ & $1{+}x^{11}{+}x^{12}$ & 32 & 2 & 0.4 & 20 & 8 & 20 & UV & E \\
R & $(30,6)$ & $1{+}y^{2}{+}y^{4}$ & $1{+}x^{4}{+}x^{8}$ & 32 & 2 & 0.4 & 16 & 14 & 8 & UV & E \\
Q & $(30,6)$ & $1{+}y^{2}{+}y^{4}$ & $1{+}x{+}x^{4}$ & 32 & 2 & 0.4 & 16 & 6 & 24 & UV & E \\
R & $(30,6)$ & $1{+}y^{2}{+}y^{4}$ & $1{+}x^{22}{+}x^{26}$ & 32 & 2 & 0.4 & 6 & 10 & 18 & UV & E \\
\midrule
a$'$ & $(15,12)$ & $1{+}xy{+}x^{3}y^{3}{+}x^{7}y^{7}$ & $1{+}xy^{11}{+}x^{3}y^{9}{+}x^{7}y^{5}$ & 26 & 10 & 7.2 & 40 & 44 & 40 & MX & A \\
b$'$ & $(15,12)$ & $1{+}xy{+}x^{4}y^{4}{+}x^{6}y^{6}$ & $1{+}xy^{11}{+}x^{4}y^{8}{+}x^{6}y^{6}$ & 18 & 12 & 7.2 & 36 & 14 & 14 & MX & A \\
c$'$ & $(15,12)$ & $1{+}xy{+}x^{2}y$ & $1{+}x^{2}y^{3}{+}x^{3}y^{2}$ & 8 & 18 & 7.2 & 66 & 20 & 20 & MX & A \\
d$'$ & $(15,12)$ & $1{+}xy{+}x^{3}y^{2}$ & $1{+}xy{+}x^{13}y^{3}$ & 8 & 18 & 7.2 & 48 & 34 & 20 & MX & A \\
e$'$ & $(15,12)$ & $1{+}y^{2}{+}y^{10}{+}x{+}x^{14}$ & $1{+}y^{2}{+}y^{10}{+}x^{2}{+}x^{13}$ & 24 & 10 & 6.7 & 30 & 34 & 22 & XY & A \\
f$'$ & $(30,6)$ & $y{+}y^{2}{+}x^{3}$ & $y{+}y^{2}{+}x^{21}$ & 24 & 10 & 6.7 & 20 & 20 & 18 & XY & A \\
g$'$ & $(15,12)$ & $1{+}x^{4}y^{2}{+}x^{13}y^{4}$ & $1{+}x^{2}y^{4}{+}x^{4}y^{10}$ & 16 & 12 & 6.4 & 16 & 14 & \textbf{12} & MX & A \\
h$'$ & $(15,12)$ & $1{+}xy^{4}{+}x^{2}y^{2}$ & $1{+}x^{2}y^{4}{+}x^{4}y^{2}$ & 16 & 12 & 6.4 & 20 & 16 & \textbf{12} & MX & A \\
\bottomrule
\end{tabular*}
\end{table*}

\section{Complete non-CSS PBB code catalog}
\label{sm:pbb_catalog}

Tables~\ref{tab:pbb_cat36}--\ref{tab:pbb_cat360} list all 368 BLISS-unique non-CSS PBB codes from Campaign~5, sorted by $\FOM = kd^2/n$ descending within each block length.
Each code is defined by four polynomials $(A, B, C, D) \in R = \FF_2[x,y]/(x^\ell{-}1, y^m{-}1)$; when $C = D = 0$ the code reduces to a CSS BB code.
The distance $d$ is the MILP value; $\leq$ indicates an incumbent upper bound (otherwise distances are exact, with all logicals proven optimal).
The catalog spans seven lattices: $(3,6)$ and $(6,3)$ at $n = 36$, $(6,6)$ at $n = 72$, $(9,6)$ at $n = 108$, $(12,6)$ at $n = 144$, $(15,6)$ at $n = 180$, and $(30,6)$ at $n = 360$.

\begin{table*}[t]
\centering
\caption{All verified non-CSS PBB codes at $n = 36$ (47 codes, 47 distinct), sorted by $\FOM = kd^2/n$ descending.
$d$: MILP distance; $\leq$ indicates an incumbent upper bound (otherwise distances are exact, with all logicals proven optimal).}
\label{tab:pbb_cat36}
\tiny
\renewcommand{\arraystretch}{0.82}
\setlength{\tabcolsep}{1.5pt}

\end{table*}

\begin{table*}[t]
\centering
\caption{All verified non-CSS PBB codes at $n = 72$ (34 codes, 34 distinct), sorted by $\FOM = kd^2/n$ descending.
$d$: MILP distance; $\leq$ indicates an incumbent upper bound (otherwise distances are exact, with all logicals proven optimal).}
\label{tab:pbb_cat72}
\tiny
\renewcommand{\arraystretch}{0.82}
\setlength{\tabcolsep}{1.5pt}
%
\end{table*}

\begin{table*}[t]
\centering
\caption{All verified non-CSS PBB codes at $n = 108$ (107 codes, 107 distinct), sorted by $\FOM = kd^2/n$ descending. Part 1 of 2 (codes ranked 1--54 by FOM; continued in Table~\ref{tab:pbb_cat108_b}).
$d$: MILP distance; $\leq$ indicates an incumbent upper bound (otherwise distances are exact, with all logicals proven optimal).}
\label{tab:pbb_cat108}
\tiny
\renewcommand{\arraystretch}{0.82}
\setlength{\tabcolsep}{1.5pt}
%
\end{table*}

\begin{table*}[t]
\centering
\caption{All verified non-CSS PBB codes at $n = 108$ (continued from Table~\ref{tab:pbb_cat108}; codes ranked 55--107 by FOM).
$d$: MILP distance; $\leq$ indicates an incumbent upper bound (otherwise distances are exact, with all logicals proven optimal).}
\label{tab:pbb_cat108_b}
\tiny
\renewcommand{\arraystretch}{0.82}
\setlength{\tabcolsep}{1.5pt}
%
\end{table*}

\begin{table*}[t]
\centering
\caption{All verified non-CSS PBB codes at $n = 144$ (66 codes, 66 distinct), sorted by $\FOM = kd^2/n$ descending.
$d$: MILP distance; $\leq$ indicates an incumbent upper bound (otherwise distances are exact, with all logicals proven optimal).}
\label{tab:pbb_cat144}
\tiny
\renewcommand{\arraystretch}{0.82}
\setlength{\tabcolsep}{1.5pt}
%
\end{table*}

\begin{table*}[t]
\centering
\caption{All verified non-CSS PBB codes at $n = 180$ (64 codes, 64 distinct), sorted by $\FOM = kd^2/n$ descending.
$d$: MILP distance; $\leq$ indicates an incumbent upper bound (otherwise distances are exact, with all logicals proven optimal).}
\label{tab:pbb_cat180}
\tiny
\renewcommand{\arraystretch}{0.82}
\setlength{\tabcolsep}{1.5pt}
%
\end{table*}

\section{Per-decoder comparison statistics}
\label{sm:decoders}

Table~\ref{tab:decoder_rates} breaks down the fraction of Campaigns~1--3 trinomial codes for which each decoder configuration found the global minimum distance, stratified by encoding rate $k/n$.
OSD-CS$_{10}$ configurations consistently outperform OSD$_0$, with the advantage most pronounced at high encoding rates ($k/n > 15\%$).
Among all 154 Campaigns~1--3 polynomial representations, OSD$_0$/product-sum found the global minimum for only 49 (31.8\%), compared to 90 (58.4\%) for OSD-CS$_{10}$/product-sum and 100 (64.9\%) for OSD-CS$_{10}$/minimum-sum.
The mean gap between the OSD$_0$ minimum and the global minimum was 3.8 distance points (median 4, maximum 18).

These findings directly support the recommendation in the main text (Sec.~VI\,E) that distance verification should always employ multiple decoder configurations.

\begin{table*}[h]
\centering
\caption{Fraction of codes for which each decoder found the global minimum $d$, by encoding rate $k/n$.  ``Sole disc.'' counts codes where \emph{only} that decoder found $d$ across 50{,}000 trials.}
\label{tab:decoder_rates}
\begin{tabular}{lcccc}
\toprule
$k/n$ & Codes & OSD$_0$/sp & CS$_{10}$/sp & CS$_{10}$/ms \\
\midrule
$<5\%$ & 8 & 2 (25\%) & 5 (62\%) & 8 (100\%) \\
$5$--$10\%$ & 56 & 22 (39\%) & 34 (61\%) & 36 (64\%) \\
$10$--$15\%$ & 64 & 18 (28\%) & 35 (55\%) & 39 (61\%) \\
$>15\%$ & 26 & 7 (27\%) & 16 (62\%) & 17 (65\%) \\
\midrule
All & 154 & 49 (32\%) & 90 (58\%) & 100 (65\%) \\
Sole disc. & --- & 14 & 37 & 45 \\
\bottomrule
\end{tabular}
\end{table*}

\section{BP-OSD per-batch analysis}
\label{sm:per_batch}

\subsection{Per-batch distance distributions}

Figure~\ref{fig:sm_per_batch} shows the full per-batch distance distributions for five codes under the 150{,}000-trial multi-decoder protocol, broken down by decoder configuration (10 batches of 5{,}000 trials each).
These codes span the range from $d = 2$ to $d = 12$ (MILP exact), with the gross code as a stable control.
Red dashed lines mark the global minimum (verified bound).

The gross code ($k = 12$) returns $d = 12$ in all 30 batches with zero variance, confirming protocol soundness.
In contrast, all four higher-$k$ codes exhibit large inter-batch variance---e.g., the $\code{288,32,4}$ code (MILP $d = 4$) ranges from $d_{\text{BP}} = 18$ to $44$ across batches, a $4.5$--$11\times$ overestimate.
Decoder choice matters: OSD-CS$_{10}$/minimum-sum consistently produces lower medians and tighter bounds than OSD$_0$ for all four higher-$k$ codes.

\begin{figure*}[t]
\centering
\includegraphics[width=\textwidth]{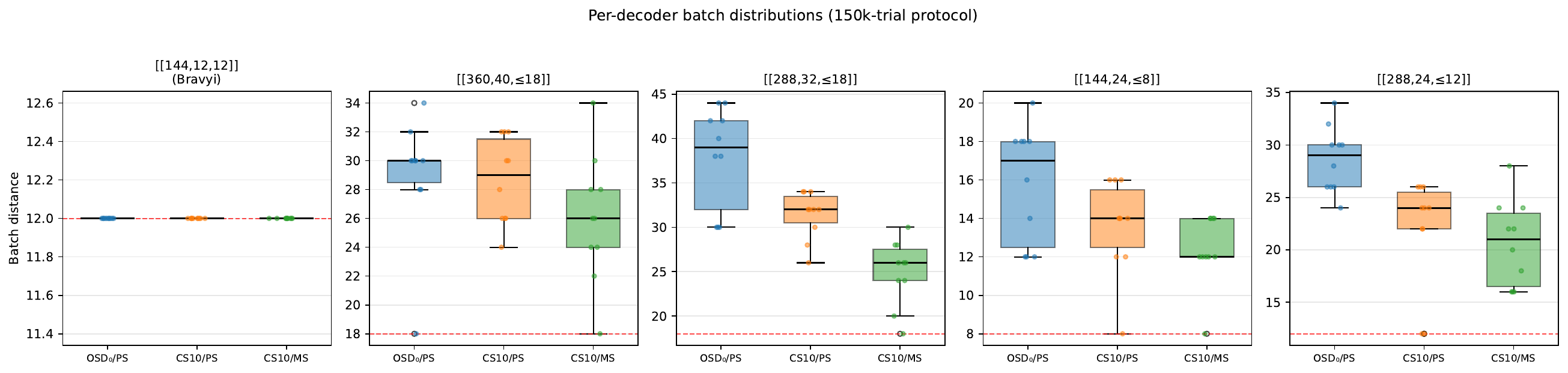}
\caption{Per-batch distance distributions under the 150{,}000-trial multi-decoder protocol.
Red dashed lines show the global minimum across all 30~batches.
The gross code ($k = 12$) is perfectly stable; all four higher-$k$ codes show inter-batch variance.
The three evolution-highlighted codes ($d_{\text{MILP}} = 2, 4, 6$) show dramatic overestimation; even the $\code{288,24,12}$ ($d_{\text{MILP}} = 12$; a direct sum of two gross codes) has OSD$_0$ medians $\approx 2.4\times$ the true distance, with only 2 of 30 batches finding the exact value.}
\label{fig:sm_per_batch}
\end{figure*}

\subsection{Per-batch statistics}

Table~\ref{tab:per_batch_stats} reports the full per-batch statistics for five representative codes under a second independent 150{,}000-trial run.
MILP exact distances are: $d = 2$ for $\code{360,40}$ at $(15,12)$, $d = 4$ for $\code{288,32}$ at $(24,6)$, $d = 6$ for $\code{144,24}$ at $(12,6)$, $d = 12$ for $\code{288,24}$ at $(12,12)$, and $d = 12$ for the gross code.

\begin{table*}[t]
\centering
\caption{Per-batch distance statistics under the 150{,}000-trial protocol (second independent run).
Each row reports 10~independent batches of 5{,}000 trials per decoder configuration.
Q1/Med/Q3 = 25th/50th/75th percentiles.
The gross code shows perfect stability; high-$k$ codes show dramatic overestimation vs.\ MILP exact distances.}
\label{tab:per_batch_stats}
\small
\begin{tabular}{llcccccc}
\toprule
Code & Decoder & Min & Q1 & Med & Q3 & Max & Mean $\pm$ Std \\
\midrule
\multirow{3}{*}{$\code{144,12,12}$~\cite{bravyi2024high}}
  & OSD$_0$/PS    & 12 & 12 & 12 & 12 & 12 & $12.0 \pm 0.0$ \\
  & CS$_{10}$/PS  & 12 & 12 & 12 & 12 & 12 & $12.0 \pm 0.0$ \\
  & CS$_{10}$/MS  & 12 & 12 & 12 & 12 & 12 & $12.0 \pm 0.0$ \\
\midrule
\multirow{3}{*}{$\code{360,40,2}$}
  & OSD$_0$/PS    & 18 & 28 & 30 & 30 & 34 & $29.0 \pm 4.0$ \\
  & CS$_{10}$/PS  & 24 & 26 & 29 & 31 & 32 & $28.6 \pm 2.8$ \\
  & CS$_{10}$/MS  & 18 & 24 & 26 & 28 & 34 & $26.0 \pm 4.2$ \\
\midrule
\multirow{3}{*}{$\code{288,32,4}$}
  & OSD$_0$/PS    & 30 & 32 & 39 & 42 & 44 & $37.8 \pm 5.5$ \\
  & CS$_{10}$/PS  & 26 & 30 & 32 & 33 & 34 & $31.4 \pm 2.5$ \\
  & CS$_{10}$/MS  & 18 & 24 & 26 & 27 & 30 & $25.0 \pm 3.5$ \\
\midrule
\multirow{3}{*}{$\code{144,24,6}$}
  & OSD$_0$/PS    & 12 & 12 & 17 & 18 & 20 & $15.8 \pm 2.9$ \\
  & CS$_{10}$/PS  &  8 & 12 & 14 & 15 & 16 & $13.6 \pm 2.3$ \\
  & CS$_{10}$/MS  &  8 & 12 & 12 & 14 & 14 & $12.4 \pm 1.7$ \\
\midrule
\multirow{3}{*}{$\code{288,24,12}$}
  & OSD$_0$/PS    & 24 & 26 & 29 & 30 & 34 & $28.6 \pm 3.0$ \\
  & CS$_{10}$/PS  & 12 & 22 & 24 & 25 & 26 & $21.8 \pm 5.1$ \\
  & CS$_{10}$/MS  & 16 & 16 & 21 & 23 & 28 & $20.6 \pm 3.9$ \\
\bottomrule
\end{tabular}
\end{table*}

\subsection{Extended verification at 1{,}500{,}000 trials}

To stress-test the 150k-trial bounds, we ran an extended protocol at $10\times$ depth: 3~decoders $\times$ 10~batches $\times$ 50{,}000~trials $= 1{,}500{,}000$ total trials per code, on three headline codes plus the gross code as a control (Table~\ref{tab:extended}).

\begin{table*}[h]
\centering
\caption{Extended verification at 1{,}500{,}000 trials per code, with MILP exact distances for comparison.
$d_{150\text{k}}$: 150k-trial protocol bound.
$d_{1.5\text{M}}$: extended bound.
$d_{\text{MILP}}$: exact distance via MILP.
Even at $10\times$ depth, BP-OSD overestimates persist.
The $\code{144,32,2}$ code has $A = B$; its $d = 2$ is proven exactly (main text Appendix~D, Theorem~1).}
\label{tab:extended}
\footnotesize
\begin{tabular}{lccccccc}
\toprule
Code & $(\ell,m)$ & $k$ & $d_{150\text{k}}$ & $d_{1.5\text{M}}$ & $d_{\text{MILP}}$ & Batches & $\FOM_{\text{MILP}}$ \\
\midrule
$\code{144,12,12}$ & $(12,6)$ & 12 & 12 & 12 & 12 & 30/30 & 12.0 \\
$\code{360,40,2}$ & $(15,12)$ & 40 & 24 & 6 & 2 & 1/30 & 0.4 \\
$\code{288,32,4}$ & $(24,6)$ & 32 & 22 & 12 & 4 & 1/30 & 1.8 \\
$\code{144,32,2}$ & $(12,6)$ & 32 & 14 & 2 & 2 & 1/30 & 0.9 \\
\bottomrule
\end{tabular}
\end{table*}

The gross code returned $d = 12$ in every batch, confirming that low-rate codes have stable BP-OSD bounds.
All three high-$k$ codes had bounds tightened: $d \leq 6$ for $\code{360,40}$ (MILP: $d = 2$), $d \leq 12$ for $\code{288,32}$ (MILP: $d = 4$), and $d = 2$ for $\code{144,32}$ (MILP: $d = 2$, proven exactly by main text Appendix~D, Theorem~1).
The $10\times$ deeper protocol correctly identified the \emph{direction} of the bias but still overestimated true distances by $3\times$, demonstrating that even million-trial BP-OSD protocols cannot substitute for exact distance computation.

The $d = 2$ result for the $A = B$ code is now \emph{proven exact}: all BB codes with $A = B$ and $k > 0$ have $d = 2$, because the identical column blocks of $H_Z = [A^\top | A^\top]$ produce $\ell m$ weight-2 vectors in $\ker(H_Z)$, exactly $k/2$ of which are independent nontrivial logicals.
The fact that 29 of 30 batches \emph{failed} to find this weight-2 logical---despite 72~independent weight-2 logicals existing---is a cautionary demonstration of BP-OSD's unreliability.

\section{Distance tightening across Campaigns~1--3}
\label{sm:tightening}

Table~\ref{tab:decoder_comparison_full} shows representative distance tightening examples from the Campaign~1 codes when moving from 60{,}000 to 150{,}000 trials under the multi-decoder protocol.
This data supplements Fig.~4 in the main text.

\begin{table*}[t]
\centering
\caption{Impact of verification protocol rigor on distance bounds (Campaign~1 codes).
$\Delta d$ is the reduction in the distance upper bound.
MILP exact distances confirm that even the 150k bounds are gross overestimates.}
\label{tab:decoder_comparison_full}
\begin{tabular}{lcccr}
\toprule
Code & $d_{\leq}$ (60k) & $d_{\leq}$ (150k) & $\Delta d$ & $\Delta\FOM_{\text{BP}}$ \\
\midrule
$\code{360,40}$ & 20 & 20 & 0 & 0.0 \\
$\code{360,32}$ & 20 & 16 & $-4$ & $-12.8$ \\
$\code{288,24}$ & 20 & 12 & $-8$ & $-21.3$ \\
$\code{288,32}$ & 16 & 12 & $-4$ & $-12.4$ \\
$\code{144,24}$ & 12 & 8 & $-4$ & $-13.3$ \\
$\code{144,32}$ & 10 & 6 & $-4$ & $-14.2$ \\
$\code{360,32}$ & 16 & 8 & $-8$ & $-17.1$ \\
$\code{288,32}$ & 24 & 20 & $-4$ & $-19.6$ \\
$\code{288,32}$ & 14 & 10 & $-4$ & $-10.7$ \\
\bottomrule
\end{tabular}
\end{table*}

\section{Univariate code families}
\label{sm:univariate}

Table~\ref{tab:uv_families} lists the top univariate (HGP of low-distance cyclic codes~\cite{eberhardt2024pruning}) polynomial families identified across all campaigns, showing how the same polynomial pair yields different parameters at different lattice dimensions.
Of the 154 Campaigns~1--3 trinomial representations, 87 (57\%) are univariate; after BLISS deduplication, these correspond to 12 distinct codes.
Multiple univariate families achieve $k = 40$ (the highest among trinomial codes; Campaign~4's mixed-monomial codes reach $k = 54$ via a factored-product generalization).
MILP exact distance computation reveals that \emph{every} univariate code has $d \in \{2, 4\}$.
By the Tillich--Z\'emor formula $d = \min(d_1, d_2, d_1^\top, d_2^\top)$~\cite{tillich2014quantum}, the BB distance is bounded by the smallest distance among the four component cyclic codes $\ker H_A$, $\ker H_B$, $\ker H_A^\top$, $\ker H_B^\top$ (for palindromic check polynomials as here, $H^\top = H$, so the four-way min collapses to $\min(d_A, d_B)$); low-weight quotients $(y^m{-}1)/A$ or $(x^\ell{-}1)/B$ force this min to be small (typically 2 or 4), and the odd weight of trinomial check polynomials forces all four component distances to be even, ruling out $d = 3$.
See main text Sec.~VI\,A for the derivation.
The FOM collapse is dramatic: the top-ranked code drops from $\FOM_{\text{BP}} = 64.0$ to $\FOM_{\text{MILP}} = 0.4$.

\begin{table*}[t]
\centering
\caption{Top univariate families ($A = f(y)$, $B = g(x)$, or vice versa; equivalent to HGP of low-distance cyclic codes~\cite{eberhardt2024pruning}).
Rows grouped by polynomial pair; sorted by $\FOM_{\text{BP}}$ descending (the metric available during evolution).
All univariate codes have $d_{\text{MILP}} \leq 4$, regardless of BP-OSD estimate (up to $12\times$ overestimation).}
\label{tab:uv_families}
\begin{tabular}{llccccccccc}
\toprule
$A(y)$ & $B(x)$ & $(\ell,m)$ & $n$ & $k$ & $d_{\text{MILP}}$ & $\FOM_{\text{MILP}}$ & $d_{\text{BP}}$ & $\FOM_{\text{BP}}$ & Ratio \\
\midrule
$1{+}y^{10}{+}y^{11}$ & $1{+}x^{5}{+}x^{10}$ & $(15,12)$ & 360 & 40 & 2 & 0.4 & 24 & 64.0 & 12$\times$ \\
\midrule
$1{+}y{+}y^{2}$ & $1{+}x^{5}{+}x^{25}$ & $(30,6)$ & 360 & 40 & 4 & 1.8 & 24 & 64.0 & 6$\times$ \\
\midrule
$1{+}y^{2}{+}y^{7}$ & $1{+}x^{5}{+}x^{10}$ & $(15,12)$ & 360 & 40 & 2 & 0.4 & 22 & 53.8 & 11$\times$ \\
\midrule
$1{+}y{+}y^{2}$ & $1{+}x^{5}{+}x^{10}$ & $(15,12)$ & 360 & 40 & 2 & 0.4 & 20 & 44.4 & 10$\times$ \\
 & & $(30,6)$ & 360 & 40 & 4 & 1.8 & 20 & 44.4 & 5$\times$ \\
\midrule
$1{+}y{+}y^{2}$ & $1{+}x^{20}{+}x^{25}$ & $(30,6)$ & 360 & 40 & 4 & 1.8 & 20 & 44.4 & 5$\times$ \\
\midrule
$1{+}y{+}y^{5}$ & $1{+}x^{5}{+}x^{10}$ & $(15,12)$ & 360 & 40 & 2 & 0.4 & 18 & 36.0 & 9$\times$ \\
 & & $(30,6)$ & 360 & 40 & 4 & 1.8 & 20 & 44.4 & 5$\times$ \\
\midrule
$1{+}y^{4}{+}y^{5}$ & $1{+}x^{5}{+}x^{10}$ & $(15,12)$ & 360 & 40 & 2 & 0.4 & 20 & 44.4 & 10$\times$ \\
 & & $(30,6)$ & 360 & 40 & 4 & 1.8 & 20 & 44.4 & 5$\times$ \\
\midrule
$1{+}y^{4}{+}y^{5}$ & $1{+}x^{20}{+}x^{25}$ & $(30,6)$ & 360 & 40 & 4 & 1.8 & 20 & 44.4 & 5$\times$ \\
\midrule
$1{+}y{+}y^{5}$ & $1{+}x^{5}{+}x^{25}$ & $(30,6)$ & 360 & 40 & 4 & 1.8 & 20 & 44.4 & 5$\times$ \\
\midrule
$1{+}y{+}y^{11}$ & $1{+}x^{5}{+}x^{10}$ & $(15,12)$ & 360 & 40 & 2 & 0.4 & 18 & 36.0 & 9$\times$ \\
\midrule
$1{+}y^{4}{+}y^{11}$ & $1{+}x^{5}{+}x^{10}$ & $(15,12)$ & 360 & 40 & 2 & 0.4 & 18 & 36.0 & 9$\times$ \\
\midrule
$1{+}y{+}y^{2}$ & $1{+}x^{4}{+}x^{18}$ & $(30,6)$ & 360 & 32 & 4 & 1.4 & 20 & 35.6 & 5$\times$ \\
\midrule
$1{+}y^{2}{+}y^{4}$ & $1{+}x^{16}{+}x^{19}$ & $(30,6)$ & 360 & 32 & 2 & 0.4 & 20 & 35.6 & 10$\times$ \\
\midrule
$1{+}y{+}y^{5}$ & $1{+}x^{4}{+}x^{8}$ & $(12,6)$ & 144 & 32 & 2 & 0.9 & 12 & 32.0 & 6$\times$ \\
 & & $(12,12)$ & 288 & 32 & 2 & 0.4 & 14 & 21.8 & 7$\times$ \\
 & & $(24,6)$ & 288 & 32 & 4 & 1.8 & 12 & 16.0 & 3$\times$ \\
\midrule
$1{+}y{+}y^{2}$ & $1{+}x^{4}{+}x^{8}$ & $(12,6)$ & 144 & 32 & 2 & 0.9 & 6 & 8.0 & 3$\times$ \\
 & & $(12,12)$ & 288 & 32 & 2 & 0.4 & 12 & 16.0 & 6$\times$ \\
 & & $(24,6)$ & 288 & 32 & 4 & 1.8 & 16 & 28.4 & 4$\times$ \\
\midrule
$1{+}y^{8}{+}y^{10}$ & $1{+}x^{2}{+}x^{4}$ & $(6,12)$ & 144 & 32 & 2 & 0.9 & 6 & 8.0 & 3$\times$ \\
 & & $(12,12)$ & 288 & 32 & 4 & 1.8 & 16 & 28.4 & 4$\times$ \\
\midrule
$1{+}y{+}y^{5}$ & $1{+}x^{4}{+}x^{20}$ & $(24,6)$ & 288 & 32 & 4 & 1.8 & 16 & 28.4 & 4$\times$ \\
\midrule
$1{+}y{+}y^{2}$ & $1{+}x^{16}{+}x^{20}$ & $(24,6)$ & 288 & 32 & 4 & 1.8 & 16 & 28.4 & 4$\times$ \\
\midrule
$1{+}y^{5}{+}y^{10}$ & $1{+}x^{5}{+}x^{10}$ & $(15,12)$ & 360 & 40 & 2 & 0.4 & 16 & 28.4 & 8$\times$ \\
\midrule
$1{+}y^{7}{+}y^{11}$ & $1{+}x^{5}{+}x^{10}$ & $(15,12)$ & 360 & 40 & 2 & 0.4 & 16 & 28.4 & 8$\times$ \\
\bottomrule
\end{tabular}
\end{table*}

\section{Full code capacity simulation data}
\label{sm:threshold}

This section presents the complete numerical results of the code capacity simulation summarized in the main text (Sec.~VI\,D).
Tables~\ref{tab:threshold_full} and~\ref{tab:per_qubit_full} cover six CSS codes (five weight-6 BB and one weight-8) at all 12 tested error rates; Table~\ref{tab:threshold_full_extra} adds two additional CSS codes ($\code{288,50,8}$ and $\code{144,8,12}$).
Tables~\ref{tab:threshold_noncss_full} and~\ref{tab:per_qubit_noncss_full} cover five non-CSS PBB codes alongside the CSS gross code baseline under $X$-only noise.
Table~\ref{tab:threshold_depolarizing_full} presents the same non-CSS codes under depolarizing noise; the table caption notes the decoder configuration and that, like the main text Sec.~VI\,D depolarizing data, this is a decoder-mismatched threshold (iid X/Z BP-OSD prior on the depolarizing channel) rather than the optimal depolarizing threshold.

\paragraph{Simulation protocol.}
Two noise channels are simulated with separate decoder configurations.
For the X-only tables (Tables~\ref{tab:threshold_full}--\ref{tab:per_qubit_noncss_full}): independent $X$ errors at rate~$p$ on each of~$n$ qubits, decoded with BP-OSD on $H_z$ with an iid prior of marginal rate~$p$ on $n$ columns.
For Table~\ref{tab:threshold_depolarizing_full}: the depolarizing channel (each qubit independently gets $X$, $Y$, or $Z$ with probability $p/3$), decoded with BP-OSD on the symplectic check matrix $[H_z \mid H_x]$ with iid bit-flip priors at marginal rate $2p/3$ on each of the $2n$ columns.
This iid prior is mismatched with respect to the depolarizing channel (it does not model the $X$/$Z$ correlation $P(e_x{=}1, e_z{=}1) = p/3$ induced by $Y$ errors); see main text Sec.~VI\,D and the table caption.
All decoders use OSD-CS order~7, product-sum, 20~BP iterations~\cite{panteleev2021degenerate,roffe2020decoding}.
Block LER: fraction of 100{,}000 shots with $\geq 1$ logical error among $k$~qubits.
Per-logical-qubit rate: $p_L = 1 - (1 - \text{LER})^{1/k}$.
Wilson 95\% CIs $\leq 0.06$~pp for LER~$\leq 1\%$; entries reported as $0.00$ have upper bounds $< 0.01\%$.

\begin{table*}[t]
\centering
\caption{Code capacity simulation: block logical error rate (LER, \%) vs.\ physical error rate~$p$ for six CSS codes (five weight-6 BB and one weight-8$^\dagger$; 100{,}000 shots per point).
Bold entries indicate block LER~$> p$.
$^\dagger$Campaign~4 code (weight-8 stabilizers; MILP incumbent $d \leq 14$; BP-OSD 300k confirms $d = 14$).
See Table~\ref{tab:threshold_full_extra} for two additional CSS codes.}
\label{tab:threshold_full}
\small
\begin{tabular}{r|cc|cc|cc|cc|cc|cc}
\toprule
& \multicolumn{2}{c|}{$\code{144,12,12}$~\cite{bravyi2024high}} & \multicolumn{2}{c|}{$\code{288,24,12}$} & \multicolumn{2}{c|}{$\code{288,16,12}$} & \multicolumn{2}{c|}{$\code{360,16,{\leq}14}$} & \multicolumn{2}{c|}{$\code{360,20,{\leq}14}$$^\dagger$} & \multicolumn{2}{c}{$\code{144,24,6}$} \\
$p$ (\%) & LER & Unc. & LER & Unc. & LER & Unc. & LER & Unc. & LER & Unc. & LER & Unc. \\
\midrule
0.2 & 0.00 & 2.37 & 0.00 & 4.69 & 0.00 & 3.15 & 0.00 & 3.15 & 0.00 & 3.92 & 0.00 & 4.69 \\
0.5 & 0.00 & 5.84 & 0.01 & 11.33 & 0.01 & 7.71 & 0.01 & 7.71 & 0.00 & 9.54 & 0.02 & 11.33 \\
0.8 & 0.01 & 9.19 & 0.04 & 17.53 & 0.08 & 12.06 & 0.07 & 12.06 & 0.01 & 14.84 & 0.10 & 17.53 \\
1.0 & 0.03 & 11.36 & 0.07 & 21.43 & 0.13 & 14.85 & 0.15 & 14.85 & 0.02 & 18.21 & 0.27 & 21.43 \\
1.5 & 0.08 & 16.59 & 0.19 & 30.42 & 0.41 & 21.48 & 0.43 & 21.48 & 0.04 & 26.09 & 0.85 & 30.42 \\
2.0 & 0.15 & 21.53 & 0.36 & 38.42 & 0.85 & 27.62 & 0.84 & 27.62 & 0.12 & 33.24 & \textbf{2.11} & 38.42 \\
3.0 & 0.49 & 30.62 & 0.98 & 51.86 & 1.69 & 38.57 & 1.69 & 38.57 & 0.60 & 45.62 & \textbf{7.22} & 51.86 \\
4.0 & 1.28 & 38.73 & 2.66 & 62.46 & 2.55 & 47.96 & 2.98 & 47.96 & 2.38 & 55.80 & \textbf{16.77} & 62.46 \\
5.0 & 3.70 & 45.96 & \textbf{7.28} & 70.80 & 4.18 & 55.99 & \textbf{5.27} & 55.99 & \textbf{7.95} & 64.15 & \textbf{28.97} & 70.80 \\
6.0 & \textbf{8.61} & 52.41 & \textbf{16.77} & 77.35 & \textbf{7.45} & 62.84 & \textbf{10.99} & 62.84 & \textbf{20.10} & 70.99 & \textbf{43.56} & 77.35 \\
7.0 & \textbf{16.99} & 58.14 & \textbf{31.69} & 82.48 & \textbf{14.66} & 68.69 & \textbf{21.95} & 68.69 & \textbf{38.50} & 76.58 & \textbf{57.89} & 82.48 \\
8.0 & \textbf{28.76} & 63.23 & \textbf{50.06} & 86.48 & \textbf{27.73} & 73.66 & \textbf{38.70} & 73.66 & \textbf{60.29} & 81.13 & \textbf{70.68} & 86.48 \\
\bottomrule
\end{tabular}
\end{table*}

\begin{table*}[t]
\centering
\caption{Per-logical-qubit error rate $p_L = 1{-}(1{-}\text{LER})^{1/k}$ (\%) for the six CSS codes in Table~\ref{tab:threshold_full}.
All six codes achieve $p_L < p$ at every tested rate, confirming genuine per-qubit error suppression.
$^\dagger$Campaign~4 code (weight-8 stabilizers).}
\label{tab:per_qubit_full}
\footnotesize
\begin{tabular}{r|c|c|c|c|c|c}
\toprule
$p$ (\%) & $\code{144,12,12}$ ($k{=}12$) & $\code{288,24,12}$ ($k{=}24$) & $\code{288,16,12}$ ($k{=}16$) & $\code{360,16,{\leq}14}$ ($k{=}16$) & $\code{360,20,{\leq}14}$$^\dagger$ ($k{=}20$) & $\code{144,24,6}$ ($k{=}24$) \\
\midrule
0.2 & ${<}0.01$ & ${<}0.01$ & ${<}0.01$ & ${<}0.01$ & ${<}0.01$ & ${<}0.01$ \\
0.5 & ${<}0.01$ & ${<}0.01$ & ${<}0.01$ & ${<}0.01$ & ${<}0.01$ & ${<}0.01$ \\
0.8 & ${<}0.01$ & ${<}0.01$ & ${<}0.01$ & ${<}0.01$ & ${<}0.01$ & ${<}0.01$ \\
1.0 & ${<}0.01$ & ${<}0.01$ & 0.01 & 0.01 & ${<}0.01$ & 0.01 \\
1.5 & 0.01 & 0.01 & 0.03 & 0.03 & ${<}0.01$ & 0.04 \\
2.0 & 0.01 & 0.01 & 0.05 & 0.05 & 0.01 & 0.09 \\
3.0 & 0.04 & 0.04 & 0.11 & 0.11 & 0.03 & 0.31 \\
4.0 & 0.11 & 0.11 & 0.16 & 0.19 & 0.12 & 0.76 \\
5.0 & 0.31 & 0.31 & 0.27 & 0.34 & 0.41 & 1.42 \\
6.0 & 0.75 & 0.76 & 0.48 & 0.73 & 1.12 & 2.35 \\
7.0 & 1.54 & 1.58 & 0.99 & 1.54 & 2.40 & 3.54 \\
8.0 & 2.79 & 2.85 & 2.01 & 3.01 & 4.51 & 4.98 \\
\bottomrule
\end{tabular}
\end{table*}

\begin{table*}[t]
\centering
\caption{Code capacity simulation: block LER (\%) and per-logical-qubit rate $p_L$ (\%) for two additional CSS codes (100{,}000 shots per point).
Bold entries indicate block LER~$> p$.
$^\ddagger$Campaign~4 code (weight-8 stabilizers, cross-factored).}
\label{tab:threshold_full_extra}
\small
\begin{tabular}{r|ccc|ccc}
\toprule
& \multicolumn{3}{c|}{$\code{288,50,8}^{\ddagger}$} & \multicolumn{3}{c}{$\code{144,8,12}$} \\
$p$ (\%) & LER & $p_L$ & Unc. & LER & $p_L$ & Unc. \\
\midrule
0.2 & 0.00 & ${<}0.01$ & 9.53 & 0.00 & ${<}0.01$ & 1.59 \\
0.5 & 0.02 & ${<}0.01$ & 22.17 & 0.00 & ${<}0.01$ & 3.93 \\
0.8 & 0.08 & ${<}0.01$ & 33.08 & 0.01 & ${<}0.01$ & 6.22 \\
1.0 & 0.13 & ${<}0.01$ & 39.50 & 0.01 & ${<}0.01$ & 7.73 \\
1.5 & 0.48 & 0.01 & 53.03 & 0.05 & 0.01 & 11.39 \\
2.0 & 1.22 & 0.02 & 63.58 & 0.08 & 0.01 & 14.92 \\
3.0 & \textbf{4.71} & 0.10 & 78.19 & 0.29 & 0.04 & 21.63 \\
4.0 & \textbf{12.74} & 0.27 & 87.01 & 0.91 & 0.11 & 27.86 \\
5.0 & \textbf{27.20} & 0.63 & 92.31 & 2.72 & 0.34 & 33.66 \\
6.0 & \textbf{46.27} & 1.23 & 95.47 & \textbf{6.44} & 0.83 & 39.04 \\
7.0 & \textbf{65.92} & 2.13 & 97.34 & \textbf{13.13} & 1.74 & 44.04 \\
8.0 & \textbf{81.28} & 3.30 & 98.45 & \textbf{23.07} & 3.23 & 48.68 \\
\bottomrule
\end{tabular}
\end{table*}

\begin{table*}[t]
\centering
\caption{Code capacity simulation ($X$-only noise): block LER (\%) for five non-CSS PBB codes and the CSS gross code baseline (100{,}000 shots).
Bold entries indicate block LER~$> p$.}
\label{tab:threshold_noncss_full}
\small
\setlength{\tabcolsep}{3pt}
\begin{tabular}{r|cc|cc|cc|cc|cc|cc}
\toprule
& \multicolumn{2}{c|}{$\code{144,12,12}$~\cite{bravyi2024high} (CSS)} & \multicolumn{2}{c|}{$\code{144,12,12}$ PBB} & \multicolumn{2}{c|}{$\code{72,4,8}$ PBB} & \multicolumn{2}{c|}{$\code{108,8,10}$ PBB} & \multicolumn{2}{c|}{$\code{360,12,{\leq}20}$ PBB} & \multicolumn{2}{c}{$\code{360,12,{\leq}24}$ PBB} \\
$p$ (\%) & LER & Unc. & LER & Unc. & LER & Unc. & LER & Unc. & LER & Unc. & LER & Unc. \\
\midrule
0.2 & 0.00 & 2.37 & 0.00 & 2.37 & 0.00 & 0.80 & 0.00 & 1.59 & 0.00 & 2.37 & 0.00 & 2.37 \\
0.5 & 0.00 & 5.84 & 0.01 & 5.84 & 0.00 & 1.99 & 0.00 & 3.93 & 0.00 & 5.84 & 0.00 & 5.84 \\
0.8 & 0.01 & 9.19 & 0.01 & 9.19 & 0.00 & 3.16 & 0.00 & 6.22 & 0.00 & 9.19 & 0.00 & 9.19 \\
1.0 & 0.03 & 11.36 & 0.02 & 11.36 & 0.00 & 3.94 & 0.00 & 7.73 & 0.00 & 11.36 & 0.00 & 11.36 \\
1.5 & 0.08 & 16.59 & 0.08 & 16.59 & 0.00 & 5.87 & 0.00 & 11.39 & 0.00 & 16.59 & 0.00 & 16.59 \\
2.0 & 0.15 & 21.53 & 0.21 & 21.53 & 0.00 & 7.76 & 0.00 & 14.92 & 0.00 & 21.53 & 0.00 & 21.53 \\
3.0 & 0.49 & 30.62 & 0.76 & 30.62 & 0.01 & 11.47 & 0.01 & 21.63 & 0.00 & 30.62 & 0.01 & 30.62 \\
4.0 & 1.28 & 38.73 & 2.07 & 38.73 & 0.02 & 15.07 & 0.02 & 27.86 & 0.01 & 38.73 & 0.02 & 38.73 \\
5.0 & 3.70 & 45.96 & 4.29 & 45.96 & 0.04 & 18.55 & 0.05 & 33.66 & 0.04 & 45.96 & 0.04 & 45.96 \\
6.0 & \textbf{8.61} & 52.41 & \textbf{7.51} & 52.41 & 0.07 & 21.93 & 0.12 & 39.04 & 0.15 & 52.41 & 0.16 & 52.41 \\
7.0 & \textbf{16.99} & 58.14 & \textbf{11.64} & 58.14 & 0.15 & 25.19 & 0.25 & 44.04 & 0.39 & 58.14 & 0.37 & 58.14 \\
8.0 & \textbf{28.76} & 63.23 & \textbf{16.61} & 63.23 & 0.22 & 28.36 & 0.51 & 48.68 & 0.93 & 63.23 & 0.90 & 63.23 \\
\bottomrule
\end{tabular}
\end{table*}

\begin{table*}[t]
\centering
\caption{Per-logical-qubit error rate $p_L = 1{-}(1{-}\text{LER})^{1/k}$ (\%) for the non-CSS PBB codes in Table~\ref{tab:threshold_noncss_full} ($X$-only noise).
All six codes achieve $p_L < p$ at every tested rate.
The two $n=360$ codes have identical per-qubit rates at displayed precision since both have $k=12$ and their $X$-only LER values agree to within rounding.}
\label{tab:per_qubit_noncss_full}
\tiny
\setlength{\tabcolsep}{2pt}
\begin{tabular}{r|c|c|c|c|c|c}
\toprule
$p$ (\%) & $\code{144,12,12}$ CSS ($k{=}12$) & $\code{144,12,12}$ PBB ($k{=}12$) & $\code{72,4,8}$ PBB ($k{=}4$) & $\code{108,8,10}$ PBB ($k{=}8$) & $\code{360,12,{\leq}20}$ PBB ($k{=}12$) & $\code{360,12,{\leq}24}$ PBB ($k{=}12$) \\
\midrule
0.2 & ${<}0.01$ & ${<}0.01$ & ${<}0.01$ & ${<}0.01$ & ${<}0.01$ & ${<}0.01$ \\
0.5 & ${<}0.01$ & ${<}0.01$ & ${<}0.01$ & ${<}0.01$ & ${<}0.01$ & ${<}0.01$ \\
0.8 & ${<}0.01$ & ${<}0.01$ & ${<}0.01$ & ${<}0.01$ & ${<}0.01$ & ${<}0.01$ \\
1.0 & ${<}0.01$ & ${<}0.01$ & ${<}0.01$ & ${<}0.01$ & ${<}0.01$ & ${<}0.01$ \\
1.5 & 0.01 & 0.01 & ${<}0.01$ & ${<}0.01$ & ${<}0.01$ & ${<}0.01$ \\
2.0 & 0.01 & 0.02 & ${<}0.01$ & ${<}0.01$ & ${<}0.01$ & ${<}0.01$ \\
3.0 & 0.04 & 0.06 & ${<}0.01$ & ${<}0.01$ & ${<}0.01$ & ${<}0.01$ \\
4.0 & 0.11 & 0.17 & ${<}0.01$ & ${<}0.01$ & ${<}0.01$ & ${<}0.01$ \\
5.0 & 0.31 & 0.36 & 0.01 & 0.01 & ${<}0.01$ & ${<}0.01$ \\
6.0 & 0.75 & 0.65 & 0.02 & 0.01 & 0.01 & 0.01 \\
7.0 & 1.54 & 1.03 & 0.04 & 0.03 & 0.03 & 0.03 \\
8.0 & 2.79 & 1.50 & 0.06 & 0.06 & 0.08 & 0.08 \\
\bottomrule
\end{tabular}
\end{table*}

\begin{table*}[t]
\centering
\caption{Code capacity simulation (depolarizing channel, decoded with iid X/Z BP-OSD prior): block LER (\%) for five non-CSS PBB codes and the CSS gross code baseline (100{,}000 shots).
Sampler: each qubit independently gets $X$, $Y$, or $Z$ with probability $p/3$.
Decoder: BP-OSD on the symplectic check matrix $[H_z \mid H_x]$ with iid bit-flip priors at marginal rate $2p/3$ on each of the $2n$ columns; this iid prior is mismatched with respect to the depolarizing channel (see main text Sec.~VI\,D), so the reported LERs are a lower bound on the optimal depolarizing threshold rather than the optimal threshold itself.
Bold entries indicate block LER~$> p$.
Under depolarizing noise, the CSS and PBB $\code{144,12,12}$ achieve nearly identical LER at all tested rates---a comparison that is decoder-dependent.
Both $n=360$ PBB codes achieve LER~$<p$ throughout: the top-FOM $\code{360,12,{\leq}24}$ reaches $4.52\%$ and the lower-FOM $\code{360,12,{\leq}20}$ reaches $4.97\%$ at $p=8\%$, with a small crossover in which the lower-FOM code is marginally better at low $p$ ($p \leq 4\%$) and the higher-FOM code is better at high $p$ ($p \geq 5\%$).}
\label{tab:threshold_depolarizing_full}
\small
\setlength{\tabcolsep}{3pt}
\begin{tabular}{r|cc|cc|cc|cc|cc|cc}
\toprule
& \multicolumn{2}{c|}{$\code{144,12,12}$~\cite{bravyi2024high} (CSS)} & \multicolumn{2}{c|}{$\code{144,12,12}$ PBB} & \multicolumn{2}{c|}{$\code{72,4,8}$ PBB} & \multicolumn{2}{c|}{$\code{108,8,10}$ PBB} & \multicolumn{2}{c|}{$\code{360,12,{\leq}20}$ PBB} & \multicolumn{2}{c}{$\code{360,12,{\leq}24}$ PBB} \\
$p$ (\%) & LER & Unc. & LER & Unc. & LER & Unc. & LER & Unc. & LER & Unc. & LER & Unc. \\
\midrule
0.2 & 0.00 & 2.37 & 0.00 & 2.37 & 0.00 & 0.80 & 0.00 & 1.59 & 0.00 & 2.37 & 0.00 & 2.37 \\
0.5 & 0.00 & 5.84 & 0.01 & 5.84 & 0.00 & 1.99 & 0.00 & 3.93 & 0.00 & 5.84 & 0.01 & 5.84 \\
0.8 & 0.01 & 9.19 & 0.01 & 9.19 & 0.00 & 3.16 & 0.01 & 6.22 & 0.03 & 9.19 & 0.05 & 9.19 \\
1.0 & 0.03 & 11.36 & 0.02 & 11.36 & 0.01 & 3.94 & 0.01 & 7.73 & 0.05 & 11.36 & 0.08 & 11.36 \\
1.5 & 0.09 & 16.59 & 0.07 & 16.59 & 0.01 & 5.87 & 0.03 & 11.39 & 0.15 & 16.59 & 0.19 & 16.59 \\
2.0 & 0.13 & 21.53 & 0.14 & 21.53 & 0.05 & 7.76 & 0.07 & 14.92 & 0.27 & 21.53 & 0.35 & 21.53 \\
3.0 & 0.37 & 30.62 & 0.34 & 30.62 & 0.25 & 11.47 & 0.18 & 21.63 & 0.70 & 30.62 & 0.81 & 30.62 \\
4.0 & 0.71 & 38.73 & 0.67 & 38.73 & 0.69 & 15.07 & 0.49 & 27.86 & 1.24 & 38.73 & 1.27 & 38.73 \\
5.0 & 1.27 & 45.96 & 1.40 & 45.96 & 1.80 & 18.55 & 1.05 & 33.66 & 1.89 & 45.96 & 1.82 & 45.96 \\
6.0 & 2.69 & 52.41 & 2.64 & 52.41 & 3.73 & 21.93 & 2.32 & 39.04 & 2.59 & 52.41 & 2.40 & 52.41 \\
7.0 & 5.15 & 58.14 & 5.22 & 58.14 & 6.63 & 25.19 & 4.56 & 44.04 & 3.58 & 58.14 & 3.22 & 58.14 \\
\textbf{8.0} & \textbf{9.43} & 63.23 & \textbf{9.29} & 63.23 & \textbf{10.47} & 28.36 & \textbf{8.38} & 48.68 & 4.97 & 63.23 & 4.52 & 63.23 \\
\bottomrule
\end{tabular}
\end{table*}

\section{Utility stratification of the catalog}
\label{sm:utility}

Not all 465 distinct codes are useful for error correction.
We stratify the combined catalog by practical relevance.

Among the 97 CSS codes, the MILP-verified distance distribution spans $d = 2$ (the $A = B$ and some univariate codes) through $d = 14$.
Among the Campaigns~1--3 trinomial portion of the catalog, 13 $(n, k, d)$ parameter triples have $d \geq 6$ (the $x/y$-swap and select mixed-monomial families); Campaign~4's mixed-monomial search expands this to 41 distinct triples with $d \geq 6$ across the full CSS catalog.
The CSS codes with $\FOM \geq 6.0$---a threshold above which codes offer meaningful error-correction advantage over the uncoded rate---number 5 parameter triples at Campaigns~1--3 block lengths (including the $\code{288,24,12}$ at $\FOM = 12.0$, which is a direct sum of two gross codes; see main text Sec.~VI\,A) and 29 additional $(n,k,d)$ triples reached only by Campaign~4 (34 distinct CSS triples in total at $\FOM \geq 6$).

In contrast, all 368 non-CSS PBB codes have $d \geq 6$ by construction: Campaign~5's adaptive pipeline rejected $d \leq 4$ codes during evolution.
Of these, 108 distinct codes have $\FOM \geq 6.0$, 53 have $\FOM \geq 8.0$, and 25 match or exceed $\FOM = 12.0$.

The most practically relevant codes---those with $d \geq 8$ and $\FOM \geq 6.0$---span 50 distinct $(n,k,d)$ parameter triples across both catalogs (30 CSS, 24 non-CSS, with 4 shared between the two catalogs); a representative subset is consolidated in main text Table~V.

\section{Additional comparison and novelty details}
\label{sm:comparison}

\subsection{Novelty assessment}
\label{sm:novelty}

The published inventory of weight-6 BB codes at the block lengths we study ($n = 144$, $288$, $360$) is small: one code each from Bravyi et al.~\cite{bravyi2024high}.
Weight-8 codes have been reported at $n = 144$ by Symons et al.~\cite{symons2025covering} ($\code{144,14,14}$, $\FOM = 19.1$, covering graphs) and Liang and Chen~\cite{liang2025selfdual} ($\code{144,6,14}$, $\FOM = 8.2$, self-dual).
Among the most comprehensive systematic searches is Liang et al.~\cite{liang2025generalized}, who enumerate all weight-6 generalized toric codes of a canonical 3-term form $f = 1{+}x{+}x^a y^b$ for $n \leq 400$; at $n = 288$ they report only the $\code{288,12,18}$.
Our $x/y$-swap codes do \emph{not} fall within their canonical form (which requires one monomial to be the identity and a second to be $x$), though a subset may be equivalent under coordinate transformations of the underlying $\ZZ_\ell \times \ZZ_m$ group (e.g., $x \mapsto x^a$ for $a$ coprime to $\ell$, or swapping $x$ and $y$ when $\ell = m$).
Lin and Pryadko~\cite{lin2024abelian} provide a comprehensive enumeration of 2BGA codes (including nonabelian groups) with weight~$\leq 8$ for $n \leq 100$; this database is complementary to ours but does not include the block lengths considered here.

To our knowledge, two of our 99 CSS equivalence classes match previously reported codes: the gross code $\code{144,12,12}$ and the $\code{360,12,{\leq}24}$ (both~\cite{bravyi2024high}; Campaign~4 rediscovered these through mixed-monomial representations).
The remaining 97 classes do not appear in prior work, including all codes at non-standard factorizations---$(18,8)$, $(16,9)$, $(24,6)$ for $n = 288$; $(15,12)$ for $n = 360$---and all high-rate codes with $k \geq 16$ at $n = 288$.

We note that unreported codes may exist in unpublished search databases (e.g., from industrial groups); our novelty claim is based on a review of 25+ papers and 6 public code repositories through March~2026, and we have not had access to unpublished search data.

\subsection{Weight-8 comparison}
\label{sm:weight8}

Weight-8 codes have eight nonzero entries per row of $H_X$ and $H_Z$, compared to six for weight-6 codes.
Under standard depth-optimal syndrome-extraction circuits, this corresponds to eight controlled-NOT (CNOT) layers per syndrome-extraction cycle versus six (a 33\% increase in circuit depth per cycle), placing weight-8 codes in a different hardware-complexity regime.
At $n = 144$, the weight-8 $\code{144,14,14}$~\cite{symons2025covering} achieves $\FOM = 19.1$, $1.6\times$ our highest weight-6 $\FOM = 12.0$.
Whether the FOM advantage compensates for the increased per-cycle depth depends on the noise model and circuit schedule; a full circuit-level comparison is left to future work.

\section{Campaign details}
\label{sm:campaigns}

\subsection{Evolution dynamics}

\paragraph{Campaign~1 (Gemini~3 Flash Preview, 100 iterations, ${\sim}$US\$15 LLM inference).}
The population of 100 programs across 3 islands reached 79 of 100 occupied MAP-Elites niches at convergence.
The highest-scoring program (combined score 327.0) was found at iteration~83, at generation depth~5, growing from 4~strategies and 230~lines (seed) to 7~strategies and 292~lines with the addition of a univariate strategy.
Nine codes were subjected to full verification.

\paragraph{Campaign~2 (3-model ensemble, 251 iterations, ${\sim}$US\$25 LLM inference).}
The seed solution incorporated five Campaign~1 codes into its known-code table.
Nine improvements were recorded over the seed, with the largest gain ($+39.3$ points) at iteration~3.
Highest combined score: 348.6 ($+6.6\%$ over Campaign~1).
Manually stopped after 251 iterations.

\paragraph{Campaign~3 (3-model ensemble, 500 iterations, ${\sim}$US\$50 LLM inference).}
Population 1{,}000 across 5 islands; highest-scoring program (score 354.4) at iteration~462.
The last ${\sim}275$ iterations produced only one marginal improvement ($+1.2\%$), indicating saturation.
145 unique codes not found in Campaign~1.

\paragraph{Campaign~4 (Ansatz, 300 iterations, ${\sim}$US\$47 LLM inference).}
24 parallel evaluation workers on a 64-core server.
All 1{,}188 codes passing evolutionary fitness filters were MILP-verified; top~39 re-verified with extended budgets (3{,}000\,s per logical, 72{,}000\,s total); top~6 subjected to 300{,}000-trial BP-OSD attacks.
45 distinct codes from structural families not present in~\cite{bravyi2024high}.
Campaign~4 independently rediscovered the gross code through six mixed-monomial representations.

\paragraph{Campaign~5 (PBB, 500 iterations, ${\sim}$US\$100 LLM inference).}
352 productive evaluations of 500 total iterations; 18{,}588 candidate 4-tuples evaluated across 7 lattices.
149 PBB catalog entries deep-verified by MILP (timeouts up to 14{,}400\,s per logical, up to a 60-worker pool); 63 exact outcomes, 33 downward distance corrections identified.
The lattice selection was constrained by MILP runtime: the symplectic formulation is ${\sim}4\times$ slower per logical than CSS, limiting practical verification to the $m \leq 6$ lattice set and excluding the larger-$m$ CSS lattices $(12,12)$ and $(15,12)$.

\paragraph{Total cost.}
The five evolutionary campaigns sum to ${\sim}$US\$237 in LLM inference (per-campaign breakdown above) over ${\sim}140$\,h of wall-clock time (sum of the campaign rows in main text Table~I: $5 + 9.5 + 21 + 92 + 11 = 138.5$\,h); this budget includes the in-loop MILP verification performed by each campaign's evaluator, of which Campaign~4's 92\,h itself absorbs ${\sim}20$\,h ($72{,}000$\,s) of top-code re-verification at $3{,}000$\,s per logical.
The headline ${\sim}$US\$400 figure cited in the abstract additionally covers the ablation arms of main text Sec.~VI\,F (random search at $10^3$ and $10^4$/lattice plus the GA/GA-G arms over 5~seeds each, ${\sim}$US\$70 combined) and exploratory/failed runs (re-tries on flaky API responses, abandoned config sweeps, ${\sim}$US\$90); see the caption of main text Table~I for the same itemization.
The Campaign~5 deep MILP audit on 149 PBB catalog entries (timeouts up to $14{,}400$\,s per logical, up to a 60-worker pool on a 64-core server) is a post-evolution verification step run on a separate machine and is reported \emph{on top of} the ${\sim}140$\,h evolution wall-clock figure rather than within it.
Across all components, MILP verification---not LLM inference---is the dominant compute cost.

\subsection{Fitness trajectory}

Figure~\ref{fig:sm_trajectory} shows the fitness trajectories of Campaigns~1--3 (trinomial search).
The $y$-axis shows the combined score: the sum of highest credible $\FOM = kd^2/n$ values across all evaluated lattices, with trust filtering on $d/\sqrt{n}$ (main text Sec.~V\,D).
The combined score uses BP-OSD distance estimates that are severe overestimates for high-$k$ codes; consequently, the fitness trajectory primarily reflects $k$-optimization rather than true FOM improvement.
The deterministic $k$-only metric $\Sigma_k$ (main text Sec.~VI\,F) provides a robust assessment of search effectiveness.

\begin{figure*}[t]
\centering
\includegraphics[width=\textwidth]{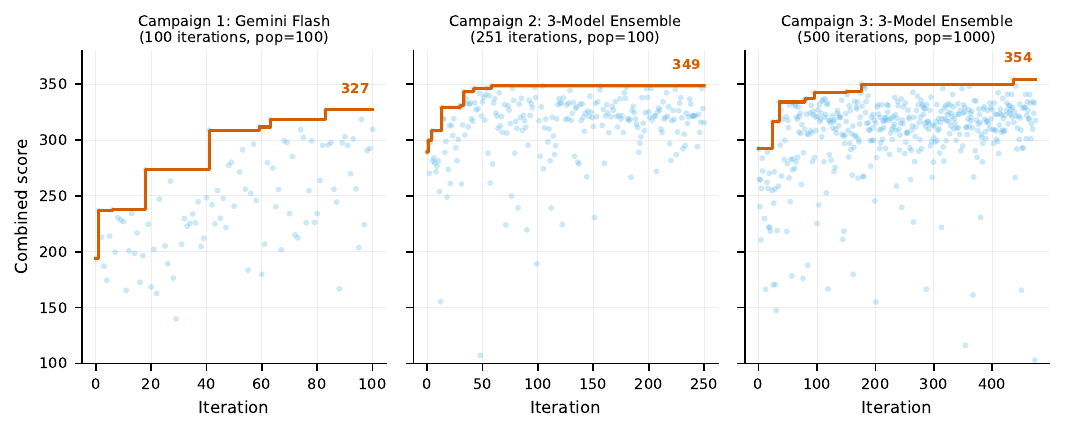}
\caption{Fitness trajectories of Campaigns~1--3.
\emph{Left:} Campaign~1 (Gemini~3 Flash, 100 iterations).
\emph{Center:} Campaign~2 (3-model ensemble, 251 iterations, early-stopped).
\emph{Right:} Campaign~3 (3-model ensemble, 500 iterations, population 1{,}000).
Blue dots: per-iteration combined scores; red step line: running maximum.
All campaigns start from the same seed (score~194).
Campaign~3 saturates after ${\sim}225$ iterations.
Campaign~4 used a different evaluation pipeline and is not shown.}
\label{fig:sm_trajectory}
\end{figure*}

\subsection{Cross-campaign analysis}

The ensemble campaigns (2 and 3) collectively produced 145 unique verified codes not found in Campaign~1, including the highest-$k$ trinomial codes: the $\code{360,40,2}$ ($k = 40$), the $\code{288,32,4}$ ($k = 32$), and the $A = B$ $\code{144,32,2}$ ($k = 32$; $d = 2$ by main text Appendix~D, Theorem~1).
Campaign~4's mixed-monomial generalization later surpasses these encoding dimensions, reaching $k = 54$ via factored-product structures.
Campaign~4 added 45 distinct codes from structural families not present in prior catalogs, with only one polynomial-pair overlap with the Campaign~1--3 catalog: the $\code{288,24,12}$ at $(12,12)$.

Campaigns~2--3 differed from Campaign~1 in three dimensions simultaneously: mutation model, iteration count, and population size.
Campaign~2 (ensemble, pop=100, 251 iterations) achieved $\Sigma_k = 464$, \emph{below} Campaign~1's $704$ despite more iterations and model diversity---suggesting that ensemble mutations may produce less focused variation at small population sizes.
Only Campaign~3 (pop=1{,}000, 500 iterations) matched Campaign~1's $\Sigma_k$ and yielded additional codes.
The most parsimonious explanation is that population size and iteration count are the primary drivers; a controlled experiment varying only model configuration would be needed to isolate any ensemble contribution.

\section{Ablation methodology}
\label{sm:ablation}

This section supplements the ablation study presented in the main text (Sec.~VI\,F and Appendix~B) with methodological details.

\paragraph{Arms.}
Eight arms evaluated on the same 8 Stage-2 lattices:
(1) hand-crafted seed generator;
(2) uniform random trinomials at $10^3$/lattice (single seed);
(3) uniform random at $10^4$/lattice (5 independent seeds, 80{,}000 total per seed);
(4) genetic algorithm (GA) with tournament selection (size~5), single-point polynomial crossover, exponent mutation, 10\% elitism, population~200, $10^4$ evaluations per lattice (5~seeds);
(5) GA operating on the same ansatz representation as the LLM (``GA-G''), with abstract syntax tree (AST)-level mutations on integer literals, loop bounds, and strategy blocks, tournament selection (size~5), strategy-block crossover, population~100, 30~generations, ${\sim}2{,}800$ evaluations per seed (5~seeds, ${\sim}14{,}000$ total);
and (6--8) the highest-fitness evolved ansätze from Campaigns~1--3.

\paragraph{GA-G mutation operators.}
The ansatz GA uses six AST-level operators:
(i) changing integer literals ($\pm 1$, $\pm 2$, or random);
(ii) modifying loop bounds;
(iii) duplicating a strategy block with perturbed constants;
(iv) removing a strategy block;
(v) splicing a new strategy from a template library ($x/y$-swap, perturbation, univariate);
(vi) swapping block order.
Crucially, GA-G \emph{cannot} create entirely new strategy branches: it shuffles and parameterizes existing code structure, whereas the LLM can invent qualitatively new patterns from scratch.

\paragraph{Degenerate code verification.}
Exhaustive verification of the exponent-tuple GA across all 5~seeds and 8~lattices (9{,}832 unique codes) confirms that at every lattice where Campaign~1 found codes, every GA code exceeding the per-lattice $\max\{k\}$ has $d \leq 2$ via explicit weight-2 logical operators ($k/n = 2/3$ in all 11~cases).
The per-lattice breakdown: 6~codes at $(12,6)$, 1~at $(6,12)$, 1~at $(12,12)$, and 3~at $(24,6)$.
One of these 11 has $A = B$, for which $d = 2$ is proven; the others share common polynomial factors or other algebraic degeneracies producing $k/n = 2/3$.

At $(16,9)$ and $(18,8)$---where all evolved ansätze fail---the exponent-tuple GA finds genuine $d \geq 3$ codes.
MILP verification of 10 codes with the highest $k$ ($k = 32$--$64$) at these lattices yields $d = 4$--$6$ (all proven exact) and $\FOM \leq 4.0$, a factor of $3$ below the LLM's $\FOM = 12.0$ and below every Bravyi et al.\ baseline.

GA-G reaches $k = n/2$ at six of eight lattices in every seed and at all eight across the union of seeds, producing codes with $d \leq 2$.
Its low variance ($\pm 22$ vs.\ $\pm 142$ for the exponent-tuple GA) reflects rapid convergence to $k = n/2$ attractors within ${\sim}10$ generations.

\section{MILP formulation details}
\label{sm:milp}

\subsection{CSS distance formulation}

Following Bravyi et al.~\cite{bravyi2024high}, we formulate the problem of finding a minimum-weight $X$-type logical operator for the $j$-th logical qubit as:
\begin{align}
\min \quad & \sum_{i=1}^{n} x_i \label{eq:milp_obj}\\
\text{s.t.} \quad & H_Z \mathbf{x} \equiv \mathbf{0} \pmod{2}, \label{eq:milp_commute}\\
& \langle \mathbf{x}, \bar{Z}_j \rangle \equiv 1 \pmod{2}, \label{eq:milp_anticommute}\\
& x_i \in \{0,1\}, \quad i = 1, \ldots, n, \label{eq:milp_binary}
\end{align}
where $\bar{Z}_j$ is the $j$-th independent $Z$-logical operator.
The mod-2 constraints~\eqref{eq:milp_commute}--\eqref{eq:milp_anticommute} are linearized using integer slack variables: for each row constraint $\sum_i a_i x_i \equiv b \pmod{2}$, we introduce $s \in \ZZ_{\geq 0}$ and replace with $\sum_i a_i x_i - 2s = b$.
The $Z$-type distance is computed analogously by swapping the roles of $H_X$ and $H_Z$; the code distance is $d = \min(d_X, d_Z)$.

We use HiGHS~\cite{huangfu2018highs} via \texttt{scipy.optimize.milp}~\cite{virtanen2020scipy}.
A solution is \emph{exact} when HiGHS reports MIP gap~$= 0$ (proven optimal); otherwise the incumbent solution provides a valid upper bound.
For BB codes, $d_X = d_Z$ because the polynomial involution $x \mapsto x^{-1},\, y \mapsto y^{-1}$ exchanges the roles of $H_X$ and $H_Z$; we compute both as a consistency check.

\subsection{Non-CSS symplectic formulation}

A Pauli operator on $n$ qubits has symplectic representation $(\mathbf{a} \mid \mathbf{b}) \in \FF_2^{2n}$, where $a_i, b_i \in \{0,1\}$ encode the X- and Z-content on qubit~$i$.
We write $\bar{L}_j \in \FF_2^{2n}$ for a representative of the $j$-th independent logical Pauli operator (obtained from \texttt{qldpc}'s symplectic Gaussian elimination), and $\langle \cdot, \cdot \rangle_{\text{symp}}$ for the symplectic inner product.
For non-CSS codes, minimum-weight logical operators are measured in \emph{symplectic} weight: $P = X^{\mathbf{a}} Z^{\mathbf{b}}$ has symplectic weight $|\{i : a_i \neq 0 \text{ or } b_i \neq 0\}|$.
The MILP formulation minimizes this weight:
\begin{align}
\min \quad & \sum_{i=1}^{n} w_i \\
\text{s.t.} \quad & H \cdot (\mathbf{a} \mid \mathbf{b})^\top \equiv \mathbf{0} \pmod{2}, \\
& \langle (\mathbf{a} \mid \mathbf{b}), \bar{L}_j \rangle_{\text{symp}} \equiv 1 \pmod{2}, \\
& w_i \geq a_i, \quad w_i \geq b_i, \quad w_i \leq a_i + b_i, \\
& a_i, b_i, w_i \in \{0,1\},
\end{align}
where $w_i = a_i \lor b_i$ is enforced by the standard linear encoding of the binary OR constraint $w_i \geq a_i$, $w_i \geq b_i$, $w_i \leq a_i + b_i$ (the convex hull of the four feasible $(a_i, b_i, w_i) \in \{0,1\}^3$ triples; tight at integer points), and $H \in \FF_2^{m \times 2n}$ is the symplectic-flipped stabilizer matrix: each row encodes a stabilizer $(s_X \mid s_Z)$ in the order $(s_Z \mid s_X)$ so that the ordinary matrix-vector product $H \cdot (\mathbf{a} \mid \mathbf{b})^\top$ computes the column of symplectic inner products and $H \cdot (\mathbf{a} \mid \mathbf{b})^\top \equiv \mathbf{0}$ enforces pairwise commutation.
We use the X-first convention for the symplectic vector $(\mathbf{a} \mid \mathbf{b}) \in \FF_2^{2n}$ (so $a_i$ is the X-content and $b_i$ the Z-content on qubit~$i$); the row-flip $(s_Z \mid s_X)$ is dictated by this choice, since the symplectic inner product of $(s_X \mid s_Z)$ with $(\mathbf{a} \mid \mathbf{b})$ is $s_X \!\cdot \mathbf{b} + s_Z \!\cdot \mathbf{a}$, which equals the dot product of the flipped row $(s_Z \mid s_X)$ with $(\mathbf{a} \mid \mathbf{b})$. Some references adopt the opposite (Z-first) convention; the row ordering of $H$ flips accordingly. For PBB codes specifically, this commutativity reduces to the explicit condition $A C^\top + B D^\top$ symmetric over $\FF_2$ (main text Sec.~III\,B; \texttt{evaluation/pbb\_code.py}).
At fixed code parameters $n$, the symplectic formulation uses $3n$ binary variables (one each for $a_i$, $b_i$, $w_i$) versus $n$ for the CSS X-distance MILP, and we measure ${\sim}4\times$ longer per-logical solve time on matched codes.

\subsection{Validation and optimality audits}

The MILP formulation was validated on the Bravyi et al.\ baselines $\code{72,12,6}$ and $\code{144,12,12}$~\cite{bravyi2024high}, both confirmed exactly with HiGHS proving optimality (MIP gap~$= 0$) for every logical operator.
MILP solves in seconds for low-$d$ codes and minutes for $d = 12$ codes.

For codes with $d = 2$ (including all $A = B$ codes), the solver finds weight-2 feasible solutions in under one second; combined with the lower bound $d \geq 2$ from the trinomial column-weight structure, this establishes $d = 2$ exactly.
For $d \leq 4$ univariate codes, all MILP instances achieve proven optimality within seconds.

For the strongest distance results in the CSS catalog ($d = 12$ and $d = 14$), we conducted a dedicated optimality audit with 300\,s per-logical timeouts:
the gross code $\code{144,12,12}$ (24 logicals, 13\,min total),
the $\code{288,24,12}$ (48 logicals, 29\,min),
and the $\code{288,16,12}$ (32 logicals, 80\,min) all achieved proven optimality (MIP gap~$= 0$) for every logical operator; these distances are therefore exact.
For the $\code{360,16,{\leq}14}$ ($n = 360$), the solver found weight-14 solutions for all 32 logicals and proved optimality for 7; the remaining 25 returned incumbents at weight~14 or~16 within the 300\,s timeout.
Since the minimum weight across all logicals is~14, $d \leq 14$ is established; the 7 logicals with proven weight-14 optima confirm that no lower distance is missed on those operators, and no logical returned a weight below~14.
The full per-logical audit data are included in the public repository.

For Campaign~4, all 1{,}188 codes passing evolutionary fitness filters were MILP-verified with 120--600\,s per-logical timeouts.
The top~39 codes were re-verified with extended budgets (3{,}000\,s per logical, 72{,}000\,s total).
For Campaign~5 non-CSS codes, 149 PBB catalog entries underwent deep MILP verification (timeouts up to 14{,}400\,s per logical, up to a 60-worker pool on a 64-core server), producing 63 exact outcomes and identifying 33 downward distance corrections.

\bibliography{references}